\newtheorem*{theorem*}{Theorem}
\newtheorem{theorem}{Theorem}[section]
\newtheorem{lemma}[theorem]{Lemma}
\newtheorem{proposition}[theorem]{Proposition}
\newtheorem{corollary}[theorem]{Corollary}
\newtheorem{remark}[theorem]{Remark}
\newtheorem*{THMA}{Theorem A}
\newtheorem*{THMB}{Theorem B}
\newtheorem*{THMC}{Theorem C}
\newtheorem*{THMD}{Theorem D}
\newtheorem*{expansion}{Criterion for Expansion}
\newtheorem*{leeyang}{Lee--Yang Theorem}
\newtheorem*{Youngfmla}{Ledrappier--Young Formula}
\newtheorem*{Shubthm}{Shub's Theorem}
\newtheorem*{SSthm}{Shub--Sullivan Theorem}
\newtheorem*{uniquenessequilib}{Uniqueness of Equilibrium States}
\newtheorem*{Ruelleineq}{Ruelle's Inequality}
\newtheorem*{specialergodictheorem}{Special Ergodic Theorem}
\newcommand{\MEAS}{\rho}
\newcommand{\bzt}{B_{z,t}}
\newcommand{\phit}{(\phi,t)}
\newcommand{\T}{\mathbb{T}}
\newcommand{\D}{\mathbb{D}}
\newcommand{\CMD}{\hat{\mathbb{C}}\setminus\overline{\mathbb{D}}}
\newcommand{\C}{\hat{\mathbb{C}}}
\newcommand{\pzt}{\mathcal{P}_{z,t}}
\newcommand{\WD}{w_{\D}}
\newcommand{\WCMD}{w_{\CMD}}
\newcommand{\bphittilde}{\widetilde{B}_{\phi,t}}
\newcommand{\btilde}{\widetilde{B}}
\begin{document}

\title{Limiting Measure of Lee--Yang Zeros for the Cayley Tree}

\begin{author}[I. Chio]{Ivan Chio}
\email{ichio@iupui.edu}
\address{ %
IUPUI Department of Mathematical Sciences\\
LD Building, Room 255\\
402 North Blackford Street\\
Indianapolis, Indiana 46202-3267\\
 United States }
\end{author}

\begin{author}[C. He]{Caleb He}
\email{calebhe@college.harvard.edu}
\address{Harvard University, Cambridge, MA. USA.}
\end{author}

\begin{author}[A. L. Ji]{Anthony L. Ji}
\email{anthony.ji@yale.edu}
\address{Yale University, New Haven, CT. USA.}
\end{author}

\begin{author}[R. K. W. Roeder]{Roland K. W. Roeder}
\email{roederr@iupui.edu}
\address{ %
IUPUI Department of Mathematical Sciences\\
LD Building, Room 224Q\\
402 North Blackford Street\\
Indianapolis, Indiana 46202-3267\\
 United States }
\end{author}

\date{\today}

\begin{abstract}
This paper is devoted to an in-depth study of the limiting measure of
Lee--Yang zeroes for the Ising Model on the Cayley Tree.  We build on previous
works of M\"uller-Hartmann--Zittartz (1974 and 1977), Barata--Marchetti (1997), and
Barata--Goldbaum (2001), to determine the support of the limiting measure, prove that the limiting measure is
not absolutely continuous with respect to Lebesgue measure, and determine the
pointwise dimension of the measure at Lebesgue a.e.\ point on
the unit circle and every temperature.  The latter is related to the critical exponents for the 
phase transitions in the model as one crosses the unit circle at Lebesgue
a.e.\ point, providing a global version of the ``phase transition of continuous
order'' discovered by M\"uller-Hartmann--Zittartz.  The key techniques are from
dynamical systems because there is an explicit formula for the Lee--Yang zeros
of the finite Cayley Tree of level $n$ in terms of the $n$-th iterate of an
expanding Blaschke Product.  A subtlety arises because the conjugacies between Blaschke Products at different parameter values
are not absolutely continuous.
\end{abstract}

\maketitle


\section{Introduction}

We study the the limiting measure of Lee--Yang zeros for the infinite
Cayley tree, a finite approximation of which is shown in Figure \ref{fig:cayleytree} below.  

\begin{figure}[h]
\begin{center}
\includegraphics[scale=0.4]{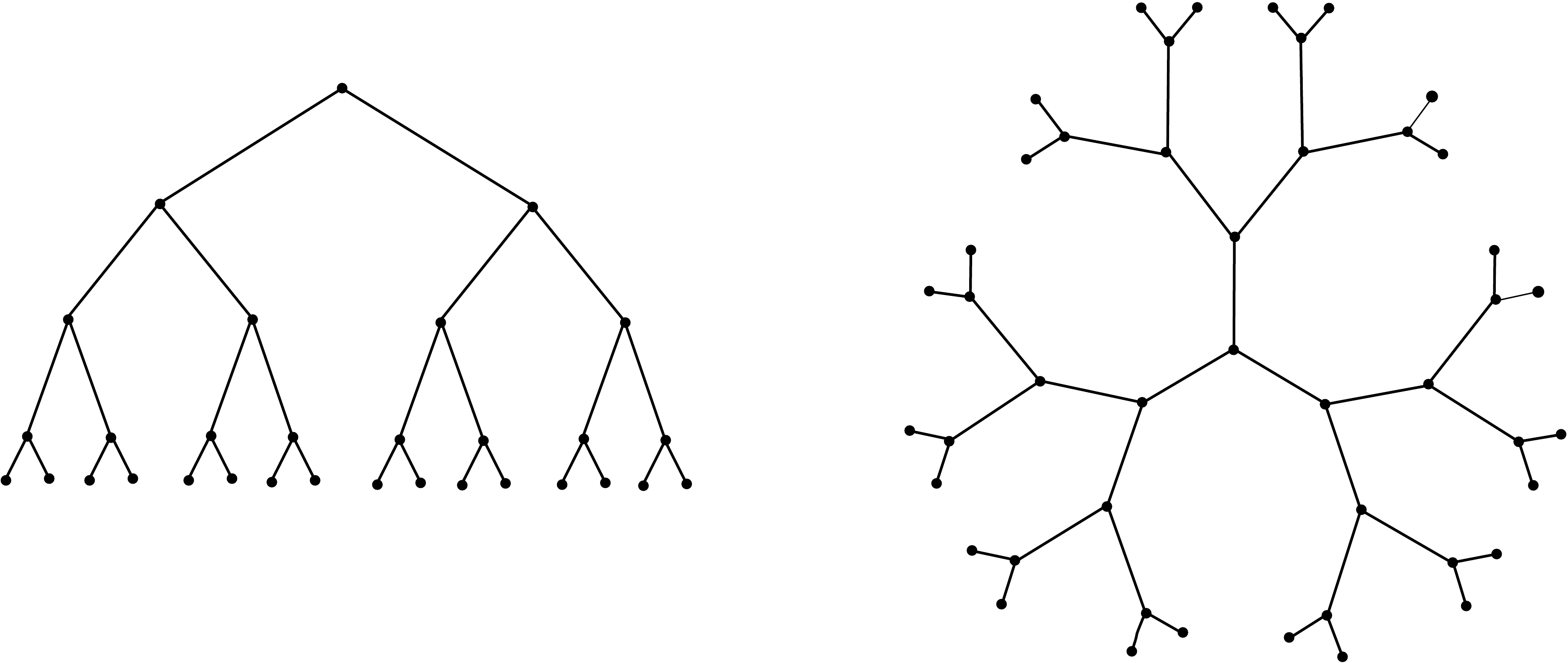}
\caption{Four levels of the Cayley tree with branching number $k = 2$.  The rooted version is shown on the left and the unrooted (full) version on the right.\label{fig:cayleytree}}
\end{center}
\end{figure}

\noindent
Consideration of the Lee--Yang zeros for the Ising Model on the Cayley Tree
dates back to works of M\"uller-Hartmann and Zittartz
\cite{MULLERZITTARZ,MULLERHARTMAN}, Barata--Marchetti \cite{BARATAMARCHETTI},
Barata--Goldbaum \cite{BARATAGOLDBAUM}, and others.  The hierarchical
structure of the Cayley Tree results in the following renormalization procedure
for studying the Lee--Yang zeros, which played a key role in each of the
aforementioned papers:

\begin{proposition}\label{PROP:RENORM}  For any $k \geq 2$, any $t \in [0,1)$ and any $z \in \mathbb{T}:=\{z \in  \mathbb{C} \, : \, |z| = 1\}$ consider the following Blaschke Product:
\begin{align}
B_{z,t,k}(w) := z\left(\frac{w+t}{1+wt}\right)^k.
\end{align}
The Lee--Yang zeros for the $n$-th rooted Cayley Tree with branching number $k\geq2$ are solutions $z$ to
  \begin{equation}
        B_{z,t,k}^{n}(z) = -1,
  \end{equation}
and the Lee--Yang zeros for the $n$-th full Cayley Tree with branching number $k\geq2$ are solutions $z$ to
  \begin{equation}
        B_{z,t,k+1} \circ B_{z,t,k}^{n-1}(z) = -1.
  \end{equation}
\label{lyz}
\end{proposition}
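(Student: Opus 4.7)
The plan is to use the standard transfer-matrix recursion for the Ising partition function on the Cayley tree, expressed in terms of the ratio of plus-root and minus-root partition functions. First I would parametrize the model so that $z = e^{2h}$ is the variable in which Lee--Yang zeros are taken (with $h$ the external field, so that $z\in\mathbb{T}$ corresponds to purely imaginary field) and $t = e^{-2\beta}$ is the temperature variable (so that $t\in[0,1)$ corresponds to $\beta>0$, the ferromagnetic regime).

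For the $n$-th rooted Cayley tree with branching $k$, let $Z_n^+(z,t)$ and $Z_n^-(z,t)$ denote the partition functions with the root spin conditioned to $+1$ and $-1$ respectively, and set $u_n := Z_n^+/Z_n^-$. The recursive structure of the tree---each level-$(n{+}1)$ rooted tree is obtained by joining a new root, with its field factor, to $k$ independent level-$n$ rooted subtrees through edges carrying Boltzmann weights $e^{\pm\beta}$---yields after a short algebraic manipulation
\[
  u_{n+1} \;=\; z\left(\frac{u_n+t}{1+t\,u_n}\right)^{k} \;=\; B_{z,t,k}(u_n),
\]
with initial condition $u_0 = z$ (the level-$0$ tree is a single vertex, so $Z_0^{\pm}=e^{\pm h}$ and $u_0=e^{2h}=z$). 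Iterating then gives $u_n = B_{z,t,k}^{n}(z)$.

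The Lee--Yang zeros are, by definition, the zeros in $z$ of the full partition function $Z_n = Z_n^+ + Z_n^- = Z_n^-(u_n + 1)$. A brief check that $Z_n^+$ and $Z_n^-$ share no common zeros on the relevant locus (they are polynomials in $z$ whose structure is controlled by the Lee--Yang circle theorem, so cancellation to zero on $\mathbb{T}$ does not occur) shows that dividing by $Z_n^-$ is legitimate, whence the Lee--Yang zeros are precisely the solutions of $B_{z,t,k}^{n}(z) = -1$. For the $n$-th full tree, the center vertex has $k+1$ neighbors and thus carries $k+1$ rooted subtrees of level $n-1$; the identical bookkeeping produces $Z_n^+/Z_n^- = B_{z,t,k+1}(u_{n-1}) = B_{z,t,k+1}\circ B_{z,t,k}^{n-1}(z)$, and equating this ratio to $-1$ gives the claimed identity.

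The main obstacle I anticipate is not the algebra itself but pinning down the conventions unambiguously: exactly what is called a ``level-$n$ rooted tree with branching $k$'' (whether the root has $k$ or $k+1$ descendants, and how level $0$ is defined), and how the $k+1$ in the full-tree formula arises from the center having one more neighbor than internal non-root vertices have children. It is also worth verifying carefully that the Möbius factor emerges in the exact form $(w+t)/(1+tw)$ rather than a differently-normalized equivalent, since only this normalization makes $B_{z,t,k}$ a genuine Blaschke product preserving $\mathbb{T}$. Once the conventions are fixed to match the figure and the preceding discussion, the recursion is a one-line algebraic identity and the Lee--Yang identification is immediate by induction.
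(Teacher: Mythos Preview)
Your approach is essentially identical to the paper's: define conditional partition functions $Z_n^{\pm}$ with the root spin fixed, form their ratio, derive the one-step recursion from the self-similar structure of the tree, identify the initial condition as $z$, and read off the Lee--Yang zeros as the solutions of (ratio) $=-1$; the full tree is handled by the same device with $k+1$ subtrees attached at the center. One small point worth aligning: the paper fixes the convention $z=e^{-2h/T}$ and then takes the ratio $w_n=Z_n^{-}/Z_n^{+}$ (minus over plus), which gives $w_0=z$ and $w_{n+1}=B_{z,t,k}(w_n)$ directly; with your choice $u_n=Z_n^{+}/Z_n^{-}$ and the paper's sign convention one instead gets $u_0=z^{-1}$ and a recursion with $z^{-1}$ in front, so to match the stated formula you should either flip the ratio or (as you implicitly did) flip the sign of $h$ in the definition of $z$.
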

\noindent
Here, $z:={\rm exp}(-2h/T)$ and $t:={\rm exp}(-2J/T)$, where $h$ is the externally applied magnetic field, 
$T > 0$ is the temperature, and $J > 0$ is the coupling constant between neighboring atoms.
The superscript ``$n$'' denotes iteration of the function $n$ times.  When the exponent $k$ is clear from the context we will drop it
from the notation, writing $B_{z,t,k} \equiv B_{z,t}$.

Remark that many classical treatments of the Ising Model on the Cayley Tree
consider only the thermodynamical properties associated to vertices ``deep'' in
the lattice {\rm (}e.g. the root vertex{\rm )}; see {\rm \cite[Ch. 4]{BAXTER}}
and the references therein.  The term ``Bethe Lattice'' is customarily used to
describe such considerations.  Instead, we treat all vertices equally, studying
the ``bulk'' behavior of the lattice, and thus we follow the standard
convention of referring to our work as being on the Cayley Tree.

Before stating our results, we will give a brief background on Lee--Yang zeros,
including a description of what many people believe should hold for the
classical $\mathbb{Z}^d$ lattice (where $d \geq 2$), as well as a description of the previous
results of  M\"uller-Hartmann and Zittartz, Barata and Marchetti, and Barata and
Goldbaum.  The reader who already knows this background can skip ahead to Section~\ref{SEC:MAIN_RESULTS}.

Proposition \ref{PROP:RENORM} allows us to use powerful techniques from
dynamical systems to prove results about the Lee--Yang zeros for the Cayley
Tree, whose analogs are completely unknown for classical lattices like
$\mathbb{Z}^d$.  Therefore, our work lies at the boundary between dynamical
systems and statistical physics.  For this reason, we have attempted to provide
considerable background in both areas.

\subsection{Lee--Yang Zeros}
The Ising Model describes magnetic materials.  The matter at a certain scale
is described using a graph $\Gamma = (V,E)$ with vertex set $V$ and edge set $E$.
Here, $V$ represents atoms and $E$ represents the magnetic bonds between them.
Assign a spin to each vertex using a spin configuration
$\sigma:V\rightarrow\{\pm1\}$. The total energy of the configuration $\sigma$
is given as
\begin{equation}\label{EQN:HAMILTONIAN}
H(\sigma)=-J\cdot\sum_{\{v,w\}\in E}\sigma(v)\sigma(w)-h\cdot \sum_{v\in V}\sigma(v)\mbox{,}
\end{equation}
where $J>0$ is the coupling constant that describes the interaction between neighboring spins, and $h$ is the externally applied magnetic field.

The Boltzmann-Gibbs Principle gives that the probability $P(\sigma)$ of a
configuration $\sigma$ is proportional\footnote{We set the Boltzmann
constant $k_B=1$.} to $W(\sigma):=\exp(-H(\sigma)/T)$ for temperature $T>0$.
Explicitly, $P(\sigma)=W(\sigma)/Z\mbox{,}$ where $Z$ is the normalizing factor
defined as
\begin{align*}
Z \equiv Z(J,h,T):=\sum_\sigma{W(\sigma)},
\end{align*} which is summed over all
possible spin configurations~$\sigma$.  (Remark that we will always impose free boundary conditions on $\Gamma$.) This normalizing factor $Z$
is known as the \textit{partition function}.  It
is a fundamental quantity to study in statistical mechanics and most aggregate
thermodynamic quantities of a physical system can be derived from it. 

It is useful to make a change of variables to $z = \exp(-2h/T)$, which represents the magnetic field variable, and $t= \exp(-2J/T)$, which represents the temperature variable.
In these new variables, $Z(z,t)$ becomes a polynomial, if we multiply by $\sqrt{z}^{|V|} \sqrt{t}^{|E|}$ to clear the denominators.
For fixed $t\in [0,1]$, the behavior of $Z(z,t)$ can be fully understood by studying its complex zeros in the variable $z$. In 1952, T. D. Lee and C. N. Yang \cite{LEEYANG1} characterized these zeros, now known as {\em Lee--Yang zeros}, in their famous theorem.
\begin{leeyang}
For $t\in[0,1]$, the complex zeros in $z$ of the partition function $Z(z,t)$ for the Ising model on any graph lie on the unit circle $\mathbb{T}=\{|z|=1\}$.
\label{thm:ly}
\end{leeyang}

\noindent
Because of the Lee--Yang Theorem, 
throughout the paper we will refer to $z$ and $\phi:=\text{Arg}(z)$ interchangeably.

\subsection{Limiting Measure $\mu_t$ of Lee--Yang Zeros}
One typically describes a magnetic material at different scales using a sequence
of connected graphs $\Gamma_n = (V_n,E_n)$, each thought of as a finer approximation of
the material than the previous.  Let us call such a sequence of graphs a
``lattice''.  The standard example is the $\mathbb{Z}^d$ lattice where, for
each $n \geq 0$, one defines $\Gamma_n$ to be the graph whose vertices consist
of the integer points in $[-n,n]^d$ and whose edges connect vertices
at distance one in~$\mathbb{R}^d$.  

The physical properties of the magnetic material are described by limits of
suitably normalized thermodynamical quantities associated to each of the finite
graphs $\Gamma_n$.  Many of these can be
described in terms of the limiting measure of Lee--Yang zeros associated to
the lattice $\{\Gamma_n\}$, which we will now describe.  For each $n \geq 0$ let $Z_n(z,t)$ denote the
partition function associated to $\Gamma_n$ and let
$z_1(t),\ldots,z_{|V_n|}(t)$ denote
the Lee--Yang zeros at temperature $t \in [0,1]$.  For classical lattices
($\mathbb{Z}^d$, etc), it is a consequence of the van-Hove Theorem
\cite{VANHOVE} and the Lee--Yang Theorem that for each $t \in [0,1]$ the
sequence of measures
\begin{align*}
\mu_{t,n} := \frac{1}{|V_n|} \sum_{i=1}^{|V_n|} \delta_{z_i(t)}
\end{align*}
weakly converges to a limiting measure $\mu_t$ that is supported on the unit circle $\mathbb{T}$. 
One has the following expressions for the limiting free energy and magnetization:
\begin{equation}\label{electrostat rep}
    F(z,t)= - 2 T \int_\mathbb{T} \log |z-\zeta|\,  d\mu_t (\zeta)  +T\left(\, \log |z| + \left(\lim_{n\rightarrow \infty}
    \frac{|E_n|}{|V_n|}\right) {\log|t|}\, \right)
             \quad \mbox{for a.e. $z\in \mathbb{C}$},
\end{equation}
\begin{equation}\label{Cauchy rep}
     M(z,t) = -2\frac{\partial F}{\partial h} =   -4 z \int_\T  \frac {d\mu_t (\zeta) }{z-\zeta}+2 \quad \mbox{for $z\in \mathbb{C} \setminus {\rm Supp}(\mu_t)$}.
\end{equation}
See, for example, \cite[Prop. 2.2]{BLR1}.  However, note that a minor adaptation is needed
because in that paper the free energy is normalized by number of edges instead of number of vertices.

\subsection{Conjectural Description of $\mu_t$ for the $\mathbb{Z}^d$ lattice (where $d \geq 2$).}
A famous unsolved problem from statistical physics is to understand the
limiting measures of Lee--Yang zeros $\mu_t$ for the $\mathbb{Z}^d$ lattice and how they depend on $t$.
It is believed that for every $t \in [0,1)$ the measure $\mu_t$ is absolutely continuous with respect to Lebesgue measure $d\phi$ on the circle, and thus has density
$\rho_t(\phi) := \frac{d \mu_t}{d\phi}$.
Let $t_c > 0$ denote the critical temperature\footnote{More precisely, define $t_c$ 
to be the infimum of temperatures for which the spontaneous magnetization is zero.} of the $\mathbb{Z}^d$ Ising model.
It is believed that:
\begin{itemize}
\item[(A)] For $t < t_c$, $\rho_t(\phi)$ is a continuous function of $\phi$ and positive on all of $\mathbb{T}$.

\item[(B)] For $t \geq t_c$, there is an arc $\mathbb{T} \setminus [-\phi_e(t),\phi_e(t)]$, symmetric about $z=-1$, such that $\rho_t(\phi)$ is a continuous function of $\phi$ and positive
on $\mathbb{T} \setminus [-\phi_e(t),\phi_e(t)]$ and zero otherwise.  Moreover, $\phi_e:[t_c,1]\rightarrow[0,\pi]$ is a continuous function with $\phi_e(t_c)=0$, $\phi_e(t)>0$ for $t>t_c$, and $\phi_e(1)=\pi$.
\end{itemize}
In fact, for sufficiently small $t > 0$, it has been proved by Biskup, Borgs,
Chayes, Kleinwaks, and Koteck\'y \cite{BBCKK} that the limiting measure of
Lee--Yang zeros for the $\mathbb{Z}^d$ lattice is absolutely continuous and
even has $C^2$ density $\rho_t(\phi)$.  Meanwhile, at high temperatures,
quantum field theory gives a prediction of the universal exponents of the
densities $\rho_t$ near the end-points of $\mathbb{T} \setminus
[-\phi_e(t),\phi_e(t)]$, see Fisher \cite{Fis1} and Cardy \cite{Car}. For
example, for $d=2$ the exponent is $(-1/6)$, while for $d>6$ it is $1/2$.
A more detailed discussion of this conjectural behavior for the limiting
measures of Lee--Yang zeros for the $\mathbb{Z}^d$ lattice, including a
discussion of what has been proved, is presented\footnote{Remark that in that
paper, the variable $z=e^{-h/T}$ is used, so the description must be read with
care.} in Section 1 of~\cite{BLR1} and also in the first two sections of
\cite{MBT}.

\subsection{Description of $\mu_t$ for the Diamond Hierarchical Lattice.}
Besides the one-dimensional lattice $\mathbb{Z}^1$ there are very few lattices
for which a global description of the limiting measure of Lee--Yang zeros
has been rigorously proved.  One exception is the Diamond Hierarchical  Lattice
(DHL), which was recently studied in \cite{BLR1}.  Below the critical
temperature of the DHL, the limiting measure of Lee--Yang zeros
matches nicely with the conjectural picture for the $\mathbb{Z}^d$ lattice in
that it is absolutely continuous and even has $C^\infty$ density.  On the other
hand, the sequence of graphs $\Gamma_n$ comprising the DHL has vertices whose
valence tend to infinity, causing the limiting measure of Lee--Yang zeros
$\mu_t$ to have support equal to the entire circle $\mathbb{T}$ for every $t
\in [0,1]$, which is not physical; see \cite{RUELLE_PR}.\\

\subsection{Work of M\"uller-Hartmann--Zittartz, Barata-Marchetti, and Barata-Goldbaum.}
Let $\Gamma_n^k$ denote the $n$-th-level rooted Cayley Tree with
branching number $k$ and $\widehat{\Gamma}_n^{k}$ the unrooted (full)  Cayley Tree of
level $n$ with branching number~$k$.  An illustration for $k=2$ is given in
Figure~\ref{fig:cayleytree}.  We will denote the corresponding lattices by
$\Gamma^k:= \{\Gamma_n^k\}_{n=0}^\infty$ and $\widehat \Gamma^k:= \{\widehat
\Gamma_n^k\}_{n=0}^\infty$.  In Proposition~\ref{PROP:ROOTED_AND_FULL_SAME_MEASURE} we will see
that the limiting measure of Lee--Yang zeros is the same for $\Gamma^k$ and
$\widehat \Gamma^k$ and after that point we will ignore the distinction between
them.

The critical temperature for the Ising Model on the Cayley Tree with branching number $k$ is
\begin{align*}
t_c = \frac{k-1}{k+1}.
\end{align*}
In \cite{MULLERZITTARZ}, M\"uller-Hartmann and Zittartz used the hierarchical structure of the
Cayley Tree to write an explicit expression (see \cite[Eq. 4]{MULLERZITTARZ})
for the limiting free energy. 
They then used this expression to see a curious
type of phase transition: for fixed $0 < t < t_c$ and varying $z \in (0,\infty)$ there exists real-analytic  $F_{\rm reg}(z,t)$                      
so that 
\begin{align}\label{EQN:EXPONENT_FSING}
\lim_{z \rightarrow 1} \frac{\log |F(z,t) - F_{\rm reg}(z,t)|}{\log |1-z|}  = \frac{\log k}{\log \gamma} \qquad \text{where} \qquad \gamma = (B_{1,t,k})'(1) = k \frac{1-t}{1+t}.
\end{align}
Said differently, the singular part of the free energy $F_{\rm sing}(z,t) := F(z,t) - F_{\rm reg}(z,t)$ vanishes with exponent 
\begin{align*}
\kappa(t) := \frac{\log k}{\log \gamma}
\end{align*} 
at $z=1$,
so that $\kappa(t)$ is called the {\em critical exponent} of $F(z,t)$.   The phase transition is called ``continuous order'' because the exponent
$\kappa(t)$ increases continuously from $1$ to $\infty$ as $t$ increases from $0$ to $t_c$.
Meanwhile, for fixed $t_c < t \leq 1$, $F(z,t)$ varies analytically for all $z \in (0,\infty)$.

In \cite{MULLERHARTMAN}, M\"uller-Hartmann provided a different explanation for
this phenomenon by computing the {\em pointwise dimension} $d_{\mu_t}(0)$ of
the Lee--Yang measure $\mu_t$ at $\phi = {\rm Arg}(z) = 0$.  He found\footnote{We have
re-expressed his result in our variables.}:
\begin{align*}
d_{\mu_t}(0) := \lim_{\delta \rightarrow 0}   \frac{\log \mu_t (-\delta,+\delta)}{\log 2\delta} = \kappa(t).
\end{align*}
He then used the electrostatic representation (\ref{electrostat rep}) for
$F(z,t)$ and a clever argument to reprove (\ref{EQN:EXPONENT_FSING}) and
actually obtain further details of the singularity.


A global study of the Lee--Yang zeros for the Cayley Tree is done by Barata and Marchetti \cite{BARATAMARCHETTI}.  While they
allow the coupling constants to be chosen as 0 or $J$, at random, we will simply describe
their results in the deterministic setting.  They proved for the binary Cayley tree $\Gamma_n^2$ that
\begin{enumerate}
\item For $t<t_c = \frac{1}{3}$, the Lee--Yang zeros $\Gamma_n^2$ become dense on $\mathbb{T}$ as $n\rightarrow\infty$.
\item For $t_c \leq t \leq t_\ell \approx 0.46409$, the Lee--Yang zeros of $\Gamma_n^2$ become dense on the arc $\T\setminus[-K(t),K(t)]$ as $n\rightarrow\infty$, where $K:[t_c,t_\ell]\rightarrow[0,\pi]$ is a continuous function such that $K(t_c)=0$, $K(t)>0$ for $t_c < t < t_\ell$, and $K(t_\ell) = \pi$.
\item For $t\geq t_c$, $\Gamma_n^2$ has no Lee--Yang zeros in the arc $[-\phi_e(t),\phi_e(t)]$, where $\phi_e:[t_c,1]\rightarrow[0,\pi]$ is a continuous function such that $\phi_e(t_c)=0$, $\phi_e(t)>0$ for $t>t_c$, and $\phi_e(1)=\pi$.
\end{enumerate}
We refer the reader to Equation \ref{eq:kappa} for the explicit formula of
$\phi_e$ (for arbitrary branching number) and to Figure \ref{img:kvK} for a
plot of the curve formed by $\{(\phi,t) \, : \, \phi = \pm \phi_e(t)\}$.  We refer
the reader to  \cite[Thm. 1.2]{BARATAMARCHETTI} for the explicit formula of
$K$.  (It will not be used in the present paper.) 

\begin{figure}[h!]
{
\centering
\begin{picture}(0,0)%
\includegraphics{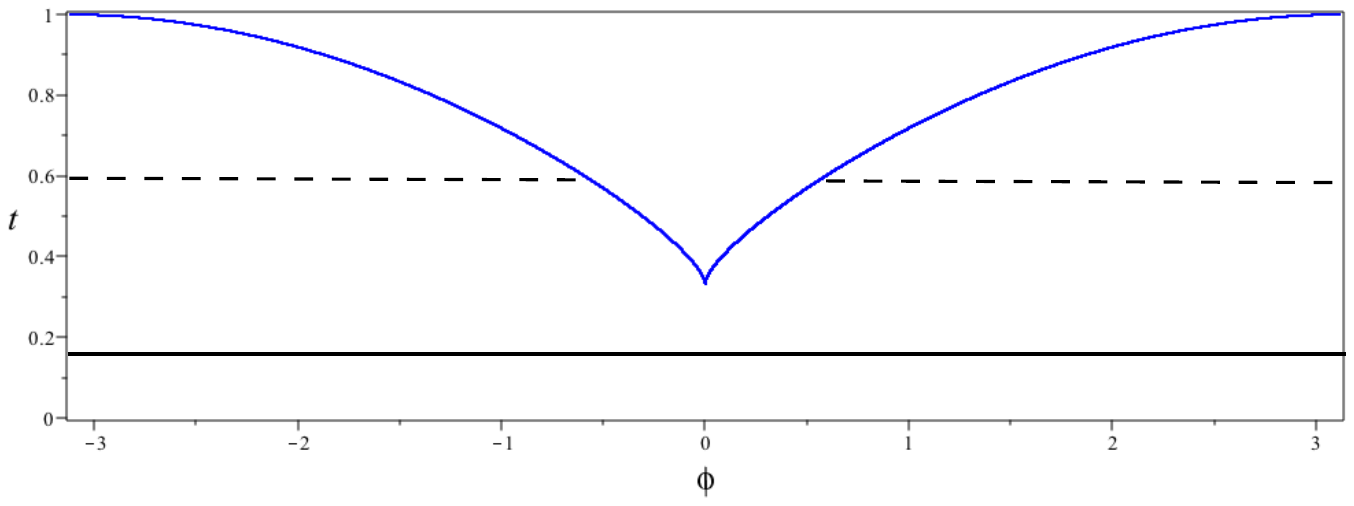}%
\end{picture}%
\setlength{\unitlength}{3947sp}%
\begingroup\makeatletter\ifx\SetFigFont\undefined%
\gdef\SetFigFont#1#2#3#4#5{%
  \reset@font\fontsize{#1}{#2pt}%
  \fontfamily{#3}\fontseries{#4}\fontshape{#5}%
  \selectfont}%
\fi\endgroup%
\begin{picture}(6517,2444)(545,-2805)
\put(5052,-1170){\makebox(0,0)[lb]{\smash{{\SetFigFont{10}{12.0}{\familydefault}{\mddefault}{\updefault}{\color[rgb]{0,0,0}${\rm Supp}(\mu_t)$ for $t_c < t < 1$}%
}}}}
\put(5052,-1972){\makebox(0,0)[lb]{\smash{{\SetFigFont{10}{12.0}{\familydefault}{\mddefault}{\updefault}{\color[rgb]{0,0,0}${\rm Supp}(\mu_t)$ for $0 \leq t \leq t_c$}%
}}}}
\end{picture}%
}
\caption{The curve formed by $\{(\phi,t) \, : \, \phi = \pm \phi_e(t)\}$ for $k=2$.  The support of $\mu_t$ is shown for $0 \leq t \leq t_c$ (solid horizontal line) and for $t_c < t < 1$ (dashed horizontal line).}
\label{img:kvK}
\end{figure}

The work of Barata-Goldbaum \cite{BARATAGOLDBAUM} re-investigates the issues studied by Barata-Marchetti
with the couplings between neighboring vertices chosen periodically and aperiodically.

\subsection{Main Results.}\label{SEC:MAIN_RESULTS}

Each of the following theorems holds for either the rooted Cayley Tree or the
full one.  So, we will not distinguish between them.  Note also that for any
lattice, the limiting measure of Lee--Yang zeros at $t=0$ and $t=1$ are
Lebesgue measure on $\mathbb{T}$ and a Dirac mass at $z=-1$, respectively.
Indeed, for any connected graph $\Gamma = (V,E)$ the Lee--Yang zeros at $t=0$
are the $|V|$-th roots of $-1$ and the Lee--Yang zeros at $t=1$ are all equal to $z=-1$.
For this reason, our results focus on $0 < t < 1$.

In Theorem A we see for the Cayley Tree that $\rm{Supp}(\mu_t)$ matches perfectly with the conjectural picture for the $\mathbb{Z}^d$ lattice:

\begin{THMA}
	Consider the Cayley Tree with branching number $k \geq 2$ and let $\mu_t$ denote the limiting measure of Lee--Yang zeros as $n\rightarrow\infty$ at temperature $t\in[0,1]$. Then
	\begin{itemize}
		\item[{\rm (i)}] for every $0 \leq t \leq t_c$, $\rm{Supp}(\mu_t)=\T$ and
		\item[{\rm (ii)}] For each $t_c < t < 1$, $\rm{Supp}(\mu_t) = \mathbb{T} \setminus (-\phi_e(t),\phi_e(t))$, 
		where
		\begin{equation}
		\begin{split}
		\phi_e(t)&=\text{Arg}\left(w_{\bullet}\left(\frac{1+w_{\bullet}t}{w_{\bullet} + t}\right)^k\right)\text{, where}\\
		w_{\bullet}&=\frac{(k+1)t^2-(k-1) \pm \sqrt{\left( k+1 \right) ^{2}{t}^{4}-2\, \left( {k}^{2}+1 \right) {t}^{2}+
				\left( k-1 \right) ^{2} }}{2t}.
		\end{split}
		\label{eq:kappa}
		\end{equation}
		\item[{\rm (iii)}] For each $t_c < t \leq 1$ and any $n \geq 0$ there are no Lee--Yang zeros for $\Gamma^k_n$ in $(-\phi_e(t),\phi_e(t))$.
	\end{itemize}
\end{THMA}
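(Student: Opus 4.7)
The plan is to translate the problem into dynamics of the one-parameter family of Blaschke products $\{B_{z,t}\}_{z\in\T}$ via Proposition~\ref{PROP:RENORM}: letting $F_n(z):=B_{z,t}^n(z)$, the Lee--Yang zeros at level $n$ are exactly the $z\in\T$ with $F_n(z)=-1$, so $\operatorname{Supp}(\mu_t)$ is the accumulation set of $\bigcup_n F_n^{-1}(-1)$. A direct calculation gives the multiplier $|B_{z,t}'(w)|=k(1-t^2)/|1+wt|^2$ for $w\in\T$, with minimum $\gamma=k(1-t)/(1+t)$ at $w=1$, and the solutions of $|B_{z,t}'(w)|=1$ on $\T$ are exactly the two conjugate points $w_\bullet,\bar w_\bullet$ of~\eqref{eq:kappa} (which lie on $\T$ precisely when $\gamma\leq 1$, i.e.\ $t\geq t_c$). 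For $t>t_c$ the choice $z_e:=w_\bullet((1+w_\bullet t)/(w_\bullet+t))^k$ makes $w_\bullet$ a neutral fixed point of $B_{z_e,t}$, so $\pm\phi_e=\pm\operatorname{Arg}(z_e)$ are the saddle-node bifurcation parameters of the family: for $|\operatorname{Arg}(z)|<\phi_e$ there is an attracting-repelling pair of fixed points on $\T$ near $w_\bullet$, which collide and leave $\T$ as $\phi$ crosses $\pm\phi_e$.

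For part~(iii), fix $t>t_c$ and $|\phi|<\phi_e$. The bifurcation analysis produces an attracting fixed point $w_*(z)\in\T$ flanked by two repelling fixed points $p(z),q(z)\in\T$; let $A(z)\subset\T$ denote the open arc bounded by $p(z),q(z)$ and containing $w_*(z)$. I will argue (a)~$A(z)$ is forward-invariant under $B_{z,t}$, since its endpoints are fixed, the Julia--Carath\'eodory theorem forces the boundary multiplier to be real and positive (so $B_{z,t}$ is orientation-preserving at the endpoints), and the interior attractor prevents the image from wrapping around $\T$; (b)~$-1\notin\overline{A(z)}$ for all $|\phi|<\phi_e$, by continuity from the symmetric configuration at $\phi=0$; and (c)~$z\in A(z)$ for all $|\phi|<\phi_e$, since $z$ could reach $\partial A(z)$ only by being a non-attracting fixed point of $B_{z,t}$, which requires $((z+t)/(1+zt))^k=1$ with $z\ne 1$, yielding the $k-1$ exceptional values $z_m=(\omega_m-t)/(1-\omega_m t)$ with $\omega_m=e^{2\pi i m/k}$, all of which one checks lie outside $(-\phi_e,\phi_e)$. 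Combining (a)--(c), $B_{z,t}^n(z)\in A(z)\not\ni -1$ for every $n\geq 0$, which proves~(iii) and in particular the inclusion $\operatorname{Supp}(\mu_t)\subseteq\T\setminus(-\phi_e,\phi_e)$ of~(ii).

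For the reverse inclusion in~(ii) and the full claim~(i) --- density of Lee--Yang zeros in $S_t$, where $S_t:=\T$ for $t\leq t_c$ and $S_t:=\T\setminus[-\phi_e,\phi_e]$ for $t_c<t<1$ --- I would exploit uniform expansion of $B_{z,t}$ on $\T$ valid when $z\in S_t$. For $t<t_c$ this is immediate since $|B_{z,t}'|\geq\gamma>1$. For $t>t_c$ and $\operatorname{Arg}(z)$ in a compact sub-arc $K\subset S_t$, the Denjoy--Wolff point of $B_{z,t}$ has moved strictly into $\D$, so $\T$ is the Julia set of a hyperbolic rational map, yielding a uniform multiplier $\lambda(K)>1$ for some iterate $B_{z,t}^N$. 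Iterating the derivative recursion for $F_n$ together with bounded-distortion estimates shows that $|F_n'(z)|\to\infty$ uniformly on $K$, so the image $F_n(I)$ of any open sub-arc $I\subset K$ eventually wraps around $\T$ and meets $-1$, producing a Lee--Yang zero in $I$. The marginal case $t=t_c$, where $B_{z,t_c}$ has a single parabolic fixed point at $w=1$ but is expanding elsewhere on $\T$, is handled by a non-uniform version of the same density argument.

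The main technical obstacle will be the containment-and-invariance package in part~(iii), particularly~(c): a global verification that the fixed-point configuration on $\T$ really is ``one attractor plus $k$ repellers'' throughout $|\phi|<\phi_e$ with no further bifurcations, and that the diagonal starting point $z$ lies in the correct invariant arc $A(z)$ for every $k\geq 2$ and $t\in(t_c,1]$. Once this is settled, the expansion-based density arguments should follow standard lines.
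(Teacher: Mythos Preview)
Your overall strategy is sound, but there is one genuine gap and one place where you are working much harder than necessary.

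\textbf{The gap (parts (i) and the reverse inclusion in (ii)).} You assert that $\operatorname{Supp}(\mu_t)$ equals the accumulation set of $\bigcup_n F_n^{-1}(-1)$ and then prove only density of zeros: for each arc $I\subset K$, eventually $F_n(I)$ wraps once around $\T$. Density does not imply positive limiting mass. Your lower bound $|F_n'|\geq c\lambda^n$ comes from $\lambda=\lambda(K)$, the expansion constant of $B_{\phi,t}$ on $\T$, and in general $\lambda<k$ (already for $t<t_c$ one only has $\gamma=k(1-t)/(1+t)<k$). So the zero count in $I$ you produce is $O(\lambda^n)$, which is $o(|V_n|)$ and gives $\mu_t(I)=0$ in the limit. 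What is missing is the monotonicity of the lift $\tilde B_{\phi,t}(\theta)$ in the \emph{parameter} $\phi$ (the paper's Lemma~\ref{lm:mono}): once $\tilde F_N(\phi_2)-\tilde F_N(\phi_1)\geq 2\pi$, monotonicity in $\phi$ lets you bound
\[
\tilde F_{N+1}(\phi_2)-\tilde F_{N+1}(\phi_1)\;\geq\;\tilde B_{\phi_1,t}\bigl(\tilde F_N(\phi_2)\bigr)-\tilde B_{\phi_1,t}\bigl(\tilde F_N(\phi_1)\bigr)\;\geq\;2\pi k,
\]
using only the degree-$k$ relation $\tilde B_{\phi_1,t}(\theta+2\pi)=\tilde B_{\phi_1,t}(\theta)+2\pi k$. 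Iterating gives growth at the correct rate $k^{n-N}$, hence $\mu_t((\phi_1,\phi_2))>0$. This is the paper's argument in Section~\ref{SEC:PROOF_THMA}; your derivative recursion and bounded distortion do not substitute for it.

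\textbf{Part (iii): your route works but is unnecessarily delicate.} The invariant-arc picture with flanking repellers is correct, but the global check that $z\in A(z)$ for all $|\phi|<\phi_e$ and all $k\geq 2$ (your item (c)) is exactly the obstacle you flag, and it requires controlling the full fixed-point configuration on $\T$. The paper bypasses this entirely by a one-line convexity argument on the lift: $\tilde B_{\phi_0,t}(\theta)-\theta$ is convex on $(0,\pi)$ (since $\tilde B''_{\phi_0,t}>0$ there), equals $\phi_0>0$ at $\theta=0$, and is negative at $\theta=\theta_\bullet$ precisely when $\phi_0<\phi_e$. Hence there is a first zero $\theta_*\in(0,\theta_\bullet)$, the cobweb iterates of $\phi_0$ increase monotonically to $\theta_*<\pi$, and never reach $\pi$. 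No identification of repellers or basin boundaries is needed.
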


\noindent
For the remainder of the paper we will use the notation $S_t :=
\rm{Supp}(\mu_t)$, which by Theorem A is either all of $\mathbb{T}$, if $0 \leq
t \leq t_c$, or the arc $\mathbb{T} \setminus (-\phi_e(t),\phi_e(t))$, if $t_c
< t \leq 1$. 

From the formula of $\phi_e$ given in (\ref{eq:kappa}), it is evident that the
set of points $(\pm\phi_e(t),t)$ forms a continuous curve that wraps once
around the cylinder $\T\times[0,1]$. We refer to this as \textit{the $\phi_e$
curve}. It is shown for branching number $k=2$ in Figure \ref{img:kvK}.  

Remark that a generalization of Theorem A to arbitrary graphs with prescribed bounds on the valence of vertices was recently proved by Peters and Regts \cite{PR}.

A key aspect of the proof of Theorem A is the following:
\begin{proposition}\label{PROP:EXPANDING_BP} Fix any branching number $k \geq 2$.  If either $t \in [0,t_c)$ or both $t \in [t_c,1)$ and  $\phi \in \mathbb{T} \setminus [-\phi_e(t),\phi_e(t)]$, then
	\begin{itemize}
		\item[{\rm (i)}] $B_{\phi,t,k}(w)$ has a fixed point $\WD$ in $\D$ and a symmetric fixed point $\WCMD = 1/\overline{\WD} \in\CMD$, and
		\item[{\rm (ii)}] $B_{\phi,t,k}$ is an expanding map of $\mathbb{T}$, i.e.\ there exists $c>0$ and $\lambda>1$ such that for all $w \in \T$ and $n>0$ we have $|\left({B}_{z,t,k}^{n}\right)^{\prime}(w)| \geq c\lambda^n$.
	\end{itemize}
\end{proposition}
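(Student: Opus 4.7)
The plan is to analyze the argument equation for fixed points of $B_{\phi,t,k}$ on $\mathbb{T}$ and to combine this with symmetry together with Denjoy--Wolff/Schwarz--Pick to locate the fixed point inside $\mathbb{D}$. Introduce the auxiliary function
\[
F(\theta) := k\psi(\theta) - \theta, \qquad \psi(\theta) := \mathrm{Arg}\!\Bigl(\tfrac{e^{i\theta}+t}{1+te^{i\theta}}\Bigr),
\]
so that $e^{i\theta}\in\mathbb{T}$ is fixed by $B_{\phi,t,k}$ iff $F(\theta)\equiv -\alpha \pmod{2\pi}$ with $\alpha := \mathrm{Arg}(\phi)$. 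A direct computation gives $\psi'(\theta) = (1-t^2)/(1+2t\cos\theta+t^2)$, whence $F'(\theta)+1 = k\psi'(\theta) = |B'_{\phi,t,k}(e^{i\theta})|$ on all of $\mathbb{T}$.

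For $t\in[0,t_c)$, this derivative is bounded below by $\gamma(t):=k(1-t)/(1+t)>1$ uniformly on $\mathbb{T}$, so $F$ is strictly monotone and $F\equiv -\alpha$ has exactly $k-1$ solutions on $\mathbb{T}$. The fixed-point equation $\phi(w+t)^k = w(1+wt)^k$ is a polynomial of degree $k+1$ in $w$, and $B_{\phi,t,k}$ commutes with the involution $w\mapsto 1/\overline{w}$, so the remaining two roots form a symmetric pair giving $\WD\in\mathbb{D}$ and $\WCMD=1/\overline{\WD}\in\CMD$. Part (ii) is immediate from the uniform bound $|B'_{\phi,t,k}|\geq\gamma(t)>1$.

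For $t\in[t_c,1)$ with $\phi\in\mathbb{T}\setminus[-\phi_e(t),\phi_e(t)]$, the function $F$ has exactly two critical points $\pm\theta^*\in\mathbb{T}$, determined by $k\psi'(\theta^*)=1$, i.e., $\cos\theta^* = (k-1-(k+1)t^2)/(2t)$. The odd symmetry $F(-\theta)=-F(\theta)$ forces $F(\theta^*)=-F(-\theta^*)$, and a direct algebraic calculation identifies these critical values with $\mp\phi_e(t)$ for $\phi_e(t)$ as in Theorem~A; the quadratic for $w_{\bullet}$ in (\ref{eq:kappa}) is precisely what emerges when one evaluates $B_{1,t,k}(e^{i\theta^*})$ and simplifies using the prescribed value of $\cos\theta^*$. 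Under the hypothesis $|\alpha|>\phi_e(t)$, $-\alpha$ lies outside the ``dip'' of $F$, so every solution of $F\equiv -\alpha$ lies in a region where $F'>0$; there are exactly $k-1$ such solutions, each a repelling fixed point on $\mathbb{T}$ with $|B'|>1$, and the symmetric-pair argument again produces $\WD$ and $\WCMD$. For (ii), Schwarz--Pick applied to $B_{\phi,t,k}\colon\mathbb{D}\to\mathbb{D}$ (not an automorphism, since $k\geq 2$) forces $|B'(\WD)|<1$, so $\WD$ is attracting; the only critical points of $B_{\phi,t,k}$ on $\hat{\mathbb{C}}$ are $-t\in\mathbb{D}$ and $-1/t\in\CMD$, with critical values $0$ and $\infty$ lying in the basins of $\WD$ and $\WCMD$ respectively. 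Hence $B_{\phi,t,k}$ is a hyperbolic rational map with Julia set $\mathbb{T}$, and the desired uniform expansion on $\mathbb{T}$ is the standard consequence of hyperbolicity.

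The main obstacle is the algebraic identification in the $t\geq t_c$ regime of the critical values of $F$ with $\pm\phi_e(t)$ as specified by the explicit formula of Theorem~A; all other steps are either routine monotonicity arguments, standard symmetry/degree counting for Blaschke products, or classical complex dynamics. The quadratic defining $w_{\bullet}$ is essentially a repackaging of the critical-value condition, and extracting the formula requires careful but straightforward book-keeping.
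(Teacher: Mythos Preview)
Your proof is correct, but takes a genuinely different route from the paper for Part~(i). The paper argues by continuation: it first shows (its Proposition on fixed-point multiplicity) that $B_{\phi,t,k}$ has a multiple fixed point precisely on the $\phi_e$ curve, then starts at $(\phi,t)=(0,0)$, where the off-circle fixed points are $0$ and $\infty$, and observes that along any path in the region below the $\phi_e$ curve the symmetric pair of fixed points cannot migrate onto $\mathbb{T}$ without colliding there---which would force a multiple root and hence force the path to cross the $\phi_e$ curve. You instead count directly: analyze the lift $F(\theta)=k\psi(\theta)-\theta$, locate exactly $k-1$ simple fixed points on $\mathbb{T}$ (using monotonicity for $t<t_c$, and the critical-value identification for $t\ge t_c$), and then use degree plus Blaschke symmetry to place the remaining two roots as a pair in $\mathbb{D}$ and $\hat{\mathbb{C}}\setminus\overline{\mathbb{D}}$. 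Your ``main obstacle''---identifying the critical values of $F$ with $\pm\phi_e(t)$---is exactly the paper's fixed-point-multiplicity computation seen from the circle, since a multiple fixed point on $\mathbb{T}$ is precisely a simultaneous solution of $F(\theta)\equiv-\alpha$ and $F'(\theta)=0$.

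For Part~(ii) the two proofs converge in the $t\ge t_c$ regime: both invoke the hyperbolicity criterion (all critical orbits land in attracting basins, hence expansion on the Julia set $\mathbb{T}$). For $t<t_c$ your uniform bound $|B'_{\phi,t,k}|\ge k(1-t)/(1+t)>1$ on $\mathbb{T}$ is more elementary and gives the expansion constant explicitly, whereas the paper still routes through the hyperbolicity criterion. The paper's continuation argument is softer and avoids the angular bookkeeping; your counting argument is more explicit and yields, as a byproduct, the exact number and repelling nature of the circle fixed points.
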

\noindent
Throughout the paper we will write $B_{z,t,k}$ and $B_{\phi,t,k}$ interchangably, where $\phi = {\rm Arg}(z)$, and we will omit the branching number $k$ from the notation when it is clear from the context.

In Theorem B we see that for the limiting measure $\mu_t$ for the Cayley Tree is much wilder than what is conjectured (and rigorously proved at small temperature \cite{BBCKK}) for the $\mathbb{Z}^d$ lattice:

\begin{THMB} Fix any branching number $k \geq 2$ and any $0 < t < 1$.
For any compact interval $X_t \subseteq {\rm interior}(S_t)$, the restriction of $\mu_t$ to $X_t$ has Hausdorff dimension less than one. In particular, for any $\phi \in S_t$ and any neighborhood of $\phi$, $\mu_t$ is not absolutely continuous with respect to the Lebesgue measure.
\end{THMB}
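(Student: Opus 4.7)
The approach is to relate $\mu_t$, locally on $X_t$, to the measure of maximal entropy (MME) of the expanding Blaschke product dynamics given by Proposition~\ref{PROP:EXPANDING_BP}. Fix a reference point $\phi_0\in X_t$. Then $B_{\phi_0,t}$ is a $C^\infty$ expanding self-map of $\mathbb{T}$ of degree $k$ and admits a unique MME $\nu_{\phi_0,t}$ of entropy $\log k$. The standard dimension formula for ergodic measures of $C^{1+\alpha}$ expanding circle maps yields
\[
\dim_H \nu_{\phi_0,t} \;=\; \frac{\log k}{\chi(\phi_0,t)},\qquad \chi(\phi_0,t):=\int_{\mathbb{T}}\log|B'_{\phi_0,t}|\,d\nu_{\phi_0,t}.
\]
Pesin's inequality gives $\chi\ge\log k$, with equality iff $\nu_{\phi_0,t}$ is absolutely continuous with respect to Lebesgue. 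A direct computation using $|w+t|=|1+wt|$ on $\mathbb{T}$ gives $|B'_{\phi_0,t}(w)|=k(1-t^2)/|1+wt|^2$, which is non-constant whenever $t>0$, so the multipliers of $B_{\phi_0,t}$ at its fixed points on $\mathbb{T}$ cannot all equal $k$; Livschitz rigidity for expanding circle maps then prevents the MME from coinciding with the unique absolutely continuous invariant measure. Hence $\chi>\log k$ strictly, so $\dim_H\nu_{\phi_0,t}<1$, uniformly on the compact set $X_t$ by continuity in $\phi_0$.

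The heart of the proof is transferring this bound to $\mu_t|_{X_t}$. Since $z\mapsto B_{z,t}$ is a holomorphic family of uniformly expanding maps of $\mathbb{T}$ for $z$ in a complex neighborhood $U$ of $\phi_0$ inside $\text{interior}(S_t)$, Ma\~n\'e--Sad--Sullivan structural stability supplies a holomorphic motion $h_z:\mathbb{T}\to\mathbb{T}$ with $h_{\phi_0}=\mathrm{id}$ and $h_z\circ B_{\phi_0,t}=B_{z,t}\circ h_z$. The Lee--Yang equation $B_{z,t}^n(z)=-1$ is then equivalent to
\[
B_{\phi_0,t}^n\!\bigl(h_z^{-1}(z)\bigr)\;=\;h_z^{-1}(-1),
\]
which for $z$ near $\phi_0$ is a small perturbation, in both base point and target, of the standard preimage problem $B_{\phi_0,t}^n(x)=-1$. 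Preimages of a point under an expanding circle map equidistribute to its MME, and I would use this to show that $\mu_t|_U$ and $\nu_{\phi_0,t}|_U$ are mutually absolutely continuous with bounded density, so $\dim_H(\mu_t|_U)=\dim_H\nu_{\phi_0,t}<1$. Covering $X_t$ by finitely many such neighborhoods then yields $\dim_H(\mu_t|_{X_t})<1$, whence the non-absolute-continuity conclusion is immediate.

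The main obstacle is exactly this local comparison. The conjugacies $h_z$ are only quasisymmetric and, as emphasized in the abstract, generally not absolutely continuous---precisely the feature that forces $\nu_{\phi_0,t}$ itself to be singular, but also what makes the dimensional transfer delicate. One has to verify that the H\"older perturbations in both the base point $h_z^{-1}(z)$ and the target $h_z^{-1}(-1)$ do not disrupt the preimage equidistribution as $n\to\infty$, and one must reconcile the fact that the parameter-space equation $z\mapsto B_{z,t}^n(z)$ has $|V_n|\sim k^{n+1}/(k-1)$ solutions rather than $k^n$, reflecting the dual role of $z$ as both dynamical parameter and initial condition.
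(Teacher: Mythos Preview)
Your overall architecture is the same as the paper's: show that the MME $\nu_{\phi_0,t}$ of the expanding map $B_{\phi_0,t}$ has Hausdorff dimension strictly less than one, then transport this to $\mu_t$ via the conjugacies $h_z$ (equivalently, via holonomy along the central foliation of the skew product $B(\phi,\theta)=(\phi,B_{\phi,t}(\theta))$). The paper does exactly this, and your argument for $\dim_H\nu_{\phi_0,t}<1$ is essentially their Lemma~\ref{MMELESSTHAN1}.

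The genuine gap is in your transfer step. You propose to show that $\mu_t|_U$ and $\nu_{\phi_0,t}|_U$ are ``mutually absolutely continuous with bounded density,'' and you already see the problem: the conjugacies $h_z$ are only quasisymmetric, not absolutely continuous. In fact the paper shows that the Lee--Yang measure on the diagonal is \emph{exactly} the holonomy image of the MME on a vertical fiber (their Proposition~\ref{TRANSINVARMAINPROP}), so there is nothing to compare---the two measures are identified by the holonomy homeomorphism. The issue is purely whether that homeomorphism preserves the property ``Hausdorff dimension $<1$,'' and absolute continuity is neither available nor needed for that.

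The missing idea, which you do not supply, is to work directly with Hausdorff dimension rather than with measure classes. The holonomy $g$ between nearby transversals is H\"older with exponent $\gamma$ that can be made arbitrarily close to $1$ by shrinking the neighborhood (this is the content of Proposition~\ref{PROP:HOLDERCONJ} and Lemma~\ref{LEM:DIAGONALHOLDER}). One then uses the elementary fact that an $\alpha$-H\"older map sends a set of Hausdorff dimension $d$ to a set of dimension at most $d/\alpha$. So: pick a full-$\nu_{\phi_0,t}$-measure set $E$ with $\mathrm{HD}(E)=H<1$, choose $\gamma\in(H,1)$, shrink $U$ so that the holonomy is $\gamma$-H\"older, and conclude that $g(E)$ has full $\mu_t$-measure on that piece of the diagonal with $\mathrm{HD}(g(E))\le H/\gamma<1$. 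Compactness finishes. Your equidistribution-with-perturbed-basepoint-and-target route does not give this, and your acknowledged obstacle is precisely where the paper's H\"older-dimension trick does the work.
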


\noindent
(Recall that the Hausdorff dimension of a measure is defined to be the smallest Hausdorff dimension of a full measure set.)

\vspace{0.1in}

The pointwise dimension $d_{\mu_t}(\phi)$ for $\mu_t$ at $\phi \in S_t$ is defined by
\begin{align*}
d_{\mu_t}(\phi) :=  \lim_{\delta \rightarrow 0}   \frac{\log \mu_t ([\phi-\delta,\phi+\delta])}{\log 2\delta},
\end{align*}
supposing that the limit exists.

\begin{THMC}\label{THMC}
Fix any branching number $k \geq 2$.
For any $0 < t < 1$, there is a Lebesgue full measure set $S^+_t \subset S_t$, such that for any $\phi \in S^+_t$, we have
\begin{equation}
d_{\mu_t}(\phi)=\frac{\log k}{\chi_{\phi,t}} > 1,
\end{equation}
where 
\begin{align}
\chi_{\phi,t} = 2\pi\log \left|\frac{k(1-t^2)w_{\mathbb{D}}(1-w_{\mathbb{D}}t)}
         {(w_{\mathbb{D}}+t) (1+w_{\mathbb{D}}t)(t-w_{\mathbb{D}}) }\right|,
\end{align}
with $w_{\mathbb{D}}$ the unique fixed point of $B_{z,t,k}$ in $\D$.

\vspace{0.1in}
\noindent
Meanwhile, there is a dense set $S^-_t \subset S_t$, such that for any $\phi \in S^-_t$ we have $d_{\mu_t}(\phi) < 1$.
\end{THMC}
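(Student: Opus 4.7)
The plan is to identify $\mu_t$, restricted to a short arc around a Lebesgue-typical $\phi\in S_t$, with the preimage distribution of an expanding Blaschke product $B_{\phi,t}$, and then extract the pointwise dimension by a scaling argument involving the derivative of the diagonal map $F_n(\psi):=B_{\psi,t}^n(\psi)$.

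First I would show that $F_n\colon\mathbb{T}\to\mathbb{T}$ is an orientation-preserving monotone covering of topological degree $|V_n|=(k^{n+1}-1)/(k-1)$, inductively using that $B_{z,t}(w)$ is monotone of degrees $(1,k)$ in $(z,w)$ on $\mathbb{T}$ together with the expansion from Proposition~\ref{PROP:EXPANDING_BP}(ii). Monotonicity converts Lee--Yang zero counting into an arc-length computation: the number of zeros in $[\phi-\delta,\phi+\delta]$ equals $L_n(\delta)/(2\pi)+O(1)$ with $L_n(\delta):=\int_{\phi-\delta}^{\phi+\delta}|F_n'(\psi)|\,d\psi$. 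Along the diagonal orbit $w_j:=B_{\phi,t}^j(\phi)$, the chain rule gives $\log|F_n'(\phi)|=\sum_{j=0}^{n-1}\log|B_{\phi,t}'(w_j)|+O(1)$, with the parameter-derivative contributions absorbed into the $O(1)$ by expansion.

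For each fixed $\phi$, $B_{\phi,t}$ admits a unique absolutely continuous invariant probability measure $\nu_{\phi,t}$, equivalent to Lebesgue, and Birkhoff's ergodic theorem gives $(1/n)\sum_{j}\log|B_{\phi,t}'(w_j)| \to \int\log|B_{\phi,t}'|\,d\nu_{\phi,t}$ for $\nu_{\phi,t}$-a.e.\ (hence Lebesgue-a.e.) starting point $w_0$. The diagonal statement we actually need---that Lebesgue-a.e.\ $\phi$ is itself Birkhoff-generic as a starting point for its own dynamics $B_{\phi,t}$---requires a Fubini/measurability argument (the first obstacle, discussed below). Granting this,
\[
\chi_{\phi,t} := \lim_{n\to\infty}\frac{1}{n}\log|F_n'(\phi)| = \int_\mathbb{T} \log|B_{\phi,t}'(w)|\, d\nu_{\phi,t}(w)
\]
exists for Lebesgue-a.e.\ $\phi \in S_t$. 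The closed form stated in the theorem should then emerge by identifying $\nu_{\phi,t}$ with a harmonic-type measure on $\mathbb{T}$ attached to the attracting fixed point $w_\D\in\mathbb{D}$ (via M\"obius-conjugating $B_{\phi,t}$ to a Blaschke product fixing $0\in\mathbb{D}$ and invoking the mean-value property for bounded harmonic functions), combined with the explicit expression $|B_{\phi,t}'(w)|=k(1-t^2)/|w+t|^2$ on $\mathbb{T}$ and standard Poisson-integral identities.

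The pointwise dimension now follows from a scaling argument. At the natural scale $\delta_n := 2\pi/|F_n'(\phi)|$, the image $F_n([\phi-\delta_n,\phi+\delta_n])$ wraps exactly once around $\mathbb{T}$, so the counting estimate gives $\mu_{t,n}([\phi-\delta_n,\phi+\delta_n])\asymp 1/|V_n|$, and a dyadic interpolation between these scales upgrades this to
\[
d_{\mu_t}(\phi) = \lim_n \frac{-\log|V_n|+O(1)}{-\log|F_n'(\phi)|+O(1)} = \frac{\log k}{\chi_{\phi,t}}.
\]
The strict inequality $\chi_{\phi,t}<\log k$ (and hence $d_{\mu_t}(\phi)>1$) comes from Pesin's identity $h_{\nu_{\phi,t}}(B_{\phi,t})=\chi_{\phi,t}$ together with the strict entropy gap $h_{\nu_{\phi,t}}<h_{\text{top}}(B_{\phi,t})=\log k$, which holds whenever $B_{\phi,t}$ is not M\"obius-conjugate to $z\mapsto z^k$ (generic for $t\in(0,1)$). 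For the dense set $S^-_t$, I would take $\phi$ whose diagonal orbit lands on a highly repelling periodic orbit of $B_{\phi,t}$---dense by Theorem A combined with density of repelling periodic orbits of arbitrarily large multiplier---and rerun the scaling argument to get $d_{\mu_t}(\phi)<1$. The main obstacle throughout is the rigorous scaling analysis above, which must be done parameter-by-parameter because, as the abstract highlights, the topological conjugacies between different $B_{\phi,t}$ on $\mathbb{T}$ are not absolutely continuous and hence do not transfer Lebesgue-generic statements between parameter values.
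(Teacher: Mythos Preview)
Your zero-counting and scaling argument (relating $\mu_t([\phi-\delta,\phi+\delta])$ to $|F_n'(\phi)|$ at scale $\delta_n\asymp |F_n'(\phi)|^{-1}$, then reading off $d_{\mu_t}(\phi)=\log k/\chi_{\phi,t}$) is essentially the same as the paper's Step~2, and your derivation of the closed form for $\chi_{\phi,t}$ via M\"obius conjugation to a Blaschke product fixing $0$ is exactly Proposition~\ref{EXPLICITLYAP}. The strict inequality $\chi_{\phi,t}<\log k$ via the entropy gap is also the paper's argument.

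The genuine gap is the step you flag but do not resolve: showing that for Lebesgue-a.e.\ $\phi$, the point $\phi$ itself is Birkhoff-generic for $B_{\phi,t}$. A Fubini argument cannot work here. Fubini would give you that for Lebesgue-a.e.\ $\phi$, Lebesgue-a.e.\ $\theta$ is generic for $B_{\phi,t}$; but the diagonal $\{\theta=\phi\}$ has product-measure zero, so this says nothing about whether $\theta=\phi$ is generic. Nor can you fix a single $\phi_0$ and transport genericity to nearby parameters: as you note, the conjugacies $g_{\phi_0,\phi}$ are not absolutely continuous, so the non-generic set for $B_{\phi_0,t}$ (a priori only Lebesgue-null) may map onto a set of positive Lebesgue measure in $\mathbb{T}_\phi$, and in particular may hit the diagonal point.

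The paper's resolution is to upgrade ``Lebesgue-null'' to ``Hausdorff dimension $<1$'' throughout, which \emph{is} preserved by the H\"older holonomies. Concretely: the Special Ergodic Theorem of Kleptsyn--Ryzhov--Minkov shows that for each fixed $\phi_0$ the set of $\theta$ where Birkhoff averages of $\log|B_{\phi_0,t}'|$ deviate by more than $\epsilon$ has Hausdorff dimension strictly less than one. Since the holonomy $g_{\phi_0,\phi}$ is H\"older with exponent arbitrarily close to $1$ for $\phi$ close to $\phi_0$ (Proposition~\ref{PROP:HOLDERCONJ}), the image of this bad set in $\mathbb{T}_\phi$, and hence its intersection with the diagonal, still has Hausdorff dimension $<1$ and therefore Lebesgue measure zero. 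This is the content of Proposition~\ref{SET} and Lemmas~\ref{THREEEPSILON}--\ref{HAUSDORFF2}, and it is the missing ingredient in your outline.

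For $S_t^-$, your idea of arranging the diagonal orbit to hit a repelling periodic orbit is in the right spirit, but making the parameter $\phi$ land exactly on a prescribed periodic orbit of its own map is a non-trivial transversality problem. The paper instead observes that the set of $\theta$ with pointwise Lyapunov exponent equal to $\chi_{\eta_{\phi,t}}>\log k$ is $\eta_{\phi,t}$-full and hence dense in $\mathbb{T}_\phi$, then pushes this dense set to the diagonal by the holonomy homeomorphism.
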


\noindent
(We will see in the proof of Theorem C that $\chi_{\phi,t}$ is the Lyapunov exponent for the unique absolutely continuous invariant measure $\nu_{z,t}$ for $B_{z,t,k}$.)

Theorem C and an adaptation of M\"uller-Hartmann's analysis of the electrostatic
representation~(\ref{electrostat rep}) for $F(z,t)$ allows us to prove the
following global (Lebesgue almost everywhere) version of the ``phase transition of
continuous order'' described by  M\"uller-Hartmann and Zittartz.

For $\phi \in S_t^{+}$ we will see that the pointwise dimension
$d_{\mu_t}(\phi)$ serves as the critical exponent for the free energy, thus it
will be more natural to use the notation $\kappa(\phi,t) \equiv
d_{\mu_t}(\phi)$ in order to be consistent with the notations from \cite{MULLERZITTARZ,MULLERHARTMAN}.

\begin{THMD}
Fix any branching number $k \geq 2$, any $0 < t < 1$, and any $\phi$
in the Lebesgue full measure set $S^{+}_t\subset S_t$.   Then, the free energy $F(z,t)$ has radial critical
exponent
\begin{align*}
\kappa(\phi,t) = d_{\mu_t}(\phi)=\frac{\log k}{\chi_{\phi,t}}
\end{align*}
at the point $z=e^{i \phi}$.  More precisely, 
there is a real analytic function\footnote{$g$ depends on on $k, t,$ and $\phi$.} $g:(0,\infty) \rightarrow \mathbb{R}$ so that
\begin{align*}
\lim_{r\rightarrow 1} \frac{\log \left|F\left(r e^{i\phi} ,t\right) - g(r)\right|}{\log |r-1|} = \kappa(\phi,t) > 1.
\end{align*}
Meanwhile, for $\phi$ in the dense set $S^-_t$ there is a real analytic $g:(0,\infty) \rightarrow \mathbb{R}$ so that
\begin{align*}
\lim_{r\rightarrow 1} \frac{\log \left|F\left(r e^{i\phi},t\right) - g(r)\right|}{\log |r-1|} < 1.
\end{align*}
\end{THMD}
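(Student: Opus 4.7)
The plan is to start from the electrostatic representation (\ref{electrostat rep}), which gives $F(z,t) = -2T\, U(z) + T\log|z| + \text{const}$ where $U(z) := \int_{\T} \log|z - \zeta|\, d\mu_t(\zeta)$. Since $T\log r$ is manifestly real analytic in $r \in (0, \infty)$, it can be absorbed into $g(r)$, reducing the task to identifying the radial asymptotic of $U(re^{i\phi})$ as $r \to 1$ modulo a real analytic function of $r$.

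First, I would split $U = U_{\text{far}} + U_{\text{near}}$ by restricting $U_{\text{near}}$ to a small arc $|\psi - \phi| < \delta_0$ around $\phi$. The integrand $\log|re^{i\phi} - e^{i\psi}|$ is real analytic in $r$ on a uniform complex disk around $r = 1$ when $|\psi - \phi| \geq \delta_0$, so $U_{\text{far}}$ is real analytic in $r$ near $1$ and absorbed into $g$. For the near part, using $|re^{i\phi} - e^{i(\phi \pm u)}|^2 = \epsilon^2 + u^2 + p(\epsilon, u)$ with $\epsilon = r - 1$ and $p$ a real-analytic higher-order perturbation, I write the log as $\log(\epsilon^2 + u^2) + h(\epsilon, u)$. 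Direct computation (e.g.\ $\partial_\epsilon^2 h|_{\epsilon=0} = \tfrac{1}{2\sin^2(u/2)} - 1 - \tfrac{2}{u^2}$, whose $1/u^2$ singularities cancel to leave a bounded function of $u$) shows that all Taylor coefficients $c_k(u)$ of $h$ in $\epsilon$ are bounded in $u$; by Fubini, $\int h(\epsilon, u)\, dF(u)$ is real analytic in $\epsilon$ and also absorbed into $g$. Here $F(u) := \mu_t([\phi - u, \phi + u])$.

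The remaining task is to extract the exponent from $I(\epsilon) := \int_0^{\delta_0} \log(\epsilon^2 + u^2)\, dF(u)$. Integration by parts (the boundary term at $0$ vanishes since $F(0) = 0$ and Theorem C gives $F(u) = u^{\kappa + o(1)}$) yields
\begin{align*}
I(\epsilon) = F(\delta_0)\log(\epsilon^2 + \delta_0^2) - 2 J(\epsilon), \qquad J(\epsilon) := \int_0^{\delta_0} \frac{u F(u)}{\epsilon^2 + u^2}\, du,
\end{align*}
where the first term is analytic in $\epsilon^2$. Since $\kappa > 0$, $J(0) := \int_0^{\delta_0} F(u)/u\, du < \infty$, so
\begin{align*}
J(0) - J(\epsilon) = \epsilon^2 \int_0^{\delta_0} \frac{F(u)}{u(\epsilon^2 + u^2)}\, du.
\end{align*}
The substitution $u = \epsilon v$, combined with the sharp two-sided bound $u^{\kappa + \eta} \leq F(u) \leq u^{\kappa - \eta}$ for small $u$ (both consequences of Theorem C for arbitrary $\eta > 0$) and the convergence $\int_0^\infty v^{\kappa - 1}/(1 + v^2)\, dv = \pi/(2\sin(\pi \kappa / 2))$ for $0 < \kappa < 2$, then yields matching log-asymptotics $\log|J(0) - J(\epsilon)|/\log|\epsilon| \to \kappa(\phi, t)$. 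The $\phi \in S_t^-$ case applies the same scheme with $\kappa < 1$, yielding a radial exponent strictly less than $1$.

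The main obstacle I anticipate is verifying real analyticity of $\int h(\epsilon, u)\, dF(u)$ on a definite complex disk around $\epsilon = 0$: although each $c_k(u)$ is bounded after the singular cancellations, one must establish uniform-in-$u$ bounds and at-most-geometric growth of $\int c_k(u)\, dF(u)$ in $k$ to secure positive radius of convergence. A secondary issue is producing a single $g$ that is real analytic on all of $(0, \infty)$ rather than just near $r = 1$; for this one exploits the analytic continuation of the free energy across $\T$ from the complement side of $S_t$, which is available via the explicit dynamical formulas attached to Proposition \ref{PROP:EXPANDING_BP}. For the (unlikely but possible) values $\kappa \geq 2$, one iterates the integration-by-parts / Taylor-subtraction procedure to absorb additional analytic terms and the argument extends.
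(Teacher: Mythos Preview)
Your overall plan---electrostatic representation, near/far split, integration by parts to reach $J(\epsilon)=\int_0^{\delta_0}\frac{uF(u)}{\epsilon^2+u^2}\,du$, then the scaling substitution $u=\epsilon v$---is exactly the skeleton of the paper's argument.  The paper packages this as a separate proposition about the logarithmic potential of a measure on $\mathbb{R}$, after first M\"obius-transforming $\mathbb{T}$ to $\mathbb{R}$ so that the radial approach to $e^{i\phi}$ becomes the \emph{exact} vertical approach $z=iy$ with $|iy-\zeta|^2=y^2+\zeta^2$.  The paper also handles all $\kappa>0$ at once by subtracting the polynomial $\sum_{j=0}^{m}y^{2j}\int\Phi(\zeta)\zeta^{-2j-1}\,d\zeta$ (equivalently, writing the singular part as $\int_0^{\delta_0}\frac{\zeta\Phi(\zeta)}{\zeta^2+y^2}(iy/\zeta)^{2m+2}\,d\zeta$ with $2m<\kappa\le 2m+2$), which is cleaner than iterating.

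There is, however, a genuine error in your treatment of the correction term $h(\epsilon,u)$.  You assert that all Taylor coefficients $c_k(u)$ of $h$ in $\epsilon$ are bounded in $u$, citing the cancellation in $c_2$.  This fails already at $k=3$: with $Q(\epsilon)=(1+\epsilon)^2-2(1+\epsilon)\cos u+1$ one computes
\[
\partial_\epsilon^3\log Q\big|_{\epsilon=0}=-\frac{3}{2\sin^2(u/2)}+2,\qquad \partial_\epsilon^3\log(\epsilon^2+u^2)\big|_{\epsilon=0}=0,
\]
so $c_3(u)\sim -1/u^2$ as $u\to0$.  The underlying reason is that the singularities of $\log Q$ (at $\epsilon=e^{\pm iu}-1$) and of $\log(\epsilon^2+u^2)$ (at $\epsilon=\pm iu$) are close but distinct, so the radius of convergence of $h(\cdot,u)$ is still $\asymp|u|$.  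Consequently $\int h(\epsilon,u)\,dF(u)$ is \emph{not} real analytic in $\epsilon$; your ``main obstacle'' is not a technicality but a false statement.  The paper's M\"obius transformation is precisely what sidesteps this: on the line there is no correction term at all.  Your approach can still be rescued, since $h$ carries an extra factor comparable to $u$ (equivalently $\epsilon$) in the scaling regime, so $\int h\,dF$ contributes a singularity of order $\ge\kappa+1$ rather than an analytic piece---but that requires a different estimate than the one you propose.
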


\noindent
The ``almost-everywhere'' critical exponent $\kappa(\phi,t)$ from Theorem D is illustrated in Figure \ref{FIG:CRITICAL_EXPONENT}.

\begin{figure}
\begin{picture}(0,0)%
\includegraphics{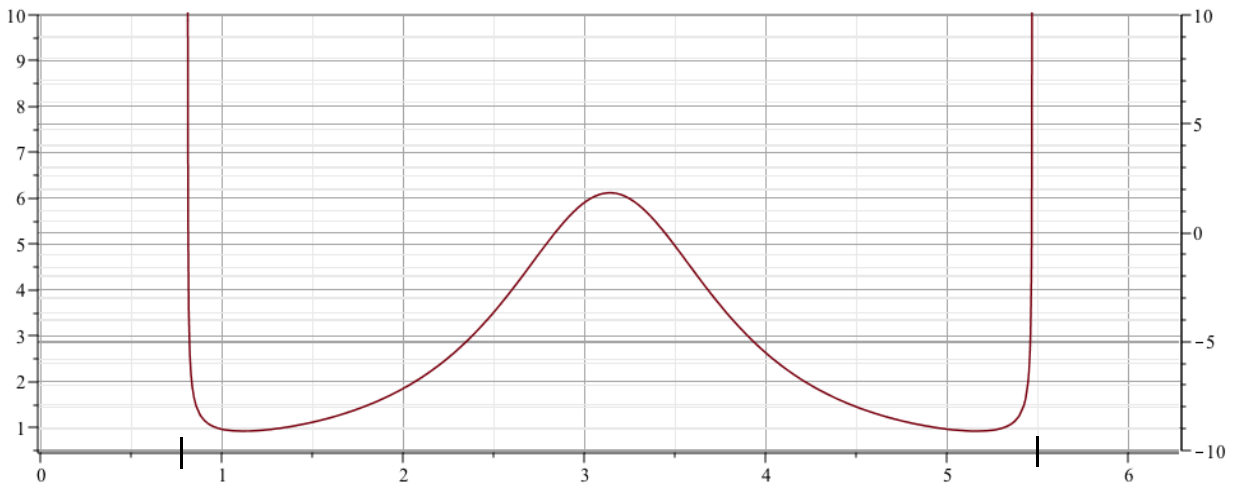}%
\end{picture}%
\setlength{\unitlength}{3947sp}%
\begingroup\makeatletter\ifx\SetFigFont\undefined%
\gdef\SetFigFont#1#2#3#4#5{%
  \reset@font\fontsize{#1}{#2pt}%
  \fontfamily{#3}\fontseries{#4}\fontshape{#5}%
  \selectfont}%
\fi\endgroup%
\begin{picture}(6465,2611)(1936,-2897)
\put(7351,-2833){\makebox(0,0)[lb]{\smash{{\SetFigFont{10}{12.0}{\familydefault}{\mddefault}{\updefault}{\color[rgb]{0,0,0}$2\pi - \kappa\left(\frac{2}{2}\right)$}%
}}}}
\put(1951,-586){\makebox(0,0)[lb]{\smash{{\SetFigFont{10}{12.0}{\familydefault}{\mddefault}{\updefault}{\color[rgb]{0,0,0}$\sigma\left(\phi,\frac{2}{3}\right)$}%
}}}}
\put(3301,-2833){\makebox(0,0)[lb]{\smash{{\SetFigFont{10}{12.0}{\familydefault}{\mddefault}{\updefault}{\color[rgb]{0,0,0}$\kappa\left(\frac{2}{3}\right)$}%
}}}}
\put(5341,-2833){\makebox(0,0)[lb]{\smash{{\SetFigFont{10}{12.0}{\familydefault}{\mddefault}{\updefault}{\color[rgb]{0,0,0}$\phi$}%
}}}}
\end{picture}%
\caption{Plot of the ``almost everywhere critical exponent'' $\kappa\left(\phi,\frac{2}{3}\right)$ for branching number $k=2$. \label{FIG:CRITICAL_EXPONENT}}
\end{figure}

\subsection{Main technical difficulty}
Let us describe the main technical difficulty in proving Theorems~B and C.
It begins with the fact that Proposition
\ref{PROP:RENORM} expresses the Lee--Yang zeros for $\Gamma_n$ as solutions to
$B^{n}_{z,t}(z) = -1$, i.e.\ that variable $z$ occurs both as a parameter and
as the dynamical variable.  To address this issue we fix $t \in (0,1)$ and work
with the skew product
\begin{align*}
B: \mathbb{T} \times \mathbb{T} \rightarrow \mathbb{T} \times \mathbb{T} \qquad \mbox{given by} \qquad B(\phi,\theta) = (\phi,B_{\phi,t}(\theta)),
\end{align*}
where\footnote{We will typically abuse notation and ignore subtleties about
branches ${\rm arg}$, except when truly necessary.}  we have set $\phi = {\rm
arg}(z)$ and $\theta = {\rm arg}(w)$.  Suppose we parameterize the diagonal
$\Delta = \{(\phi,\theta) \, : \, \theta = \phi\}$ by the variable $\phi$.
Then, Proposition \ref{PROP:RENORM} gives that the Lee--Yang zeros for
$\Gamma_n$ are the intersection points
\begin{align*}
(B^{n})^{-1}\{\theta = \pi\} \, \cap \, \Delta.
\end{align*}
This is illustrated in Figure \ref{FIG:SKEW_PRODUCT_ILLUSTRATION}.

\begin{figure}
\includegraphics[scale=0.5]{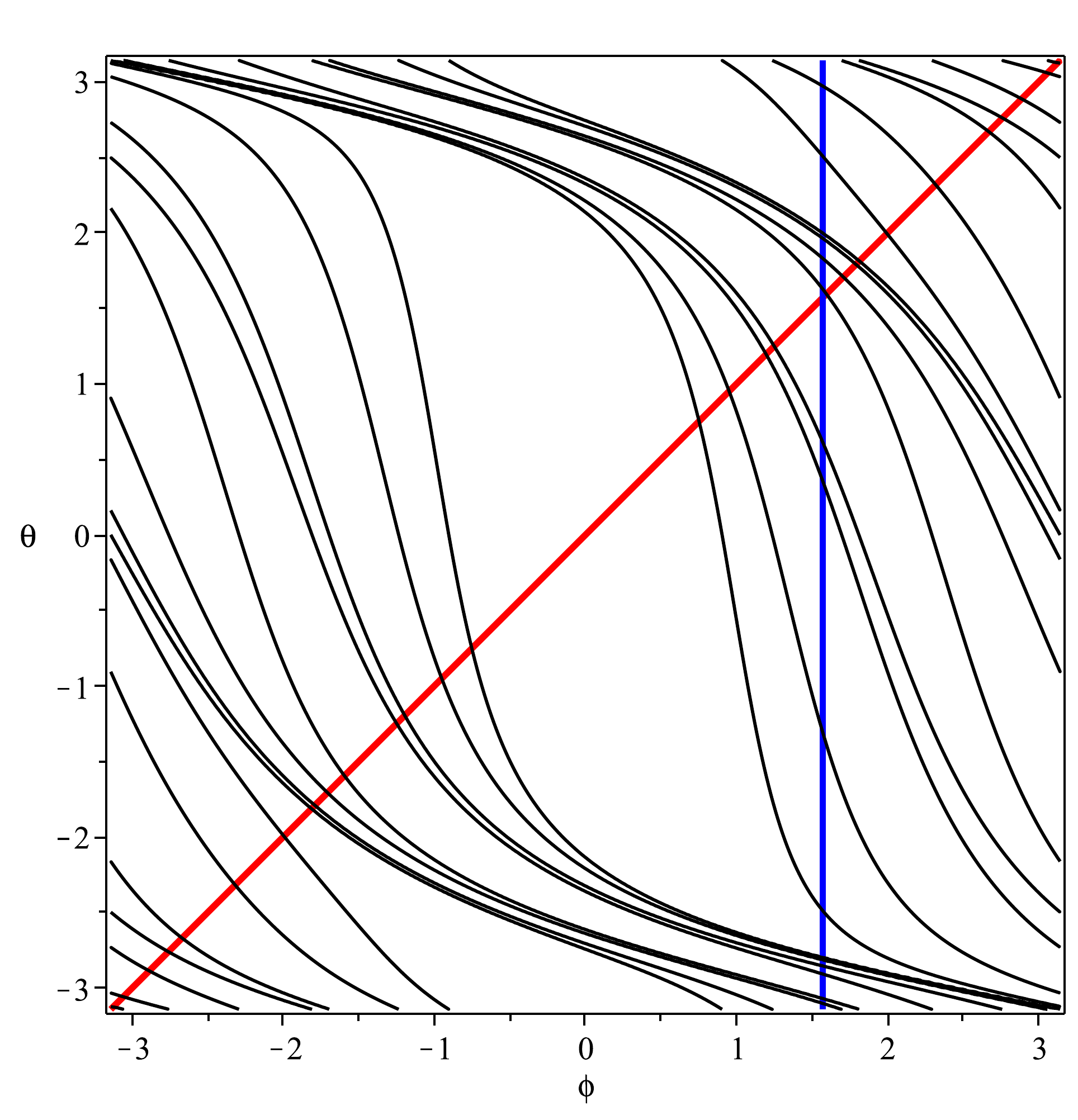}
\caption[LoF entry]{\label{FIG:SKEW_PRODUCT_ILLUSTRATION}  Illustration of how to
determine Lee--Yang zeros for $\Gamma^k_n$ using the skew product
$B(\phi,\theta) = (\phi,B_{\phi,t}(\theta))$. 
Here, we use branching number $k=2$, level $n=4$, and temperature $t=\frac{1}{2}$.  For these choices,
the Lee--Yang zeros are the values of $\phi$ at which the black curves (depicting $B^{-4}\{\theta = \pi\}$) intersect the diagonal $\Delta$, in red.
Also shown is a vertical $\mathbb{T}_{\frac{\pi}{2}} := \{\frac{\pi}{2}\} \times \mathbb{T}$, 
in blue.}
\end{figure}

Fix $\phi_0 \in {\rm interior}(S_t)$ and let $\mathbb{T}_{\phi_0} :=
\{\phi_0\} \times \mathbb{T}$.  Then, 
$B_{\phi_0,t}: \mathbb{T}_{\phi_0} \rightarrow \mathbb{T}_{\phi_0}$ is an
expanding map of the circle, by Proposition \ref{PROP:EXPANDING_BP}.   Therefore, if we assign Dirac mass to each of the
preimages $(B^{-n})\{\theta = \pi\} \, \cap \,
\mathbb{T}_{\phi_0}$ and normalize, the result converges to the measure of
maximal entropy (MME) $\eta_{\phi_0,t}$ of $B_{\phi_0,t}$, as $n\rightarrow \infty$.  Because expanding maps of the circle are one of the simplest types
of dynamical systems, a tremendous amount is known about $\eta_{\phi_0,t}$.  The issue in proving Theorems B and C is to relate
these properties of $\eta_{\phi_0,t}$ to the properties of the Lee--Yang measure $\mu_t$ at points on the diagonal $\Delta$ that are near to $(\phi_0,\phi_0)$.

This is done as follows: Let $X_t \Subset
{\rm interior}(S_t)$ be an interval containing $\phi_0$.  Then,
Proposition~\ref{PROP:EXPANDING_BP} implies that the restriction $B: X_t \times
\mathbb{T} \rightarrow X_t \times \mathbb{T}$ is partially hyperbolic,
with the vertical direction expanding.  As such, it has a unique central
foliation $\mathcal{F}^c$, that can be thought of as ``horizontal''.  Using
standard dynamical techniques, one can construct a holonomy invariant transverse
measure $\eta$ on $\mathcal{F}^c$ that describes the limit as $n \rightarrow
\infty$  of the (normalized) preimages 
\begin{align*}
(B^{n})^{-1}\{\theta = \pi\}.
\end{align*}
If we restrict $\eta$ to $\mathbb{T}_{\phi_0}$ we obtain the MME
$\eta_{\phi_0,t}$ for $B_{\phi_0,t}$ and if we restrict $\eta$ to the diagonal
$\Delta$ we obtain the Lee--Yang measure $\mu_t$.  Therefore, they are related
by holonomy along $\mathcal{F}^c$.

However, the main technical issue now arises because honolomies along
$\mathcal{F}^c$ are not absolutely continuous\footnote{See also \cite{milnor1}
for a similar situation but with a more concrete construction.} so that it is
impossible to control holonomy images of arbitrary sets of zero Lebesgue
measure.  However, these holonomies are H\"older continuous, with H\"older
exponent arbitrarily close to one, so long as the two transversals are chosen
sufficiently close.  This allows us to control the holonomy images of sets
whose Hausdorff dimension is less than one.  So, the key idea in proving
Theorems B and C is to work with sets of Hausdorff Dimension less than one,
instead of working with arbitrary  sets of zero Lebesgue measure.

This idea goes back to conversations the last author had with Victor
Kleptsyn at the
conference\footnote{\url{http://www.math.stonybrook.edu/dennisfest/}} in honor
of Dennis Sullivan's 70th birthday, who had used the same idea in work with Ilyashenko and Saltykov on intermingled basins of attraction \cite{IKS}.  Moreover, a key tool used in proving
Theorem~C is the ``Special Ergodic Theorem'' proved by Kleptsyn, Ryzhov, and Minkov
\cite{Kleptsyn} which allows one to conclude that set of initial conditions
whose ergodic averages deviate by more than $\epsilon > 0$ from the space
average has Hausdorff dimension less than one.  It is a generalization of a preliminary version that appeared in \cite{IKS}.

\subsection{Connection with dynamics of Blaschke Products and  expanding maps of the circle.}

The proof of each of the theorems above rely upon techniques from real and
complex dynamics to study the iterates of the Blaschke product $\bzt$.  The dynamics of Blaschke Products and, more generally, of $C^2$ expanding maps of the circle
is a classical topic in the dynamical systems community \cite{S,PS,PUJALS} and their study remains an active area of
research in dynamics; see \cite{Erchenko,GP,MCMULLEN,IVRII} for a sample.  Remark also that Blaschke Products arise in a far  more
subtle way than here, when studying the Lee--Yang zeros for the Diamond
Hierarchical Lattice \cite{BLR1}.

\subsection{Plan for the paper}

Because the paper is written for readers from both mathematical physics and
dynamical systems, we have made an effort to provide ample details throughout.
Section~\ref{SEC:BASIC_C_DYN} is devoted to proving Proposition
\ref{PROP:EXPANDING_BP}, which is the key statement needed to begin applying dynamical
systems techniques to the problem.  This is followed by a proof of Theorem A in
Section~\ref{SEC:PROOF_THMA}.  In Section~\ref{SEC:BASIC_REAL} we summarize
several powerful results from real dynamics that will be applied to the
expanding Blaschke Products and explain their consequences for $\bzt$.  Section~\ref{SEC:PH} is devoted to studying the skew-product $B(\phi,\theta) =
(\phi,B_{\phi,t}(\theta))$ as a partially hyperbolic mapping and relating the
limiting measure of Lee--Yang zeros to the measure of maximal entropy for
the Blaschke Products under holonomy along the central foliation.  We prove
Theorem B in Section~\ref{SEC_PF_THMB} and then prove Theorem~C in Section~\ref{SEC:PF_THMC}.  Section~\ref{SEC:THMD} is devoted to Theorem D.
Proofs of the formula for Lee--Yang zeros on the Cayley Tree in terms of iterates of $\bzt$ (Proposition \ref{PROP:RENORM}) can be found in the literature, however, we 
provide a derivation in Appendix \ref{SEC:APPENDIX_DERIVATION}, so that our paper can be read independently.

\subsection{Basic Notations Used}
We will use the following notation: the complex open unit disk $\D:=\{z\mid|z|<1\}$ and the Riemann sphere $\C:=\mathbb{C}\cup\{\infty\}$.

\vspace{0.1in}

\noindent\textbf{Acknowledgments}: We thank Pavel Bleher for suggesting this
problem to us and for his several helpful comments.  Many other people have
also provided helpful comments and advice, including Joseph Cima, Vaughn
Climenhaga, Oleg Ivrii, Benjamin Jaye, Victor Kleptsyn, Micha\l \ Misiurewicz,
Boris Mityagin, Juan Rivera-Letelier, William Ross, and Maxim Yattselev.  We
thank the referees for their very careful reading of our paper and their
helpful comments.  Theorem A and its proof were the main content of Anthony Ji
and Caleb He's entry in the 2016 Siemens Competition in Math, Science, and
Technology.  The work of the first and fourth authors was supported by NSF
grant DMS-1348589.


\section{Basic Complex Dynamics for $B_{z,t}$}
\label{SEC:BASIC_C_DYN}
This section is devoted to proving Proposition \ref{PROP:EXPANDING_BP}.   The results in this section could be obtained more quickly by appealing to the results of \cite{PUJALS}; however, we give complete details here.

For $t\in[0,1)$, the map $\bzt$ is an example of a {\em{Blaschke Product}}, which is a map of the form ${\mathcal B}(w)=e^{i\phi}\prod_{i=1}^k \frac{w - a_i}{1-\overline{a_i}w}$, where $\phi$ is real and the $a_i$'s are in $\D$. Blaschke Products satisfy:
\begin{enumerate}
\item Each of $\mathbb{D}, \mathbb{T},\text{and } \mathbb{\hat{C}}\setminus\overline{\mathbb{D}}$ is totally invariant under ${\mathcal B}$. That is, if $S$ is any of the three sets above, then $w \in S$ if and only if ${\mathcal B}(w)\in S$.
\item Blaschke Products are symmetric across $\mathbb{T}$. That is, ${\mathcal B}\left(\frac{1}{\bar{w}}\right) = \frac{1}{\overline{{\mathcal B}(w)}}$ for all $w\in\mathbb{\hat{C}}.$
\end{enumerate}

\begin{lemma}\label{LEM:ATTRACTING}
If the map $\bzt$ has a fixed point $\WD$ in $\D$, then $\WD$ must be attracting, and the orbit of each point in $\D$ must converge to $\WD$. In particular, $\WD$ must be the only fixed point in $\D$. The analogous result follows for a fixed point $\WCMD$ in $\CMD$.
\end{lemma}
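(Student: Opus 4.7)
The plan is to exploit that $B_{z,t}$ is a finite Blaschke product of degree $k \geq 2$ and in particular a holomorphic self-map of $\D$ that is \emph{not} a M\"obius automorphism of the disk; everything then runs through the Schwarz--Pick Lemma applied with respect to the Poincar\'e metric $d_{\rm hyp}$ on $\D$.

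To establish that $w_{\D}$ is attracting, I would apply the infinitesimal Schwarz--Pick inequality
\begin{align*}
\frac{|B_{z,t}'(w)|\,(1-|w|^2)}{1-|B_{z,t}(w)|^2} \leq 1,
\end{align*}
which is strict on all of $\D$ whenever $B_{z,t}$ is not a disk automorphism. Evaluating at the fixed point, where $|B_{z,t}(w_{\D})| = |w_{\D}|$, this immediately reduces to $|B_{z,t}'(w_{\D})| < 1$, i.e.\ $w_{\D}$ is attracting.

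To obtain convergence of every orbit in $\D$ and uniqueness of $w_{\D}$, I would fix an arbitrary $w_0 \in \D$ and set $r := d_{\rm hyp}(w_0, w_{\D})$. Since $B_{z,t}$ fixes $w_{\D}$ and is distance non-increasing in $d_{\rm hyp}$ by the distance form of Schwarz--Pick, the closed hyperbolic ball $\overline{D}_r(w_{\D})$ is forward-invariant under $B_{z,t}$, and it is a compact subset of $\D$. On this compact set, continuity of the hyperbolic derivative together with the strict inequality above yields a uniform contraction constant $\lambda_r \in (0,1)$ with
\begin{align*}
d_{\rm hyp}(B_{z,t}(u), B_{z,t}(v)) \leq \lambda_r\, d_{\rm hyp}(u,v) \qquad \text{for all } u,v \in \overline{D}_r(w_{\D}).
\end{align*}
Iterating this estimate along the orbit of $w_0$ gives $d_{\rm hyp}(B_{z,t}^n(w_0), w_{\D}) \leq \lambda_r^n\, r \to 0$, which proves convergence. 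Any second fixed point $w' \in \D$ would then satisfy $d_{\rm hyp}(w', w_{\D}) = d_{\rm hyp}(B_{z,t}^n(w'), w_{\D}) \to 0$, forcing $w' = w_{\D}$ and giving uniqueness.

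The analogous statement on $\CMD$ follows from the symmetry property $B_{z,t}(1/\bar w) = 1/\overline{B_{z,t}(w)}$ recorded at the start of this section: the anti-holomorphic involution $\tau(w) = 1/\bar w$ is a bijection $\CMD \to \D$ that commutes with $B_{z,t}$, so it conjugates the restriction $B_{z,t}|_{\CMD}$ to $B_{z,t}|_{\D}$, and the three conclusions transport verbatim (replacing $d_{\rm hyp}$ by the push-forward Poincar\'e metric on $\CMD$ if desired). I expect the only real subtlety in writing this up to be the clean extraction of the uniform contraction constant $\lambda_r$ on $\overline{D}_r(w_{\D})$; this is a routine compactness argument once the strict infinitesimal Schwarz--Pick inequality is in hand, and it is the single place where one must be slightly careful.
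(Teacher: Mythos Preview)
Your proof is correct and takes essentially the same approach as the paper: both appeal to the Schwarz--Pick principle, using that a degree $k \geq 2$ Blaschke product cannot be a disk automorphism, and both invoke the symmetry across $\mathbb{T}$ for the $\CMD$ case. The paper simply cites the needed version of the Schwarz Lemma (including the global convergence statement) as a black box, whereas you unpack it via the hyperbolic metric and a compactness argument for the uniform contraction constant.
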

\begin{proof}
An easy adaptation of the classical Schwarz Lemma states that if $f: \mathbb{D} \rightarrow \mathbb{D}$ is a holomorphic map with fixed point $f(u) = u$ for some $u\in\mathbb{D}$, then  $|f^{\prime}(u)| \le 1$. Moreover, $|f^{\prime}(u)| = 1$ if and only if $f$ is an automorphism of $\mathbb{D}$ (that is, 1-1 and onto). Meanwhile, if $|f^{\prime}(u)|<1$, then for every $w\in\D$, we have $f^{n}(w)\rightarrow u$ as $n\rightarrow\infty$.

In our setting, $\bzt$ is a degree $k\geq2$ rational map for which $\D$ is
totally invariant, so its restriction to $\mathbb{D}$ cannot be an automorphism.
The result for $\WD$ follows.  
Meanwhile, the result for $\WCMD$ follows by symmetry of the Blaschke Product across $\mathbb{T}$.
\end{proof}

\begin{proposition}\label{PROP:FIXED_POINT_MULTIPLICITY}
$\bzt(w)$ has a fixed point $w_{\bullet}$ of multiplicity greater than 1 if and only if $(\phi,t)$ is on the $\phi_e$ curve.
\end{proposition}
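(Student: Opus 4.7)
A fixed point $w_\bullet$ of $B_{\phi,t}$ has multiplicity greater than $1$ precisely when it satisfies the two equations
\begin{equation*}
B_{\phi,t}(w_\bullet)=w_\bullet \quad\text{and}\quad B_{\phi,t}'(w_\bullet)=1.
\end{equation*}
The plan is to extract both a formula for $w_\bullet$ depending only on $(k,t)$ (from the derivative condition) and, separately, a formula for the argument of $z$ in terms of $w_\bullet$ (from the fixed-point condition); comparison with \eqref{eq:kappa} then yields the biconditional.

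First I would compute $B_{\phi,t}'(w)$ by logarithmic differentiation:
\begin{equation*}
\frac{B_{\phi,t}'(w)}{B_{\phi,t}(w)}=\frac{k(1-t^2)}{(w+t)(1+wt)}.
\end{equation*}
At a fixed point we have $B_{\phi,t}(w_\bullet)=w_\bullet$, so the derivative condition $B_{\phi,t}'(w_\bullet)=1$ becomes
\begin{equation*}
k\,w_\bullet(1-t^2)=(w_\bullet+t)(1+w_\bullet t),
\end{equation*}
which rearranges, after cancellation, to the quadratic
\begin{equation*}
t\,w_\bullet^{2}+\bigl[(k+1)t^{2}-(k-1)\bigr]w_\bullet + t=0.
\end{equation*}
Observe that $z$ does \emph{not} appear in this equation; its coefficients are real and depend only on $(k,t)$. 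Solving via the quadratic formula produces exactly the expression for $w_\bullet$ displayed in \eqref{eq:kappa}; in particular the two roots have product $1$. I would then note that the discriminant vanishes at $t=t_c=(k-1)/(k+1)$ and at $t=1$, is negative on $(t_c,1)$ (so $w_\bullet\in\mathbb{T}$ as a conjugate pair) and positive on $[0,t_c)$ (so $w_\bullet$ is real, with the two roots reciprocal).

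Next I would recover the value of $z$ associated to such a $w_\bullet$. Rewriting the fixed-point equation as
\begin{equation*}
z=w_\bullet\left(\frac{1+w_\bullet t}{w_\bullet+t}\right)^{\!k},
\end{equation*}
I would verify that for $t_c<t<1$ this $z$ lies on $\mathbb{T}$: since $|w_\bullet|=1$ and $t\in\mathbb{R}$, one has $|w_\bullet+t|=|1+w_\bullet t|$ directly, so $|z|=1$. Its argument is by definition $\phi_e(t)$ as given in \eqref{eq:kappa}. For $t=t_c$ the quadratic degenerates to $(w_\bullet\mp1)^2=0$, giving $z=\pm1$ consistent with $\phi_e(t_c)=0$ and $\phi_e(1)=\pi$ in the limits. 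For $0\le t<t_c$ the roots $w_\bullet$ are real and reciprocal, and a short computation shows $|z|\neq 1$, so no such $(\phi,t)\in\mathbb{T}\times[0,t_c)$ corresponds to a multiple fixed point.

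This establishes both directions: if $B_{\phi,t}$ has a fixed point of multiplicity $>1$, then $w_\bullet$ must solve the quadratic and $\phi$ must be given by the displayed formula, so $(\phi,t)$ lies on the $\phi_e$ curve; conversely, given $(\phi,t)$ on the $\phi_e$ curve, reading the construction backwards exhibits a $w_\bullet$ satisfying both $B_{\phi,t}(w_\bullet)=w_\bullet$ and $B_{\phi,t}'(w_\bullet)=1$. The main obstacle is essentially bookkeeping: ensuring that the algebraic manipulation in deriving the quadratic is reversible (no cancellation introduces spurious roots) and that the sign choice $\pm$ in the quadratic formula matches the sign convention used in defining $\phi_e$ in \eqref{eq:kappa}.
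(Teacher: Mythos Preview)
Your approach is essentially the same as the paper's: the paper clears denominators to work with the polynomial $\mathcal{P}_{z,t}(w)=z(w+t)^k-w(1+wt)^k$ and imposes $\mathcal{P}_{z,t}(w_\bullet)=\mathcal{P}_{z,t}'(w_\bullet)=0$, while you work directly with $B_{\phi,t}(w_\bullet)=w_\bullet$ and $B_{\phi,t}'(w_\bullet)=1$; both routes lead immediately to the same equation $k w_\bullet(1-t^2)=(w_\bullet+t)(1+w_\bullet t)$ and hence to the quadratic in \eqref{eq:kappa}, after which one solves for $z$ from the fixed-point relation. You supply more detail than the paper (the discriminant analysis, the verification that $|z|=1$ when $t_c<t<1$, and the discussion of $t<t_c$), which is fine---just be aware that your assertion ``a short computation shows $|z|\neq 1$'' for $0<t<t_c$ deserves a line of justification (e.g., the two real roots $w_\bullet$ are reciprocals, so the two resulting values of $z$ are real reciprocals, and one checks they are not $\pm 1$).
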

\begin{proof}
Let $\mathcal{P}_{z,t}(w):=z(w+t)^k-w(1+wt)^k$. Then a point $w_\bullet$ is a fixed point of $\bzt$ and has multiplicity greater than 1 if and only if $\pzt(w_\bullet)=0$ and $\pzt^{\prime}(w_\bullet)=0$. Solving these two equations, we find that $w_\bullet$ must satisfy 
$$kw_{\bullet}\left(\frac{1-t^2}{(w_\bullet+t)(1+w_\bullet t)}\right) = 1.$$ Solving yields the two solutions stated in the Theorem A: 
$$w_{\bullet}=\frac{(k+1)t^2-(k-1) \pm \sqrt{\left( k+1 \right) ^{2}{t}^{4}-2\, \left( {k}^{2}+1 \right) {t}^{2}+
 \left( k-1 \right) ^{2} }}{2t}.$$ 
We can use the condition that
$w_{\bullet} = z\left(\frac{w_{\bullet}+t}{1+w_{\bullet}t}\right)^k$ to find that 
$z = w_{\bullet}\left(\frac{1+w_{\bullet}t}{w_{\bullet} + t}\right)^k = e^{i\phi_e(t)}$  if and only if $\bzt$ has a fixed point of multiplicity greater than one.
  \end{proof}

\vspace{0.1in}

\begin{proof}[Proof of Proposition \ref{PROP:EXPANDING_BP}, Part {\rm (i)}:]
By symmetry of the Blaschke Product $\bzt$ across $\mathbb{T}$, if $\bzt$ has a
fixed point in $\mathbb{D}$ then it also has one in $\hat{\mathbb{C}} \setminus
\overline{\mathbb{D}}$ and the two fixed points have the same argument.  By
Lemma \ref{LEM:ATTRACTING}, both such fixed points would be attracting and they
would be the unique fixed points of $\bzt$ in $\mathbb{D}$ and
$\hat{\mathbb{C}} \setminus \overline{\mathbb{D}}$, respectively.

Starting with
initial parameter $(\phi,t) = (0,0)$ yields fixed points at $0$ and $\infty$. Let us
assume there is some $(\phi_0,t_0)$ below the $\phi_e$ curve such that all
fixed points lie on the circle. Since $\bzt$ is a rational function with
complex coefficients, the fixed points vary continuously with its parameters.
We can pick a continuous path from $(0,0)$ to $(\phi_0,t_0)$ that lies below
the $\phi_e$ curve. However, the only way all fixed points can be on the circle
is if the fixed points off the circle collided on the circle, since the fixed
points off the circle have the same argument. This implies the path intersects
the $\phi_e$ curve, a contradiction.  \end{proof}

\begin{proposition}
For $\phit$ below the $\phi_e$ curve, the Julia set of $\bzt$ is $\T$.
\label{prop:julia}
\end{proposition}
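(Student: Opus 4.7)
The plan is to characterize the Fatou set $F(B_{\phi,t})$ directly and show it equals $\mathbb{D}\cup\CMD$, which immediately gives $J(B_{\phi,t})=\mathbb{T}$. By Proposition \ref{PROP:EXPANDING_BP}(i) (just established), for $(\phi,t)$ below the $\phi_e$ curve there are fixed points $w_\mathbb{D}\in\mathbb{D}$ and $w_\CMD\in\CMD$. Lemma \ref{LEM:ATTRACTING} then says both fixed points are attracting and that every orbit in $\mathbb{D}$ (respectively $\CMD$) converges to $w_\mathbb{D}$ (respectively $w_\CMD$). Thus $\mathbb{D}$ and $\CMD$ each lie entirely in a single basin of attraction, hence in the Fatou set. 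This proves the easy inclusion $J(B_{\phi,t})\subseteq\mathbb{T}$.

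For the reverse inclusion, I would argue by contradiction: suppose some Fatou component $U$ contains a point of $\mathbb{T}$. Since $U$ is open and connected, it must meet both $\mathbb{D}$ and $\CMD$. The family $\{B_{\phi,t}^n\}$ is normal on $U$, so along any subsequence $n_k\to\infty$ the iterates converge uniformly on compact subsets of $U$ to a holomorphic limit $\psi:U\to\hat{\mathbb{C}}$. On $U\cap\mathbb{D}$ the orbits all tend to $w_\mathbb{D}$ by Lemma \ref{LEM:ATTRACTING}, so $\psi\equiv w_\mathbb{D}$ on the open set $U\cap\mathbb{D}$; by the identity theorem $\psi\equiv w_\mathbb{D}$ on all of $U$. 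The same reasoning applied to $U\cap\CMD$ forces $\psi\equiv w_\CMD$ on $U$. Since $w_\mathbb{D}\neq w_\CMD$ (one lies in $\mathbb{D}$, the other in $\CMD$), this is a contradiction. Hence no Fatou component meets $\mathbb{T}$, so $\mathbb{T}\subseteq J(B_{\phi,t})$, which combined with the first inclusion yields $J(B_{\phi,t})=\mathbb{T}$.

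The only genuinely nontrivial point is the identity-theorem step, and it hinges on having two \emph{distinct} attracting fixed points, one on each side of $\mathbb{T}$; this is precisely what was arranged in Proposition \ref{PROP:EXPANDING_BP}(i), and is the reason the hypothesis ``below the $\phi_e$ curve'' is essential (on or above the curve the two candidate fixed points either collide on $\mathbb{T}$ or disappear from $\mathbb{D}\cup\CMD$, breaking the argument). No further complex-dynamics machinery, and in particular no appeal to \cite{PUJALS}, is required.
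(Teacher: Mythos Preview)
Your proof is correct and follows essentially the same approach as the paper: both show $\mathbb{D}\cup\CMD\subset F(B_{\phi,t})$ via Lemma~\ref{LEM:ATTRACTING}, then derive a contradiction from a hypothetical Fatou neighborhood of a point on $\mathbb{T}$ by using that iterates converge to $w_\mathbb{D}$ on one side and $w_\CMD$ on the other. The paper phrases the contradiction as ``the limit is discontinuous, hence not holomorphic,'' while you invoke the identity theorem explicitly; these are the same argument, with yours being slightly more precise (one minor slip: normality gives a convergent \emph{subsequence}, not convergence along every subsequence, but this is harmless since the full sequence already converges pointwise on $U\cap\mathbb{D}$ and $U\cap\CMD$).
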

\begin{proof}
We proceed by proving that the Fatou set of $\bzt$ is $\C\setminus\T$. We will first show that for any open neighborhood $U\subset\C\setminus\T$, the family of iterates $\bzt^{1}(U), \bzt^{2}(U), \dots$ is \textit{normal}. That is, for any infinite subsequence of the family, there exists a further infinite sub-subsequence that converges to a holomorphic map on $U$.			

Suppose $U\subset\D$, and let $w_\D$ be the fixed point of $\bzt$ that is in $\D$. By Lemma \ref{LEM:ATTRACTING}, it follows that $\bzt^{n}(w)\rightarrow w_\D$ for any $w\in\D$. Thus, the restriction of $\bzt$ to any such $U\subset D$ produces a normal family. Using the symmetry of Blaschke Products, the analogous result follows for $U\subset\CMD$.

Suppose there exists a point $w\in\T$ and an open neighborhood $U$ around $w$ on which the iterates of $\bzt$ form a normal family. As already proved, we have $\bzt^{n}(U\cap\D)\rightarrow w_\D$, while $\bzt^{n}(U\cap\CMD)\rightarrow w_{\CMD}$. Thus, the family produced by $U$ converges to a discontinuous function and therefore not holomorphic, contradiction. Thus, no point in $\T$ is in the Fatou set, so $\T$ is the Julia set.
\end{proof}

\begin{proof}[Proof of Proposition \ref{PROP:EXPANDING_BP}, Part {\rm (ii)}:]
The following classical theorem that can be found in \cite[Theorem 14.1]{MILNOR}.
\begin{expansion}
For a rational map $f : \hat{\mathbb{C}}\rightarrow\hat{\mathbb{C}}$ of degree $d\geq 2$ , the following
two conditions are equivalent:
\begin{itemize}
\item[{\rm (i)}] $f$ is expanding on its Julia set $J$ (there exists $c>0$ and $\lambda>1$ such that $|\left(f^{n}\right)^{\prime}(w)| \geq c\lambda^n$ for all $w\in J$).
\item[{\rm (ii)}] The forward orbit of each critical point of $f$ converges towards some
attracting periodic orbit.
\end{itemize}
\end{expansion}

The Julia set of $\bzt$ is $\T$, by Proposition \ref{prop:julia}, so it
suffices to check that (2) holds.  By Part (i) of Proposition
\ref{PROP:EXPANDING_BP}, $\bzt$ has fixed points $\WD\in\D$ and $\WCMD\in\CMD$.
The only critical points of $\bzt$ are at $w=-t$ and $w=-\frac{1}{t}$, which
are in $\D$ and $\CMD$, respectively, since $t \in [0,1)$. By Lemma
\ref{LEM:ATTRACTING}, the orbits of these critical points converge to $\WD$ and
$\WCMD$, respectively.  
\end{proof}

\section{Proof of Theorem A}
\label{SEC:PROOF_THMA}

Throughout the rest of the paper we will focus on the dynamics of $\bzt: \mathbb{T} \rightarrow \mathbb{T}$.
Let us write the mapping in angular form in terms of  $\theta = {\rm Arg}(w)$ and $\phi = {\rm Arg}(z)$.  While it will be sufficient to consider $\phi \in [-\pi,\pi]$,
it will be helpful to allow $\theta \in \mathbb{R}$, so we consider the angular form of $\bzt$ as a ``lift'' to $\mathbb{R}$:
\begin{equation}
\btilde_{\phi,t,k} : \mathbb{R} \rightarrow \mathbb{R} \qquad \mbox{where} \qquad \btilde_{\phi,t,k}(\theta) = k\theta-2k\arctan\left(\frac{t\sin\theta}{1+t\cos\theta}\right) + \phi.
\label{eq:angular}
\end{equation}
\noindent
Note that for any $\phi,t,k$ and $\theta$ the lift satisfies
\begin{align}\label{EQN:DEGK}
\btilde_{\phi,t,k}(\theta + 2\pi) = \btilde_{\phi,t,k}(\theta) + 2\pi k
\end{align}
reflecting the fact that $B_{z,t,k} : \mathbb{T} \rightarrow \mathbb{T}$ is a degree $k$ mapping of the circle.

As usual, we will drop the parameter $k$ when it is clear from the context, writing $\btilde_{\phi,t} \equiv \btilde_{\phi,t,k}$.

\begin{remark}
\label{rmk:restate}
We can restate Proposition $\ref{lyz}$ as the following: \
\begin{itemize}
\item[{\rm (i)}] $e^{i\phi}$ is a Lee--Yang zero of $\Gamma_n^k$ if and only if $\btilde_{\phi,t,k}^{n}(\phi) \equiv \pi\mod 2\pi$.
\item[{\rm (ii)}] $e^{i\phi}$ is a Lee--Yang zero of $\widehat{\Gamma}_n^k$ if and only if $\btilde_{\phi,t,k+1} \circ \btilde_{\phi,t,k}^{n-1}(\phi) \equiv \pi\mod 2\pi$.
\end{itemize}
\end{remark}
\begin{remark}
For every $n>0$, the modulus of the derivative of $\bphittilde^{n}(\theta)$ with respect to $\theta$ coincides with that of $\bzt^{n}(w)$ with respect to $w$. 
In particular, if $(\phi,t)$ is below the $\phi_e$ curve, there exists $c>0$ and $\lambda>1$ such that $(\bphittilde^{n})^\prime(\theta) \geq c\lambda^n$ for all $\theta\in
\mathbb{R}$ and $n>0$.
\label{rmk:expansionforA}
\end{remark}

\begin{lemma}
If $\phi_1\leq\phi_2$ and $\theta_1\leq\theta_2$, then $\btilde_{\phi_1,t}^{n}(\theta_1)\leq \btilde_{\phi_2,t}^{n}(\theta_2)$ for any $n\in\mathbb{N}$.
\label{lm:mono}
\end{lemma}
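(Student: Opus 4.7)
The plan is to reduce the lemma to two elementary monotonicity properties of the one-step map $\btilde_{\phi,t}$ and then iterate by induction on $n$. Specifically, I will establish:
\begin{itemize}
\item[(a)] For every fixed $\phi$ and $t\in[0,1)$, the function $\theta \mapsto \btilde_{\phi,t}(\theta)$ is strictly increasing on $\mathbb{R}$.
\item[(b)] For every fixed $\theta$ and $t$, the function $\phi \mapsto \btilde_{\phi,t}(\theta)$ is strictly increasing.
\end{itemize}
Property (b) is immediate from the defining formula \eqref{eq:angular}, since $\phi$ enters as an additive term. Property (a) can be verified in one of two equivalent ways. The conceptual route is to observe that $B_{z,t,k}$ is a finite Blaschke product of degree $k$ with all zeros in $\mathbb{D}$, so its restriction to $\mathbb{T}$ is an orientation-preserving $k$-fold covering map of the circle; any lift of such a covering is strictly increasing. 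The computational route is to differentiate \eqref{eq:angular} directly: a short calculation gives
\begin{equation*}
\btilde_{\phi,t}'(\theta) = \frac{k(1-t^2)}{1+2t\cos\theta + t^2},
\end{equation*}
which is strictly positive for every $t\in[0,1)$ and every $\theta\in\mathbb{R}$. Either way, (a) is secured.

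Given (a) and (b), the base case $n=1$ of the lemma follows from a two-step comparison: under the assumptions $\phi_1\le\phi_2$ and $\theta_1\le\theta_2$,
\begin{equation*}
\btilde_{\phi_1,t}(\theta_1) \;\le\; \btilde_{\phi_1,t}(\theta_2) \;\le\; \btilde_{\phi_2,t}(\theta_2),
\end{equation*}
where the first inequality uses (a) and the second uses (b). For the inductive step, suppose the inequality $\btilde_{\phi_1,t}^{n}(\theta_1) \le \btilde_{\phi_2,t}^{n}(\theta_2)$ has already been established. Setting $\theta_1' := \btilde_{\phi_1,t}^{n}(\theta_1)$ and $\theta_2' := \btilde_{\phi_2,t}^{n}(\theta_2)$, one has $\theta_1' \le \theta_2'$, and applying the base case to the pair $(\phi_1,\theta_1'),(\phi_2,\theta_2')$ yields
\begin{equation*}
\btilde_{\phi_1,t}^{n+1}(\theta_1) = \btilde_{\phi_1,t}(\theta_1') \;\le\; \btilde_{\phi_2,t}(\theta_2') = \btilde_{\phi_2,t}^{n+1}(\theta_2),
\end{equation*}
completing the induction.

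There is no real obstacle here: the only substantive content of the proof is the monotonicity of $\btilde_{\phi,t}$ in its dynamical variable, and this is either a structural fact about finite Blaschke products on the circle or, equivalently, a one-line derivative computation. The additive dependence on $\phi$ makes the parameter monotonicity automatic, so the induction combines the two monotonicities with no further difficulty.
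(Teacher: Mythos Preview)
Your proof is correct and follows essentially the same approach as the paper: establish monotonicity of $\btilde_{\phi,t}$ in $\theta$ via the derivative computation $\btilde_{\phi,t}'(\theta) = k(1-t^2)/(1+2t\cos\theta+t^2) > 0$, note the trivial monotonicity in $\phi$, and then induct. The paper's version is terser, leaving the induction implicit, while you spell it out and also offer the structural Blaschke-product explanation as an alternative to the derivative computation.
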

\begin{proof}
Note $\btilde_{\phi,t}(\theta)$ increases in $\phi$ trivially. Moreover,
\begin{equation}
\frac{d \btilde_{\phi,t,k}}{d\theta} = k\left(\frac{1-t^2}{1+2t\cos\theta + t^2}\right) > 0,
\end{equation}
so $\bphittilde(\theta)$ increases in $\theta$ as well. By induction, the assertion can be proved for $n>1$.
\end{proof}

\subsection{Proof of Parts (i) and (ii)}
We will focus on the proof for the rooted Cayley Tree.  Then, at the end of the subsection, explain how to adapt it for the unrooted version.

We will show that $\mu_{t}((\phi_1,\phi_2)) > 0$ for any interval $(\phi_1,\phi_2)$ not intersecting $(-\phi_e(t),\phi_e(t))$.
Because the Lee--Yang zeros are symmetric about $\phi = 0$, we can suppose that $0 < \phi_1 < \phi_2$.
The limiting measure $\mu_t$ is given by
\begin{align*}
\mu_{t}((\phi_1,\phi_2)) &= \lim_{n\rightarrow\infty}\frac{\# \text{ of LY zeros in } (\phi_1,\phi_2) \text{ for }\Gamma_n^k}{\text{total }\# \text{ of LY zeros for }\Gamma_n^k}\\ 
&= \lim_{n\rightarrow\infty}\frac{\frac{1}{2\pi}(\btilde^{n}_{\phi_2,t}(\phi_2)-\btilde^{n}_{\phi_1,t}(\phi_1))}{(k^{n+1}-1)\cdot (k-1)^{-1}}.
\end{align*}
The second equality holds since the numerators in the two limits differ by at most
1 for every $n>$~$0$; see Remark \ref{rmk:restate}. Therefore, it suffices to show that 
$\btilde^{n}_{\phi_2,t}(\phi_2)-\btilde^{n}_{\phi_1,t}(\phi_1)$
grows exponentially at rate $k$.

Since $(\phi_1,t)$ is below the $\phi_e$ curve, Part (ii) of Proposition \ref{PROP:EXPANDING_BP} (see also Remark \ref{rmk:expansionforA}) gives constants
$c>0$ and $\lambda>1$ such that
\begin{align*}
\btilde^{n}_{\phi_1,t}(\phi_2) - \btilde^{n}_{\phi_1,t}(\phi_1) &= \int_{\phi_1}^{\phi_2} \left(\btilde^{n}_{\phi_1,t}\right)^\prime
(\theta) d\theta 
\geq (\phi_2-\phi_1)c\lambda^n.
\end{align*}
In particular, there exists some $N > 0$ such that $\btilde^{N}_{\phi_1,t}(\phi_2) - \btilde^{N}_{\phi_1,t}(\phi_1) > 2\pi$.  Then, for any 
$n \geq N$ we have
\begin{align}\label{EQN:EXPGROWTH}
\btilde^{n}_{\phi_2,t}(\phi_2)-\btilde^{n}_{\phi_1,t}(\phi_1) &\geq 
\btilde^{n}_{\phi_1,t}(\phi_2)-\btilde^{n}_{\phi_1,t}(\phi_1)  \nonumber \\
&\geq \btilde^{n-N}_{\phi_1,t}\left(\btilde^N(\phi_1) + 2\pi\right)-\btilde^{n-N}_{\phi_1,t}\left(\btilde^N(\phi_1)\right) = 2\pi k^{n-N},
\end{align}
with the first two inequalities given by Lemma \ref{lm:mono} and the last equality coming from the fact that $\bzt$ has degree $k$; see Equation (\ref{EQN:DEGK}).
Therefore, $\mu_{t}((\phi_1,\phi_2)) > 0$.

\vspace{0.1in}
In the case of the unrooted Cayley Tree we have 
\begin{align*}
\widehat{\mu}_{t}((\phi_1,\phi_2)) &= \lim_{n\rightarrow\infty}\frac{\# \text{ of LY zeros in } (\phi_1,\phi_2) \text{ for }\widehat{\Gamma}_n^k}{\text{total }\# \text{ of LY zeros for }\widehat \Gamma_n^k}\\
&= \lim_{n\rightarrow\infty}\frac{\frac{1}{2\pi}\left(\btilde_{\phi_2,t,k+1} \circ \btilde^{\circ n-1}_{\phi_2,t,k}(\phi_2)-\btilde_{\phi_1,t,k+1} \circ \btilde^{n-1}_{\phi_1,t,k}(\phi_1)\right)}{(k^{n+1}+k-2)\cdot (k-1)^{-1}}.
\end{align*}
As for the rooted tree, we need to show that the numerator grows exponentially at rate $k$.  Again, using Lemma \ref{lm:mono}, it suffices to
prove it for the smaller quantity
\begin{align*}
\btilde_{\phi_1,t,k+1} \circ \btilde^{n-1}_{\phi_1,t,k}(\phi_2)-\btilde_{\phi_1,t,k+1} \circ \btilde^{n-1}_{\phi_1,t,k}(\phi_1).
\end{align*}
However, this follows from Equation (\ref{EQN:EXPGROWTH}) and the fact that there is a uniform constant $A$ such that
\begin{align*}
\frac{d \btilde_{\phi,t,k+1}}{d\theta} = (k+1)\left(\frac{1-t^2}{1+2t\cos\theta + t^2}\right) > A > 0.
\end{align*}
\noindent
\qed (Theorem A, parts (i) and (ii).)

\subsection{Proof of Part (iii) of Theorem A}
For the rooted Cayley Tree, Part (iii) of Theorem A was proved by Barata-Marchetti \cite{BARATAMARCHETTI}.  We include their proof here for completeness and we also explain the
adaptation needed for the full Cayley Tree.

Let us start with the rooted tree.  Since the Lee--Yang zeros are symmetric under $\phi \mapsto -\phi$, it suffices to prove that 
for any $t>t_c$ there are no Lee--Yang zeros for $\Gamma_n$ at any angle $\phi_0 \in [0,\phi_e(t))$.
It suffices to show that for any such $t$, $\phi_0$, and $n$ we have that $\btilde_{\phi_0,t}^{n}\left(\phi_0\right)<~\pi$.

The proof is illustrated by Figure \ref{proof2}. Here, the graph of $\btilde_{\phi_0,t}(\theta)$ is depicted by the blue curve and the diagonal is depicted by the red line. By the definition of $\phi_e$, parameter $\phi_0=\phi_e$ refers to the case when the blue curve lies tangent to the red curve at $\theta_\bullet<\pi$, where $\theta_\bullet = \text{Arg}(w_\bullet)$ with $w_\bullet$ given in (\ref{eq:kappa}). However, since $0<\phi_0<\phi_e (t)$ and $\btilde_{\phi_0,t}(\theta)$ is increasing and concave up for $\theta\in(0,\pi)$, there must exist some $\theta_*$ with $\theta_*<\theta_\bullet<\pi$ such that the iterates (black staircase) of $\phi_0$ under $\btilde_{\phi_0,t}$ converge to $\theta_*$, as shown in the figure.
\begin{figure}[!h]
	\begin{center}
		\scalebox{0.9}{
\begin{picture}(0,0)%
\includegraphics{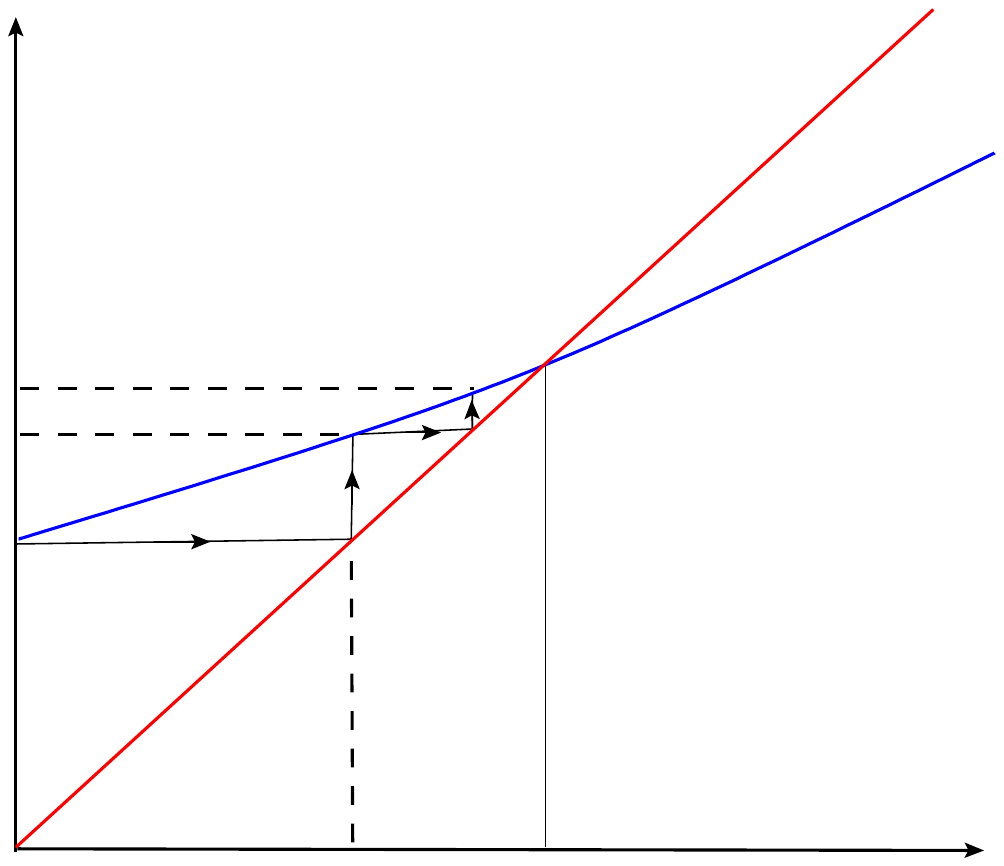}%
\end{picture}%
\setlength{\unitlength}{3947sp}%
\begingroup\makeatletter\ifx\SetFigFont\undefined%
\gdef\SetFigFont#1#2#3#4#5{%
  \reset@font\fontsize{#1}{#2pt}%
  \fontfamily{#3}\fontseries{#4}\fontshape{#5}%
  \selectfont}%
\fi\endgroup%
\begin{picture}(5540,4378)(3346,-5539)
\put(3361,-3301){\makebox(0,0)[lb]{\smash{{\SetFigFont{12}{14.4}{\familydefault}{\mddefault}{\updefault}{\color[rgb]{0,0,0}$\btilde_{\phi_0,t}(\phi_0)$}%
}}}}
\put(4291,-1359){\makebox(0,0)[lb]{\smash{{\SetFigFont{12}{14.4}{\familydefault}{\mddefault}{\updefault}{\color[rgb]{0,0,0}$\btilde_{\phi_0,t}(\theta)$}%
}}}}
\put(5702,-5461){\makebox(0,0)[lb]{\smash{{\SetFigFont{12}{14.4}{\familydefault}{\mddefault}{\updefault}{\color[rgb]{0,0,0}$\phi_0$}%
}}}}
\put(6669,-5461){\makebox(0,0)[lb]{\smash{{\SetFigFont{12}{14.4}{\familydefault}{\mddefault}{\updefault}{\color[rgb]{0,0,0}$\theta_*$}%
}}}}
\put(8562,-5461){\makebox(0,0)[lb]{\smash{{\SetFigFont{12}{14.4}{\familydefault}{\mddefault}{\updefault}{\color[rgb]{0,0,0}$\theta$}%
}}}}
\put(3750,-3797){\makebox(0,0)[lb]{\smash{{\SetFigFont{12}{14.4}{\familydefault}{\mddefault}{\updefault}{\color[rgb]{0,0,0}$\phi_0$}%
}}}}
\put(3361,-3019){\makebox(0,0)[lb]{\smash{{\SetFigFont{12}{14.4}{\familydefault}{\mddefault}{\updefault}{\color[rgb]{0,0,0}$\btilde^{\circ 2}_{\phi_0,t}(\phi_0)$}%
}}}}
\end{picture}%
		}
	\end{center}
	\caption{Iterates (black staircase) of $\phi_0$ under $\btilde_{\phi_0,t}$ that approach $\theta_*$ for $0~<~\phi_0~<~\phi_e(t)$. Blue depicts the graph of $\btilde_{\phi_0,t}(\theta)$ and the diagonal is depicted in red.}
	\label{proof2}
\end{figure}

\vspace{0.1in}
Let us now consider how to adapt the proof to the unrooted Cayley Tree.  
To show that for $t_c < t < 1$ there are no Lee--Yang zeros for the unrooted tree
$\widehat{\Gamma}_n$ at any $\phi_0 \in [0,\phi_e(t))$, we must show that
$\left(\btilde_{\phi_0,t,k+1} \circ \btilde_{\phi_0,t,k}^{n-1}\right)\left(\phi_0\right)<~\pi$; see Proposition \ref{PROP:RENORM} (and also Remark \ref{rmk:restate}). 

Note that even though we will be discussing $\btilde$ indexed with both branching number $k$ and $k+1$, the functions $\theta_\bullet(t)$
and $\phi_e(t)$ used below will correspond only to Formula (\ref{eq:kappa}) with  branching number~$k$.
From the proof for the unrooted tree, we have for any $n$ that  $\btilde^{n-1}_{\phi_0,t,k}(\phi_0) < \theta_\bullet(t)$.
Lemma~\ref{lm:mono} gives that
\begin{align*}
\left(\btilde_{\phi_0,t,k+1} \circ \btilde_{\phi_0,t,k}^{n-1}\right)\left(\phi_0\right) \leq \btilde_{\phi_0,t,k+1}(\theta_\bullet(t)) \leq \btilde_{\phi_e(t),t,k+1}(\theta_\bullet(t)),
\end{align*}
where we have used that 
$0 \leq \phi_0 < \phi_e(t)$ in the second inequality.  Therefore, it suffices to show
that for all $t \in (t_c,1)$ we have that $\btilde_{\phi_e(t),t,k+1}(\theta_\bullet(t)) < \pi$.
An explicit calculation yields that
\begin{align*}
\btilde_{\phi_e(t),t,k+1}(\theta_\bullet(t)) 
= 2\pi-&2\arctan \left( \sqrt {{\frac {-{k}^{2}{t}^{2}-2k{t}^{2}+{k
}^{2}-{t}^{2}-2k+1}{ \left( k+1 \right) ^{2} \left( {t}^{2}-1
 \right) }}} \right) \\  &-2\arccos \left({\frac { \left( k+1 \right) {t}^{2}-k+1}{2t}} \right)
\end{align*}
and also 
\begin{align*}
\frac{d}{dt}  \left(\btilde_{\phi_e(t),t,k+1} (\theta_\bullet(t))\right) = 
2\,{\frac {k-1}{t\sqrt {- \left( k+1 \right) ^{2}{t}^{4}+ \left( 2\,{k}^{2}+2 \right) {t}^{2}- \left( k-1 \right) ^{2}}}},
\end{align*}
which is greater than $0$ for $t \in (t_c,1)$.  Meanwhile, one can directly check that $\btilde_{\phi_e(0),0,k+1}(\theta_\bullet(0)) = 0$ and that $\btilde_{\phi_e(1),1,k+1}(\theta_\bullet(1)) = \pi$.
The result follows.

\noindent
\qed (Theorem A, part (iii).)


\section{Basic Real Dynamics for $B_{z,t}$}
\label{SEC:BASIC_REAL}

Throughout this section we recall some basic facts about expanding circle maps,
i.e.  $f:S^1 \rightarrow S^1$ with constants $C>0, \lambda >1$ so that for all
$x \in S^1$ and for all integers $n\geq 1$, we have $$|(f^n)'(x)|\geq
C\lambda^n.$$ We will call $\lambda$ the \textit{expansion constant} of $f$.
Every expanding circle map is a covering map of degree $d$ with $|d|>1$. The
term absolute continuity will refer to absolute continuity with respect to the
Lebesgue measure on $S^1$.  Most of the theorems stated in this section were
originally proved in more general settings, but we state them in the more
narrow context that will be used here. We also provide references that are
suitable in our context, rather than the original ones.\par

\subsection{Measure of Maximal Entropy (MME) and Absolutely Continuous Invariant Measure (ACIM)}

Fix a continuous function $\phi: S^1 \rightarrow \mathbb{R}$. Let $\mathcal{M}_f$ be the space of all $f$-invariant Borel probability measures on $S^1$. The \textit{pressure} of the system with respect to $\phi$ is
$$P (\phi):=\sup_{\MEAS \in \mathcal{M}_f}\left(h_\MEAS(f)+\int_{S^1} \phi d\MEAS\right),$$
where $h_\MEAS (f)$ denotes the metric entropy of $\MEAS$ under $f$. The measure achieving this supremum (if it exists) is called an \textit{equilibrium state}, which a priori need not be unique. However, in the case when $f$ is expanding and $\phi$ is H\"{o}lder, the equilibrium state is unique:

\begin{uniquenessequilib}[\bf Ruelle \cite{RT}]
	Let $f$ be a $C^2$ expanding circle map and $\phi :S^1 \rightarrow \mathbb{R}$ be a H\"{o}lder function. Then there exists a unique equilibrium state for $\phi$.
\end{uniquenessequilib}

Moreover, we have the following inequality:

\begin{Ruelleineq}{\bf \cite{RI}}
	Let $f$ be a $C^2$ expanding circle map. Suppose $\MEAS \in \mathcal{M}_f$ is ergodic, then
	$$h_{\MEAS}(f) \leq \chi_{\MEAS}(f),$$
	where $\chi_{\MEAS}(f)$ is the Lyapunov exponent of $(f,\MEAS)$, i.e.
	$$\chi_{\MEAS}(f):=\int_{S^1} \log |f'(x)| d\MEAS.$$
\end{Ruelleineq}

As an immediate consequence, we have

\begin{corollary}\label{COR:PRESSURE_LEQ0}
	If $\phi(x)=-\log |f'(x)|$, then $P(\phi)\leq 0$.
\end{corollary}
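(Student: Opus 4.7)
The plan is to substitute $\phi(x) = -\log|f'(x)|$ directly into the variational definition of pressure and then reduce to Ruelle's Inequality. With this choice, $\int \phi \, d\MEAS = -\chi_\MEAS(f)$ for every $\MEAS \in \mathcal{M}_f$, so
\[
P(\phi) \;=\; \sup_{\MEAS \in \mathcal{M}_f} \bigl(h_\MEAS(f) - \chi_\MEAS(f)\bigr),
\]
and the corollary reduces to showing that $h_\MEAS(f) \leq \chi_\MEAS(f)$ for every $\MEAS \in \mathcal{M}_f$, not merely for the ergodic ones.

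For ergodic $\MEAS$ this is precisely the form of Ruelle's Inequality quoted above, so the only additional step is to extend the bound to non-ergodic invariant measures. I would do this by ergodic decomposition: write
\[
\MEAS \;=\; \int \MEAS_\omega \, d\mathbb{P}(\omega)
\]
with $\MEAS_\omega$ ergodic for $\mathbb{P}$-almost every $\omega$. The Lyapunov exponent is defined as the integral of the continuous function $\log|f'|$ against $\MEAS$, hence trivially affine in $\MEAS$, giving $\chi_\MEAS(f) = \int \chi_{\MEAS_\omega}(f)\, d\mathbb{P}(\omega)$. The metric entropy is also affine on $\mathcal{M}_f$ (a classical fact of Jacobs/Rokhlin type). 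Applying Ruelle's Inequality to each $\MEAS_\omega$ and integrating then yields
\[
h_\MEAS(f) - \chi_\MEAS(f) \;=\; \int \bigl(h_{\MEAS_\omega}(f) - \chi_{\MEAS_\omega}(f)\bigr) \, d\mathbb{P}(\omega) \;\leq\; 0,
\]
and taking the supremum over $\MEAS$ gives $P(\phi) \leq 0$.

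There is no real obstacle here: the argument is essentially a one-line reduction to Ruelle's Inequality once the substitution is made. The one point warranting care is the affinity of entropy under ergodic decomposition, which should be invoked explicitly so that the variational pressure, which ranges over all invariant measures, is legitimately bounded by an inequality initially stated only for ergodic ones.
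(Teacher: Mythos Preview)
Your proof is correct and follows the same idea as the paper, which simply states the corollary ``as an immediate consequence'' of Ruelle's Inequality without further argument. You are in fact more careful than the paper: since Ruelle's Inequality is stated there only for ergodic measures while the pressure is a supremum over all of $\mathcal{M}_f$, your use of ergodic decomposition (equivalently, affinity of entropy and of the Lyapunov integral) fills a small gap the paper leaves implicit.
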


Next, recall that the \textit{Hausdorff dimension} of a Borel probability measure $\MEAS$ on a compact manifold $M$ is defined as
$$\textrm{HD}(\MEAS):=\inf_{\substack{Y \subset M\\ \MEAS(Y)=1}}\textrm{HD}(Y).$$

\begin{Youngfmla}[\bf \cite{Y}, Section 4.4]
	Let $f$ be a $C^2$ expanding circle map, and let $\MEAS$ be a $f$-invariant ergodic probability measure. Then
	$$h_{\MEAS}(f)=\chi_{\MEAS}(f)\cdot {\rm HD}(\MEAS).$$
\end{Youngfmla}

Set $\phi=0$ in Ruelle's Theorem. Then the unique measure $\eta$ achieving the supremum $\sup_{\MEAS \in \mathcal{M}_f} h_{\MEAS}(f)$ is called the {\it measure of maximal entropy} (MME). The measure $\eta$ can be constructed using a pullback argument: For any $y\in S^1$,
$$\frac{1}{d^n} \sum_{y=f^n (x)}\delta_x \rightarrow \eta.$$
The measure $\eta$ satisfies $f^* \eta = d\cdot \eta$. By the Variational Principle, we have $h_{\eta}(f)=\log d$, the topological entropy of $f$.\par

Meanwhile, $f$ has an {\em absolutely continuous invariant measure} (ACIM) $\nu$, which actually has $C^1$ density with 
respect to Lebesgue measure, see \cite{Sacksteder, Krzy}.

\begin{proposition} Let $f$ be a $C^2$ expanding circle map. If $\MEAS \in \mathcal{M}_f$ is not absolutely continuous, then  
${\rm HD}(\MEAS)<1$. \end{proposition}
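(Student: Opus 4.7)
The plan is to combine the Ledrappier--Young formula with the rigidity portion of Ruelle's inequality, reducing the statement to Ledrappier's converse to Pesin's entropy formula for $C^2$ expanding maps. I will focus on the ergodic case, which is the substantive one; a general $\MEAS$ can be handled afterward via ergodic decomposition, using that non-absolute-continuity of $\MEAS$ is inherited by a set of positive measure of ergodic components.

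Assume then that $\MEAS$ is ergodic. Since $f$ is $C^2$ expanding, $|f'(x)|$ is uniformly bounded away from $1$ from above, so the Lyapunov exponent $\chi_\MEAS(f) = \int_{S^1} \log|f'| \, d\MEAS$ is finite and strictly positive. The Ledrappier--Young formula applied in this one-dimensional setting gives
$$\text{HD}(\MEAS) = \frac{h_\MEAS(f)}{\chi_\MEAS(f)},$$
while Ruelle's inequality supplies $h_\MEAS(f) \leq \chi_\MEAS(f)$, and hence $\text{HD}(\MEAS) \leq 1$. The content of the proposition is thus the assertion that equality in Ruelle's inequality forces absolute continuity of $\MEAS$.

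The key input is the classical rigidity theorem of Ledrappier (sometimes called the converse to Pesin's entropy formula): for a $C^2$ expanding circle map and an ergodic invariant measure $\MEAS$, the equality $h_\MEAS(f) = \chi_\MEAS(f)$ is equivalent to $\MEAS$ being absolutely continuous with respect to Lebesgue measure. The ``if'' direction is the standard Rokhlin/Pesin entropy formula, while the ``only if'' direction is the substantive one. Taking the contrapositive, if $\MEAS$ is not absolutely continuous then we have the strict inequality $h_\MEAS(f) < \chi_\MEAS(f)$, and the Ledrappier--Young formula immediately yields $\text{HD}(\MEAS) < 1$, completing the ergodic case.

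The principal obstacle is not a step to be proved from scratch here but rather invoking Ledrappier's converse with a precise citation; every other ingredient (Ruelle's inequality, the Ledrappier--Young formula, positivity of $\chi_\MEAS$ from expansion) has already been quoted in this section. A secondary care point, should the statement be used for non-ergodic $\MEAS$, is verifying that the $\eta$-positive measure set of ergodic components failing absolute continuity yields $\text{HD}(\MEAS) < 1$; this requires nothing beyond combining the ergodic case with the fact that an integral of absolutely continuous measures is absolutely continuous, applied contrapositively to the decomposition.
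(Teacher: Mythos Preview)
Your proof is correct and follows the same overall strategy as the paper: reduce to the ergodic case, apply the Ledrappier--Young formula ${\rm HD}(\MEAS)=h_\MEAS(f)/\chi_\MEAS(f)$, and show that non-absolute-continuity forces the strict inequality $h_\MEAS(f)<\chi_\MEAS(f)$ in Ruelle's inequality. The only difference is in how that strict inequality is justified. You invoke Ledrappier's converse to the Pesin entropy formula as an external black box, whereas the paper stays within the thermodynamic formalism already set up in the section: it observes that the ACIM $\nu$ satisfies $h_\nu(f)=\chi_\nu(f)$ (from ${\rm HD}(\nu)=1$ and Ledrappier--Young), hence $\nu$ is an equilibrium state for $-\log|f'|$, and then uses Ruelle's uniqueness of equilibrium states (together with Corollary~\ref{COR:PRESSURE_LEQ0}) to conclude that any $\MEAS\neq\nu$ must satisfy $h_\MEAS(f)<\chi_\MEAS(f)$. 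The paper's route has the advantage of being self-contained relative to the tools already quoted; your route is slightly more direct but introduces a citation the paper does not carry. Your explicit treatment of the non-ergodic case via ergodic decomposition is a nice touch that the paper leaves implicit.
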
\label{PROP:HD}

\begin{proof}
The ACIM $\nu$ satisfies ${\rm HD}(\nu)=1$. Therefore the Ledrappier--Young
Formula gives that $h_\nu (f)= \chi_\nu (f).$ On the other hand, Ruelle's
Inequality gives that for any ergodic $\MEAS \in \mathcal{M}_f$:
$$h_{\MEAS}(f)\leq \chi_\MEAS (f),$$ where equality is attained if and only if
$\MEAS$ is the equilibrium state for $\phi=-\log |f'(x)|$, by Corollary~\ref{COR:PRESSURE_LEQ0}. Therefore, the absolutely continuous $\nu$ is the
unique equilibrium state for $\phi$. Since $\MEAS$ is not absolutely continuous,  $\MEAS \neq \nu$, and hence $\MEAS$ must
satisfy $h_{\MEAS}(f)< \chi_\MEAS (f)$. Applying the Ledrappier--Young Formula
again, we have $\textrm{HD}(\rho)~<~1$.
\end{proof}

We will need a special version of Birkhoff's Ergodic Theorem for the ACIM
$\nu$.  For any continuous function $\phi \in C(S^1)$, let
$\phi_n:=\frac{1}{n}\sum_{k=0}^{n-1}\phi \circ f^k$, and
$\overline{\phi}:=\int_{S^1} \phi \ d\nu.$  Since, $\nu$ is ergodic, for $\nu$
a.e.\ $x$ (and hence Lebesgue a.e.\ $x$), we have $\lim_{n\rightarrow \infty}
\phi_n (x)=\overline{\phi}$.  In the proof of Theorem D we will need to give up some control on the sequence $\{\phi_n (x)\}$ in order to have more control on the exceptional set.  

\begin{specialergodictheorem}[Kleptsyn--Ryzhov--Minkov \cite{Kleptsyn}]
	Let $f$ be an expanding circle map and let $\phi \in C(S^1)$.  For any $\epsilon > 0$, the set
	$$K_{\phi,\epsilon} :=\{x \in S^1 : \limsup_{n\rightarrow \infty} |\phi_n (x)-\overline{\phi}|>\epsilon\}$$
	has Hausdorff dimension less than one.
\end{specialergodictheorem}

\begin{proof}
	We verify that $(f,\nu)$ satisfies the hypotheses from Theorem 1 of \cite{Kleptsyn}.
Since $\nu$ is absolutely continuous, it is the global SBR measure.
	Meanwhile, the work of Kifer \cite{Kifer} and Young \cite{Y1} implies that $\nu$ satisfies a Large Deviations Principle.
	For example, while Theorem 3 from \cite{Y1} is stated for the equilibrium state of an Axiom A attractors of a $C^2$ diffeomorphism, one can check that each 
	line of the proof adapts directly to expanding maps of the circle.
\end{proof}

\subsection{Conjugacies Between Expanding Circle Maps}

\begin{Shubthm}{\bf \cite{S}}
	For all $\epsilon >0$, there exists $\delta>0$ so that if $f, g$ are $C^1$ expanding circle maps with $||f-g||_{C^1} < \delta$, then $f$ is $C^0$ conjugate to $g$ via some $h$ with $||h-\textrm{id}||_{C^0}<\epsilon.$
\end{Shubthm}

\begin{SSthm}{\bf \cite{PS}}
	Two $C^r, r\geq 2$, expanding maps of the circle which are absolutely continuously conjugate are $C^r$ conjugate. If two Blaschke products are absolutely continuously conjugate on the unit circle $\mathbb{T}$, then they are conjugate by a M\"{o}bius transformation of the Riemann Sphere.
\end{SSthm}

By Shub's Theorem, any two $C^2$ expanding maps of the circle are topologically conjugate. If they are also $C^2$ close, then we have more control over the H\"{o}lder exponent of the conjugacy.

\begin{proposition}\label{PROP:HOLDERCONJ}
Let $f$ be a $C^2$ expanding circle map.  Then, for any $\epsilon >0$, there exists $\delta >0$ such that
if $g$ is another $C^2$ expanding circle map with $||f-g||_{C^2}<\delta$, then $f$ and $g$ are H\"{o}lder conjugate by $h$ with exponent  $1-\epsilon$ and multiplicative constant independent of $g$.
\end{proposition}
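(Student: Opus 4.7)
The plan is to combine Shub's Theorem with bounded distortion on Markov partitions. By Shub's Theorem, taking $\delta$ small forces the conjugacy $h$ between $f$ and $g$ to satisfy $\|h-\mathrm{id}\|_{C^0}<\epsilon_0$ for any prescribed $\epsilon_0>0$. The idea is then to control how $h$ distorts cylinder sizes: since $h$ sends level-$n$ Markov cylinders of $f$ to level-$n$ cylinders of $g$ and $C^2$-expansion gives bounded distortion, the ratio of the sizes of corresponding cylinders is controlled by a product of derivative ratios $f'/g'\circ h$ along the orbit, each close to $1$.

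Concretely, I would fix a Markov partition $\mathcal{P}=\{I_1,\dots,I_d\}$ for $f$, obtained for instance from the preimages under $f$ of a fixed point of $f$, so that $h(\mathcal{P})$ is a Markov partition for $g$. Let $\mathcal{P}_n=\bigvee_{k=0}^{n-1}f^{-k}\mathcal{P}$. Because $f$ is $C^2$ and uniformly expanding, it enjoys bounded distortion on $\mathcal{P}_n$ with constants depending only on $\|f\|_{C^2}$ and $\min|f'|$; the same holds for $g$ with constants uniform over a $C^2$-neighborhood of $f$. Using the conjugacy relation $g^k\circ h = h\circ f^k$, for any $x_\omega\in I_\omega\in\mathcal{P}_n$,
\begin{align*}
\frac{|h(I_\omega)|}{|I_\omega|} \asymp \frac{(f^n)'(x_\omega)}{(g^n)'(h(x_\omega))} = \prod_{k=0}^{n-1}\frac{f'(f^k(x_\omega))}{g'(h(f^k(x_\omega)))}.
\end{align*}
Set $\eta := \sup_y\bigl|\log(f'(y)/g'(h(y)))\bigr|$. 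Since $g\to f$ in $C^1$ and $h\to\mathrm{id}$ in $C^0$, $\eta\to 0$ as $\delta\to 0$, and hence $|h(I_\omega)|/|I_\omega|\leq C\,e^{n\eta}$ with $C$ independent of $g$.

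To convert this into a H\"older estimate, fix $\rho>0$ smaller than the injectivity scale for inverse branches of $f$ (and of $g$, uniformly in a $C^2$-neighborhood). For nearby $x,y$, let $n$ be the largest integer with $|f^n(x)-f^n(y)|\leq\rho$. Then $|f^n(x)-f^n(y)|\asymp 1$ (bounded above by $\rho$ and below by $\rho/\|f'\|_\infty$), and bounded distortion on the inverse branch of $f^n$ sending $[f^n(x),f^n(y)]$ back to $[x,y]$ yields $|x-y|\asymp(f^n)'(x)^{-1}$. Combined with $(f^n)'(x)\geq e^{n\lambda_-}$, where $\lambda_-=\inf\log|f'|>0$, this gives $n\leq\lambda_-^{-1}\log(1/|x-y|)+O(1)$. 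Similarly, because $\|h-\mathrm{id}\|_{C^0}$ is small, $|h(f^n(x))-h(f^n(y))|\asymp 1$, and bounded distortion for $g$ yields $|h(x)-h(y)|\asymp(g^n)'(h(x))^{-1}$. Combining these,
\begin{align*}
|h(x)-h(y)| \asymp \frac{1}{(g^n)'(h(x))} \leq \frac{C\,e^{n\eta}}{(f^n)'(x)} \asymp C\,e^{n\eta}\,|x-y| \leq C''\,|x-y|^{\,1-\eta/\lambda_-}.
\end{align*}
Choosing $\delta$ so that $\eta/\lambda_-<\epsilon$ gives exponent $\geq 1-\epsilon$, with the multiplicative constant $C''$ depending only on $f$ and the uniform distortion constants, hence independent of $g$. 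The main obstacle is making all the implicit constants (bounded distortion constants, the injectivity scale $\rho$, and especially $\eta$) uniform in $g$ over a small $C^2$-neighborhood of $f$. This reduces to the continuous dependence (in $C^0$) of Shub's conjugacy on $g\in C^2$, together with the uniformity of distortion bounds for $C^2$-close expanding maps, both of which are standard.
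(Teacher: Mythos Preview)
Your proposal is correct and follows essentially the same strategy as the paper: iterate until the interval $[x,y]$ reaches definite size, invoke the Denjoy-style distortion lemma to get $|x-y|\asymp (f^n)'(x)^{-1}$ and $|h(x)-h(y)|\asymp (g^n)'(h(x))^{-1}$, and then compare the two derivatives along the orbit. The one presentational difference is that the paper, instead of using your quantity $\eta=\sup_y|\log(f'(y)/g'(h(y)))|$ directly, locates a periodic point $p\in I$ (via the Intermediate Value Theorem applied to $f^N$) and phrases the derivative comparison as a bound on the ratio of averaged multipliers $\lambda_{\rm avg}(h(p))/\lambda_{\rm avg}(p)$; unwinding this via the chain rule gives exactly your product $\prod f'(f^k p)/g'(h(f^k p))$, so the two arguments are the same in substance. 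Your version is slightly more direct since it bypasses the periodic-point step, and your remark about Markov partitions in the first paragraph is unnecessary for the actual estimate (as you implicitly recognize in the second half).
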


The proof of Proposition \ref{PROP:HOLDERCONJ} requires the following Denjoy
style distortion estimate. A proof can be found in \cite{PS}.  Given a covering
map $f:S^1 \rightarrow S^1$, let $\tilde{f}:\mathbb{R}\rightarrow \mathbb{R}$
be any lift of $f$. 
\begin{lemma}\label{DENJOY}
	Let $f$ be a $C^2$ expanding circle map. For all $L>0$, there exists $M>0$, so that the following is true: for any interval $I\subset S^1$, let $N$ be the first iterate so that $|\tilde{f}^N(I)|>L$, then ${\rm dist}_I(f^N)<M$.
\end{lemma}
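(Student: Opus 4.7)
\medskip

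\noindent\textbf{Proof proposal.} The plan is a standard Koebe/Denjoy-style telescoping argument: exploit the $C^2$ assumption (so that $\log|f'|$ is Lipschitz) together with uniform expansion to show that the sum of the lengths of the intermediate forward images of $I$ is bounded, and then bound the distortion by this sum. First I will set up the notation $I_j := \tilde{f}^j(I)$ for $j = 0,1,\dots,N$. The hypothesis that $N$ is the \emph{first} iterate for which $|I_N|>L$ gives $|I_{N-1}|\le L$, and since $\|f'\|_{C^0}$ is finite, also $|I_N|\le \|f'\|_{C^0}\cdot L =: L'$.

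The first key step is a uniform geometric bound on the intermediate lengths. Applying the mean value theorem to $\tilde{f}^{N-j}$ on $I_j$ and using the expansion estimate $|(\tilde{f}^{N-j})'(\eta)|\ge C\lambda^{N-j}$ from the definition of an expanding map, I obtain
\begin{equation*}
|I_j| \;\le\; \frac{|I_N|}{C\lambda^{N-j}} \;\le\; \frac{L'}{C\lambda^{N-j}}, \qquad j = 0,1,\dots,N-1.
\end{equation*}
Summing the geometric series gives
\begin{equation*}
\sum_{j=0}^{N-1} |I_j| \;\le\; \frac{L'}{C(\lambda-1)},
\end{equation*}
a bound that depends only on $f$ and $L$, not on $N$ or $I$.

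The second step is the telescoping identity
\begin{equation*}
\log\frac{|(f^N)'(x)|}{|(f^N)'(y)|} \;=\; \sum_{j=0}^{N-1}\Bigl(\log|f'(f^j(x))| - \log|f'(f^j(y))|\Bigr), \qquad x,y\in I.
\end{equation*}
Because $f$ is $C^2$ and $|f'|$ is bounded away from zero, $\log|f'|$ is Lipschitz on $S^1$ with some constant $K$ depending only on $f$. Therefore each summand is bounded in absolute value by $K\cdot|f^j(x)-f^j(y)|\le K\cdot|I_j|$, and combining with the geometric bound above yields
\begin{equation*}
\left|\log\frac{|(f^N)'(x)|}{|(f^N)'(y)|}\right| \;\le\; K\sum_{j=0}^{N-1} |I_j| \;\le\; \frac{KL'}{C(\lambda-1)}.
\end{equation*}
Exponentiating gives the desired bound $\mathrm{dist}_I(f^N)\le M$ with $M := \exp\!\bigl(KL'/(C(\lambda-1))\bigr)$, which depends only on $f$ and $L$.

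I do not expect a serious obstacle here: the two inputs (geometric decay of backward lengths from expansion, and Lipschitz control of $\log|f'|$ from the $C^2$ hypothesis) combine cleanly. The only mild care is to work on the lift $\tilde{f}$ so that ``length'' is unambiguous when an iterate wraps around $S^1$, and to cover the trivial case $|I|>L$ (where $N=0$ and $\mathrm{dist}_I(f^0)=1$) separately.
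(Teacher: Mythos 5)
Your argument is correct and is precisely the standard Denjoy/Koebe distortion estimate that the paper simply cites from \cite{PS} rather than reproving: bound the intermediate lengths $|I_j|$ geometrically via expansion, telescope $\log|(f^N)'|$, and use the Lipschitz bound on $\log|f'|$ coming from the $C^2$ hypothesis. One trivial remark: the paper's ${\rm dist}_I$ already includes the logarithm, so your final exponentiation is unnecessary (and ${\rm dist}_I(f^0)=0$, not $1$); this does not affect the existence of $M$.
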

\noindent
Here, $\textrm{dist}_I(f)$ is the \textit{distortion} of $f$ on $I$ defined by
$$\textrm{dist}_I(f):=\sup_{x,y \in I}\log \frac{|f'(x)|}{|f'(y)|}.$$

Let $p$ be a periodic point of $f$ with period $n$. Define $\lambda_{avg}(p)=((f^n)'(p))^{1/n}.$ Remark that for all periodic points $p$ of an expanding circle map $f$ with expansion constant $\lambda$, we have $\lambda_{avg}(p) \geq \lambda$.

\begin{proof}[Proof of Proposition \ref{PROP:HOLDERCONJ}]
	For any $\alpha >1$, it is straightforward to see that there exists $\delta >0$ such that whenever $||f-g||_{C^2} < \delta$, we have for all periodic points $p$ of $f$ 
	$$\frac{1}{\alpha} \lambda_{avg}(p) \leq \lambda_{avg} (h(p)) \leq \alpha \lambda_{avg}(p),$$
	where $h$ is the $C^0$ conjugacy close to the identity given by Shub's Theorem.	We will show that $h$ is in fact H\"{o}lder continuous with exponent $\left(1-\frac{\log \alpha}{\log \lambda}\right)$, where $\lambda$ is the expansion constant of~$f$.

	Let $I \subset S^1$ be an interval. Then there exists an integer $N \geq 0$ so that $f^N$ is the first iterate so that $|\tilde{f}^N(I)|>4\pi$, where $\tilde{f}$ is any choice of lift of $f$. Then, since $h$ is a conjugacy between $f$ and $g$, we have that $g^N$ is also the first iterate so that $\tilde{g}^N(\tilde{h}(I))>4\pi$, with $\tilde{h}$ and $\tilde{g}$ suitable lifts of $h$ and~$g$. By the Intermediate Value Theorem, $f^N$ has a fixed point $p \in I$, corresponding to a periodic point of $f$ with period dividing $N$. Then $h(p) \in h(I)$ is a periodic point of $g$ of the same period. Then, using the Distortion Lemma, we have
\begin{align*}
|f^N(I)| &\asymp \lambda_{avg}(p)^N|I|,\\
|g^N(h(I))| &\asymp \lambda_{avg}(h(p))^N |h(I)|.
\end{align*}
(The asymptotic notation $\asymp$ means the ratio of the left and right sides is bounded from above and below by positive
constants, independent of $N$ (or, equivalently, independent of $|I|$, since $N$ depends on $|I|$).
Since $4\pi \leq |f^N(I)| \leq 4\pi K$ and $4\pi \leq |g^N(h(I))| \leq 4\pi K$, where
$$K=\max \left\{\max_{x\in S^1} f'(x),\max_{x\in S^1} g'(x) \right\},$$ we have $|f^N(I)|\asymp |g^N(h(I))| \asymp 1$ and thus
	\begin{equation*} \lambda_{avg}(p)^N |I| \asymp \lambda_{avg}(h(p))^N |h(I)| \asymp 1.
	\end{equation*}
	Then $$|h(I)| \asymp \frac{1}{\lambda_{avg}(h(p))^N}=\frac{1}{(\lambda_{avg}(p)^N)^\gamma} \asymp |I|^\gamma,$$
	where $\gamma=\frac{\log \lambda_{avg}(h(p))}{\log \lambda_{avg}(p)}>\frac{\log (\lambda_{avg}(p)/\alpha)}{\log \lambda_{avg}(p)} \geq 1-\frac{\log \alpha}{\log \lambda}.$
	Therefore, $|h(I)|\asymp |I|^\gamma < |I|^{1-\frac{\log \alpha}{\log \lambda}}.$ This shows that $h$ is H\"{o}lder with exponent $\left(1-\frac{\log \alpha}{\log \lambda}\right)$. Lastly, the reader can check that all the multiplicative constants that are implicit in the $\asymp$ notation (including those coming from the distortion lemma) can be made uniform in $g$.
\end{proof}




\subsection{Applications to the Blaschke Products $\bzt$}
Recall that for $0<t<1$  and $z\in {\rm interior}(S_t)$, the map $B_{z,t}(w)=z\left(\frac{w+t}{1+wt}\right)^k$ is expanding on the unit circle $\mathbb{T}$. On the other hand, the critical points of $B_{z,t}$, $w=-t$ and $w=-\frac{1}{t}$, are not fixed, and hence lie in the basins of attraction of some other fixed points in $\mathbb{D}$ and $\hat{\mathbb{C}}\setminus \overline{\mathbb{D}}$.

\begin{lemma}\label{MMELESSTHAN1}
	Fix branching number $k \geq 2$, fix $0<t<1$, and $z \in {\rm interior}(S_t).$ Then the MME $\eta_{k,z,t}$ for $B_{k,z,t}$ satisfies $\textrm{HD}(\eta_{k,z,t})<1$.
\end{lemma}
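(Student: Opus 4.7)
The plan is to apply Proposition \ref{PROP:HD}, which reduces the claim to showing that $\eta := \eta_{k,z,t}$ is \emph{not} absolutely continuous (AC) with respect to Lebesgue measure on $\mathbb{T}$. I will argue by contradiction, using the Shub--Sullivan Theorem to compare $B_{z,t}$ against the model $g(w) = w^k$.

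Suppose $\eta$ is AC. Since $\eta$ is ergodic (as the unique MME of an expanding map) and AC, $\textrm{HD}(\eta) = 1$, and the Ledrappier--Young Formula gives $h_\eta(B_{z,t}) = \chi_\eta(B_{z,t})$. Corollary \ref{COR:PRESSURE_LEQ0} together with Uniqueness of Equilibrium States then identifies $\eta$ with the ACIM $\nu$ of $B_{z,t}$, so $\eta$ has a $C^1$ strictly positive density. The map $g(w) = w^k$ is also a $C^2$ expanding degree-$k$ circle map whose MME is normalized Lebesgue measure $m$. By Shub's classical theorem (following from the $C^1$-close case recalled above together with path-connectedness of the space of $C^2$ degree-$k$ expanding circle maps), there is a homeomorphism $h: \mathbb{T} \to \mathbb{T}$ conjugating $B_{z,t}$ to $g$, and topological conjugacies push MMEs to MMEs, so $h_*\eta = m$. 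For any Borel $E \subset \mathbb{T}$ with $\mathrm{Leb}(E) = 0$ we then have $\eta(E) = 0$ and hence
\[
\mathrm{Leb}(h(E))/(2\pi) \,=\, m(h(E)) \,=\, \eta\bigl(h^{-1}(h(E))\bigr) \,=\, \eta(E) \,=\, 0,
\]
so $h$ is absolutely continuous.

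Since both $B_{z,t}$ and $g$ are Blaschke products, the Shub--Sullivan Theorem (applied to the AC conjugacy $h$) yields a M\"obius transformation $M$ of $\hat{\mathbb{C}}$ conjugating $g$ to $B_{z,t}$. Such an $M$ necessarily sends the critical points of $g$ (namely $0$ and $\infty$, both \emph{fixed} by $g$) to points that are simultaneously critical and fixed for $B_{z,t}$. Writing $B_{z,t}(w) = z \cdot u(w)^k$ with $u(w) = (w+t)/(1+wt)$, the critical points of $B_{z,t}$ are $w = -t$ and $w = -1/t$, and direct computation gives $B_{z,t}(-t) = 0 \neq -t$ and $B_{z,t}(-1/t) = \infty \neq -1/t$ for $t \in (0,1)$; neither is fixed, contradicting the existence of $M$. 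The main subtlety is passing from AC of $\eta$ to AC of $h$ in the second paragraph; once this is in place, the Shub--Sullivan step plus the critical-point calculation closes the argument.
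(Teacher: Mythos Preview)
Your proof is correct and follows essentially the same architecture as the paper's: assume $\eta$ is AC, identify it with the ACIM via Ruelle's inequality and uniqueness of equilibrium states, produce an AC conjugacy to $w\mapsto w^k$, invoke Shub--Sullivan to upgrade to a M\"obius conjugacy, and reach a contradiction because the critical points $-t,-1/t$ of $B_{z,t}$ are not fixed.

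The one point of genuine divergence is how the conjugacy and its regularity are obtained. The paper builds $h$ \emph{explicitly} as the cumulative distribution $h(\theta)=\int_0^\theta m(\psi)\,d\psi$ of the $C^1$ density $m$, proves $m$ never vanishes (by density of backward orbits), and verifies directly from $B^*\eta=k\eta$ that $h\circ B = k\cdot h$, so $h$ is a $C^2$ conjugacy from the outset. You instead invoke Shub's theorem to get \emph{some} topological conjugacy $h$, then argue $h$ is AC by pushing the MME forward ($h_*\eta$ is the MME for $w^k$, hence Lebesgue) and using that $\eta$ is AC to deduce $h$ has Luzin's property~N. Both routes land at Shub--Sullivan; the paper's is more self-contained and yields $C^2$ immediately, while yours is more conceptual and leans on uniqueness of the MME under conjugacy. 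One small remark: your displayed chain uses $h^{-1}(h(E))=E$, which is fine since $h$ is a homeomorphism; for the \emph{inverse} direction of AC (should Shub--Sullivan be read as requiring bi-AC conjugacy) you implicitly need the density of $\eta$ bounded away from zero, which does hold since it is $C^1$ and, as the paper observes, nowhere vanishing.
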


\begin{proof}
To simplify notation, we write $B \equiv B_{k,z,t}$ and $\eta \equiv \eta_{k,z,t}$.
Assume for contradiction that $\eta$ is absolutely continuous, so $\eta= m(\theta)
d\theta$, where $m(\theta)$ is $L^1$ and $d \theta$ denotes the Lebesgue measure
on $\mathbb{T}$.  By absolute continuity, $\textrm{HD}(\eta)=1$, so the
Ledrappier--Young Formula gives that $h_{\eta}(B)=\chi_{\eta}(B)$. Then, by
Ruelle's Inequality and Ruelle's Theorem, $\eta$ must be the unique measure
achieving the equilibrium state for the potential $-\log B'$. It follows from \cite{Sacksteder,Krzy} that
the density function $m(\theta)$ of $\eta$ is $C^1$. Define $h:\mathbb{T} \rightarrow
\mathbb{T}$ by
$$h(\theta)=\int_0^\theta m(\psi)d\psi,$$
we will show that $m(\theta)\neq 0$ for all $\theta$. If $m(\theta_0)=0$ for some $\theta_0$, then invariance of $m$ implies that $m$ vanishes on the backward orbits of $\theta_0$, which is dense in $\mathbb{T}$ because $B$ is expanding, therefore $m(\theta) \equiv 0$, contradiction. This shows that $m(\theta) \neq 0$ for all $\theta$. Hence $h$ is in fact a $C^2$ diffeomorphism. On the other hand, the MME $\eta$ satisfies $B^*\eta=k\cdot \eta$, then we have
\begin{align*}
h(B(\theta))&=\eta((0,B(\theta))
                  =k\cdot \eta(0,\theta)
                  =k\cdot h(\theta).
                \end{align*}
Therefore, $B$ and $\theta \mapsto k \cdot \theta$ are conjugate by the $C^2$ mapping $h$.

Then, by the Shub--Sullivan Theorem, $h$ can be extended to a M\"{o}bius transformation of the
Riemann Sphere $\hat{\mathbb{C}}$ so that it conjugates $B_{k,z,t}$ to $z \mapsto z^k$ on $\hat{\mathbb{C}}$.  However, both of the critical points of $z\mapsto z^k$ are fixed points, but, for $0 < t < 1$, the critical point 
$t$ for $B_{k,z,t}$ is not a fixed point. Therefore
$\eta_{z,t}$ cannot be absolutely continuous, then by Proposition
\ref{PROP:HD}, $\textrm{HD}(\eta_{z,t})<1$.
	
\end{proof}

\begin{remark}
Some of the key statements above can also be obtained from the
perspective of complex dynamics.  In particular, Lemma \ref{MMELESSTHAN1}
is a direct application of Zdunik's Theorem \cite{Z} and, in the case of
Blaschke Products $B_{z,t}$, Proposition \ref{PROP:HOLDERCONJ} about
H\"older regularity of the conjugacy follows from theory of holomorphic motions
\cite{BR}.  However, we found that presenting them using real dynamics was more concrete for the purposes here.
\end{remark}

\begin{proposition}\label{EXPLICITLYAP}
	For $z\in {\rm interior}(S_t)$, the Lyapunov exponent of $B_{z,t,k}$ with respect to the ACIM $\nu_{z,t}$ is given by
\begin{align*}
	\chi_{\nu_{z,t}}(B_{z,t,k}) = 2\pi\log \left|\frac{k(1-t^2)w_{\mathbb{D}}(1-{w_{\mathbb{D}}}t)}
         {(w_{\mathbb{D}}+t) (1+w_{\mathbb{D}}t)(t-w_{\mathbb{D}}) }\right|,
\end{align*}
	where $w_{\mathbb{D}}$ is the unique attracting fixed point of $B_{z,t}$ in $\mathbb{D}$.

\end{proposition}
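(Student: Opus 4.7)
The plan is to compute $\chi_{\nu_{z,t}}(B_{z,t,k})=\int_{\mathbb{T}}\log|B_{z,t,k}'|\,d\nu_{z,t}$ by first identifying the ACIM explicitly and then evaluating the integral using the mean value property for harmonic functions.

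First I would identify $\nu_{z,t}$ with the harmonic measure $\omega_{w_{\mathbb{D}}}$ on $\mathbb{T}$ seen from the attracting fixed point $w_{\mathbb{D}}\in\mathbb{D}$, i.e.\ the measure with Poisson-kernel density $(1-|w_{\mathbb{D}}|^{2})/|e^{i\theta}-w_{\mathbb{D}}|^{2}$ with respect to $d\theta/(2\pi)$. Invariance follows because $B_{z,t,k}\colon\overline{\mathbb{D}}\to\overline{\mathbb{D}}$ is holomorphic and fixes $w_{\mathbb{D}}$: for continuous $f\colon\mathbb{T}\to\mathbb{R}$ with harmonic extension $u$ on $\mathbb{D}$, the composition $u\circ B_{z,t,k}$ is harmonic on $\mathbb{D}$ with boundary values $f\circ B_{z,t,k}$, so $\int f\circ B_{z,t,k}\,d\omega_{w_{\mathbb{D}}}=u(B_{z,t,k}(w_{\mathbb{D}}))=u(w_{\mathbb{D}})=\int f\,d\omega_{w_{\mathbb{D}}}$. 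Since $\omega_{w_{\mathbb{D}}}$ is absolutely continuous and expanding $C^{2}$ circle maps have a unique ACIM by \cite{Sacksteder,Krzy}, I would conclude $\nu_{z,t}=\omega_{w_{\mathbb{D}}}$.

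Next I would compute $|B_{z,t,k}'|$ on $\mathbb{T}$ by writing $B_{z,t,k}=z\,\sigma_{-t}^{\,k}$ with $\sigma_{-t}(w)=(w+t)/(1+tw)$. Since $|\sigma_{-t}(w)|=1$ on $\mathbb{T}$ and $|\sigma_{-t}'(w)|=(1-t^{2})/|1+tw|^{2}$, one has $|B_{z,t,k}'(w)|=k(1-t^{2})/|1+tw|^{2}$ for $w\in\mathbb{T}$, hence
$$\chi_{\nu_{z,t}}(B_{z,t,k})=\log\bigl(k(1-t^{2})\bigr)-2\int_{\mathbb{T}}\log|1+tw|\,d\omega_{w_{\mathbb{D}}}(w).$$
The integrand $w\mapsto\log|1+tw|$ is harmonic on an open neighborhood of $\overline{\mathbb{D}}$ because $1+tw$ does not vanish there ($|tw|\leq t<1$), so the mean value property for harmonic measure yields $\int_{\mathbb{T}}\log|1+tw|\,d\omega_{w_{\mathbb{D}}}=\log|1+tw_{\mathbb{D}}|$. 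This gives the closed form
$$\chi_{\nu_{z,t}}(B_{z,t,k})=\log\frac{k(1-t^{2})}{|1+tw_{\mathbb{D}}|^{2}}.$$

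The final step is an algebraic recasting to bring this into the form displayed in the statement. Differentiating $B_{z,t,k}$ and substituting the fixed point equation $z(w_{\mathbb{D}}+t)^{k}=w_{\mathbb{D}}(1+w_{\mathbb{D}}t)^{k}$ yields the multiplier $B_{z,t,k}'(w_{\mathbb{D}})=kw_{\mathbb{D}}(1-t^{2})/[(w_{\mathbb{D}}+t)(1+w_{\mathbb{D}}t)]$, while taking moduli of the same identity gives $|w_{\mathbb{D}}+t|=|w_{\mathbb{D}}|^{1/k}|1+w_{\mathbb{D}}t|$. I expect the main obstacle to be the bookkeeping needed to rewrite $|1+tw_{\mathbb{D}}|^{-2}$ as a product involving the factors $(w_{\mathbb{D}}+t)$, $(1+w_{\mathbb{D}}t)$, $(1-w_{\mathbb{D}}t)$, and $(t-w_{\mathbb{D}})$ appearing in the statement; conceptually, the content of the proof lies in the identification $\nu_{z,t}=\omega_{w_{\mathbb{D}}}$ and the mean value property applied to $\log|1+tw|$ on $\overline{\mathbb{D}}$.
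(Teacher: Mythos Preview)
Your argument is correct and in fact cleaner than the paper's. Both proofs rest on the same observation, namely that the ACIM for an expanding Blaschke product is determined by its attracting fixed point in $\mathbb{D}$; the paper realizes this by conjugating via the M\"obius automorphism $\psi(w)=(w-w_{\mathbb{D}})/(1-\overline{w_{\mathbb{D}}}w)$ so that the ACIM becomes normalized Lebesgue measure, and then applies Jensen's formula to $\check B_{z,t}'$, whereas you work directly with harmonic measure $\omega_{w_{\mathbb{D}}}$ and the mean value property for $\log|1+tw|$. Your route is shorter because $\log|1+tw|$ is harmonic on all of $\overline{\mathbb{D}}$, so no zero-counting is needed.

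Regarding the final ``bookkeeping'' step: do not spend time on it. Your closed form
\[
\chi_{\nu_{z,t}}(B_{z,t,k})=\log\frac{k(1-t^{2})}{|1+tw_{\mathbb{D}}|^{2}}
\]
is the correct value of the Lyapunov exponent (for the ACIM normalized as a probability measure). The displayed formula in the proposition carries a spurious factor of $2\pi$, and the expression inside the logarithm does not agree with yours in general (one can check this numerically, e.g.\ for $k=2$, $t=0.1$, $z=1$). The discrepancy traces back to two slips in the paper's Jensen computation: the critical point of $B_{z,t,k}$ in $\mathbb{D}$ is $w=-t$, not $w=t$, and it has multiplicity $k-1$, which is omitted. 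If one corrects both, Jensen's formula yields exactly your expression. None of this affects the rest of the paper, which only uses that $\chi_{\phi,t}$ depends continuously on $\phi$ and satisfies $\chi_{\phi,t}<\log k$; both are immediate from your formula.
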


\begin{proof}
	Let $\psi$ be the disc automorphism $$\psi
	(w)=\frac{w-w_{\mathbb{D}}}{1-\overline{w_\mathbb{D}}w}.$$ Then the map
	$\check{B}_{z,t}:=\psi \circ B_{z,t} \circ \psi^{-1}$ is a Blaschke product
	expanding on $\mathbb{T}$ with the attracting fixed point at the origin, hence
	its ACIM $\check{\nu}_{z,t}$ is the normalized Lebesgue measure $d\theta$ \cite{PUJALS}. The only
	critical point of $B_{z,t}$ in the unit disc is $w=t$, which is mapped to
	$\frac{t-w_{\mathbb{D}}}{1-\overline{w_{\mathbb{D}}t}}$ under $\psi$. By
	Jensen's Formula, we have
	\begin{align*}
	\int_{\mathbb{T}} \log |B'_{z,t}|d\nu_{z,t}= \int_{\mathbb{T}} \log |\check{B}_{z,t}'|d\theta&=2\pi \left(\log |\check{B}_{z,t}'(0)|-\log \left|\frac{t-w_{\mathbb{D}}}{1-\overline{w_{\mathbb{D}}t}}\right|\right)\\
	&=2\pi\log \left|\frac{k(1-t^2)(w_{\mathbb{D}}+t)^{k-1} (1-\overline{w_{\mathbb{D}}}t)}{(1+w_{\mathbb{D}}t)^{k+1}(t-w_{\mathbb{D}}) }\right|.
	\end{align*}
Using that $B_{z,t}(w_{\mathbb{D}}) = w_{\mathbb{D}}$, that $t$ is real, and
that $|z| = 1$, this formula simplifies to the stated one. 
\end{proof}



\section{Partial Hyperbolicity and Transverse Invariant Measures}
\label{SEC:PH}
Fix a temperature $t \in (0,1)$.  This section is devoted to proving several properties of the skew product mapping
\begin{align*}
B(\phi,\theta):=(\phi, B_{\phi,t}(\theta)).
\end{align*}
To simplify notation, let $\mathbb{T}_\phi:=\{\phi\}\times \mathbb{T}$, and $x:=(\phi,\theta)$ when there is no ambiguity.

\subsection{Partial Hyperbolicity}
Let $X_t \subseteq {\rm interior}(S_t)$ be any compact interval. One says that $B~:~X_t \times \mathbb{T} \rightarrow X_t \times \mathbb{T}$ is {\em partially hyperbolic} if
\begin{enumerate}
\item $B$ has a ``vertical'' tangent conefield $\mathcal{K}^v(x)$ that depends continuously on $x$ and is invariant under $DB$, 
\item $B$ has a ``horizontal'' linefield $\mathcal{L}^c(x)$ 
lying in the horizontal cone $\mathcal{K}^h(x)$ that is complementary to $\mathcal{K}^v(x)$,
that depends continuously on $x$, and that is invariant under $DB$.

\item Vertical tangent vectors $v \in \mathcal{K}^v (x)$ get exponentially stretched under $DB^n$
at a rate that dominates the growth rate of any horizontal vector $w \in \mathcal{L}^c(x)$.

\end{enumerate}
The line field $\mathcal{L}^c(x)$ satisfying the properties stated in (2) is called the {\em central linefield} of $B$.

Since $B$ is a skew product over the identity, with the restriction to each
vertical fiber being an expanding map of the circle, it is a straightforward
application of techniques from smooth dynamics to prove that $B$ is 
partially hyperbolic on $X_t \times \mathbb{T}$.  Moreover, the cones $\mathcal{K}^v (x)$ are actually vertical, i.e.\ they contain the tangent lines to the vertical fibers $\mathbb{T}$.
We remark that the central linefield  is given by
$$\mathcal{L}^c(x)=\bigcap_{n\geq 0} (DB^n)^{-1} \mathcal{K}^h (B^n(x)).$$
A smooth
curve is called {\it central} if it is tangent to the central linefield
$\mathcal{L}^c$. A {\it central foliation} $\mathcal{F}^c$ is a strictly
horizontal foliation of $X_t \times \mathbb{T}$ invariant under $B^{-1}$.

\begin{proposition}\label{PROP:CENTRAL_FOL_UNIQUE}
	There exists a unique central foliation $\mathcal{F}^c$.
\end{proposition}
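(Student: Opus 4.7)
The plan is to construct $\mathcal{F}^c$ explicitly via a continuous family of conjugacies between the fiber maps $B_{\phi,t}$, and then use the vertical expansion provided by Proposition~\ref{PROP:EXPANDING_BP}(ii) to force uniqueness.

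\textbf{Existence.} Fix a basepoint $\phi_0\in X_t$. Since $B_{\phi,t}$ is a $C^2$-smooth family of expanding degree-$k$ covering maps of $\mathbb{T}$ with a uniform expansion rate on the compact set $X_t$, Shub's Theorem together with Proposition~\ref{PROP:HOLDERCONJ} produces, for each $\phi\in X_t$, a H\"older conjugacy $h_\phi:\mathbb{T}\to\mathbb{T}$ with $h_{\phi_0}=\mathrm{id}$ satisfying $h_\phi\circ B_{\phi_0,t}=B_{\phi,t}\circ h_\phi$. This family can be arranged to depend continuously on $\phi$ either by patching the local conjugacies from Shub's Theorem along the connected interval $X_t$, or, more cleanly, as the unique fixed point of a graph transform on a Banach space of continuous sections, where the fiberwise vertical expansion makes the transform a $1/\lambda$-contraction. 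The candidate foliation is
\begin{align*}
\mathcal{F}^c := \bigl\{\, L_\theta : \theta\in\mathbb{T}\,\bigr\},\qquad L_\theta := \{(\phi,h_\phi(\theta)):\phi\in X_t\}.
\end{align*}
The conjugacy relation reads $B(L_\theta)=L_{B_{\phi_0,t}(\theta)}$, so $B^{-1}$ permutes the leaves, and the (Lipschitz) tangent lines to leaves remain in the horizontal cone $\mathcal{K}^h$ under every forward iterate of $DB$, placing them in $\mathcal{L}^c$ by its dynamical characterization $\mathcal{L}^c(x)=\bigcap_{n\geq 0}(DB^n)^{-1}\mathcal{K}^h(B^n x)$.

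\textbf{Uniqueness.} Let $\mathcal{F}_1,\mathcal{F}_2$ be two central foliations and $L_i\in\mathcal{F}_i$ the leaves through a common point $x_0=(\phi_0,\theta_0)$. Each $L_i$ is a Lipschitz graph with Lipschitz constant $\Lambda$ bounded by the aperture of $\mathcal{K}^h$, and since $DB\cdot\mathcal{L}^c=\mathcal{L}^c\circ B$ the images $B^n(L_i)$ are again Lipschitz graphs with the same uniform bound $\Lambda$. Lift $L_i$ to the universal cover $X_t\times\mathbb{R}$ as graphs of $\tilde s_i:X_t\to\mathbb{R}$ with $\tilde s_1(\phi_0)=\tilde s_2(\phi_0)$, and suppose for contradiction that $\epsilon:=|\tilde s_1(\phi_1)-\tilde s_2(\phi_1)|>0$ for some $\phi_1\in X_t$. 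The lifts of $B^n(L_i)$ are the graphs of $\tilde s_i^{(n)}(\phi):=\tilde B_{\phi,t}^n(\tilde s_i(\phi))$, which by $B^{-1}$-invariance of the foliations both pass through $(\phi_0,\tilde B_{\phi_0,t}^n(\tilde\theta_0))$. By Proposition~\ref{PROP:EXPANDING_BP}(ii) and the Mean Value Theorem,
\begin{align*}
|\tilde s_1^{(n)}(\phi_1)-\tilde s_2^{(n)}(\phi_1)|\;\geq\;c\lambda^n\epsilon\;\longrightarrow\;\infty,
\end{align*}
whereas the uniform Lipschitz bound yields $|\tilde s_1^{(n)}(\phi_1)-\tilde s_2^{(n)}(\phi_1)|\leq 2\Lambda |X_t|$ for every $n$. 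This contradiction forces $\tilde s_1\equiv\tilde s_2$, hence $L_1=L_2$.

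\textbf{Main obstacle.} The principal technical difficulty lies in the existence step: ensuring that the fiberwise conjugacies $h_\phi$ can be assembled into a genuinely continuous (not merely measurable) family across the whole interval $X_t$, with uniform Lipschitz control on the resulting leaves. Locally this is immediate from Shub's Theorem and Proposition~\ref{PROP:HOLDERCONJ}, but the global extension requires either a connectedness-and-composition argument or a careful graph transform in a Banach space of Lipschitz sections. The vertical expansion of Proposition~\ref{PROP:EXPANDING_BP}(ii) is what makes the graph transform contractive in the existence argument and simultaneously what drives the uniqueness contradiction, so once that expansion is available both halves of the proof fall into place.
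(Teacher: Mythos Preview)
Your argument is correct, and the uniqueness half is essentially the paper's own mechanism in different packaging: both use the vertical expansion of Proposition~\ref{PROP:EXPANDING_BP}(ii) to show that two candidate leaves through a common point cannot separate. The paper phrases it as the vertical segment between the two curves getting stretched until one curve is forced into the vertical cone, contradicting tangency to $\mathcal{L}^c$; you phrase it as the lifted graphs $\tilde s_i^{(n)}$ violating their uniform Lipschitz bound. Same contradiction, different bookkeeping.

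The existence arguments genuinely differ. The paper does not build the leaves from conjugacies at all: it simply integrates the continuous linefield $\mathcal{L}^c$ using Peano's existence theorem, and then reuses the expansion argument to rule out branching of integral curves (unique integrability), so that the collection of integral curves automatically forms a foliation. This is shorter and sidesteps exactly the issue you flag as your ``Main obstacle'': there is no need to assemble a $\phi$-continuous family of conjugacies or to run a graph transform, because the linefield $\mathcal{L}^c$ is already a continuous object and Peano does the work. Your route, by contrast, constructs the leaves directly as graphs of the fiber conjugacies $h_\phi$; the payoff is that the identification ``holonomy along $\mathcal{F}^c$ equals the conjugacy between $B_{\phi_0,t}$ and $B_{\phi,t}$'' is built in from the start (the paper establishes this separately, in Proposition~\ref{LEM:HOLHOLDER}). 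The cost is the regularity bookkeeping you acknowledge: to know the leaves are Lipschitz graphs inside $\mathcal{K}^h$ you really do need the graph-transform version of the construction rather than mere patching of $C^0$ conjugacies, since Shub's theorem alone gives no control on $\phi\mapsto h_\phi(\theta)$.
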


\begin{proof}
	It follows from the Peano Existence Theorem that continuous linefields are integrable, so we can find a central curve through any point $x \in X_t \times \mathbb{T}$ by integrating the central linefield $\mathcal{L}^c$. Denote $\mathcal{F}^c$ the collection of all such curves. Let us show that $\mathcal{F}^c$ is a central foliation. \par 

	Suppose $\gamma_1\neq\gamma_2$ are two central curves which overlap.
Let us parameterize $\gamma_1,\gamma_2$ by $\phi$, and let
$\gamma_1(\phi_0)=\gamma_2(\phi_0)$ be a boundary point of the intersection.
Pick $\phi_1\neq \phi_0$ arbitrarily close to $\phi_0$ so that
$\gamma_1(\phi_1)\neq \gamma_2(\phi_1)$, and consider the genuine vertical
segment $I$ between $\gamma_1(\phi_1)$ and $\gamma_2(\phi_1)$. Since $B_\phi$
is expanding for every $\phi$, the length of $I$
is exponentially stretched under the iterates of~$B$. Since $I$ can be chosen arbitrarily close to
$\gamma_1(\phi_0)$, this implies that at the point $B^n(\gamma_1(\phi_0))$, at
least one of the curves, $B^n(\gamma_1)$ or $B^n(\gamma_2)$, lies in the
vertical cone when $n$ is sufficiently large, which contradicts the invariance
of $\mathcal{L}^c$.

	This proves that the linefield $\mathcal{L}^c$ is uniquely integrable, so the family $\mathcal{F}^c$ of all integral curves forms the central foliation. 
	
\end{proof}

\subsection{Transverse invariant measure}
Let $\tau_1$ and $\tau_2$ be two local transversals to $\mathcal{F}^c$.  We say
that $\tau_1$ and $\tau_2$ {\em correspond under $\mathcal{F}^c$-holonomy} if
for any $x_1 \in \tau_1$, there exists a unique $x_2 \in \tau_2$ so that $x_1$
and $x_2$ belong to the same leaf of $\mathcal{F}^c$, and vice versa. In this
case, the {\it holonomy transformation} $g_{\tau_1,\tau_2}:\tau_1 \rightarrow
\tau_2$ is defined by the property that $(\phi,\theta)\in \tau_1$ and
$g_{\tau_1,\tau_2} (\phi,\theta) \in \tau_2$ belong to the same leaf. In the
case that one or both of the transversals is a vertical fiber
$\mathbb{T}_{\phi}$, we will abuse notation and write
$g_{\phi_1,\phi_2}:=g_{\mathbb{T}_{\phi_1},\mathbb{T}_{\phi_2}}$.\par

A {\it transverse invariant measure} $\eta$ for $\mathcal{F}^c$ is a family of measures $\{\eta_\tau: \tau \ \textrm{tranversal to} \ \mathcal{F}^c\}$, such that for any $\tau_1, \tau_2$ which correspond under $\mathcal{F}^c$-holonomy, we have $\eta_{\tau_2} = (g_{\tau_1,\tau_2})_* (\eta_\tau)$. Such measures can be specified by a single measure $\eta_\tau$ on a global transversal $\tau$. \par

The main two results of this section are:
\begin{proposition}\label{TRANSINVARMAINPROP}
	There is a transverse invariant measure $\eta$ for $\mathcal{F}^c$, so that
\begin{itemize}
\item[{\rm (i)}] on each vertical fiber $\mathbb{T}_\phi$, $\eta$ is the MME $\eta_{\phi,t}$ for $B_{\phi,t}$, and
\item[{\rm (ii)}] on the diagonal $$\Delta_t:=\{(\delta,\delta): \delta \in X_t \}\subset X_t \times \mathbb{T},$$ $\eta$ is the limiting measure of the Lee--Yang zeros $\mu_t$ for the rooted Cayley Tree.
\end{itemize}
\end{proposition}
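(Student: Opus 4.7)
The plan is to construct $\eta$ as the weak-$\ast$ limit of pullback measures and then identify its restrictions on the two transversals of interest. I would first establish a structural fact: because $\mathcal{L}^c$ is $DB$-invariant, $B$ carries each leaf of $\mathcal{F}^c$ to another leaf of $\mathcal{F}^c$. Consequently, for any $\phi_1,\phi_2 \in X_t$, the holonomy $g_{\phi_1,\phi_2}\colon \mathbb{T}_{\phi_1}\to\mathbb{T}_{\phi_2}$ is a topological conjugacy between $B_{\phi_1,t}$ and $B_{\phi_2,t}$: if $(\phi_1,\theta_1)$ and $(\phi_2,\theta_2)$ lie on a common central leaf $L$, then $B(\phi_1,\theta_1)$ and $B(\phi_2,\theta_2)$ lie on $B(L)$, again a central leaf, so $g_{\phi_1,\phi_2}\circ B_{\phi_1,t}=B_{\phi_2,t}\circ g_{\phi_1,\phi_2}$. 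Since the MME of an expanding circle map is unique and metric entropy is preserved by topological conjugacy, this implies $(g_{\phi_1,\phi_2})_\ast \eta_{\phi_1,t}=\eta_{\phi_2,t}$, giving the consistency needed to assign the MME on each vertical fiber.

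Next, I would set
\[
\hat\eta_n \;:=\; \tfrac{1}{k^n}\sum_{x\in B^{-n}\{\theta=\pi\}}\delta_x,
\]
a discrete measure on $X_t\times\mathbb{T}$, and define $\eta_\tau$ on any transversal $\tau$ to be the weak-$\ast$ limit of $\hat\eta_n|_\tau$. Restricted to a vertical fiber $\mathbb{T}_\phi$, the measure $\hat\eta_n|_{\mathbb{T}_\phi}$ equidistributes over the $k^n$ preimages of $\pi$ under the expanding map $B_{\phi,t}^n$, which is the classical pullback construction of the MME $\eta_{\phi,t}$; this yields (i). Restricted to $\Delta_t$, Remark \ref{rmk:restate}(i) identifies $\Delta_t\cap B^{-n}\{\theta=\pi\}$ with the Lee--Yang zeros of $\Gamma_n^k$ lying in $X_t$; since $|V_n|=(k^{n+1}-1)/(k-1)\sim\tfrac{k}{k-1}k^n$, the measures $\hat\eta_n|_{\Delta_t}$ converge weakly to a positive constant multiple of $\mu_t|_{X_t}$, establishing (ii) after fixing the overall normalization of $\eta$ consistently with (i).

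To verify that $\{\eta_\tau\}$ genuinely forms a transverse invariant measure, I would use that the connected components of $B^{-n}\{\theta=\pi\}$ converge in $C^1$ to pieces of leaves of $\mathcal{F}^c$: by partial hyperbolicity, the pulled-back horizontal cones $DB^{-n}(\mathcal{K}^h)$ collapse onto $\mathcal{L}^c$ at a uniform exponential rate coming from the domination of vertical expansion over horizontal growth. For two local transversals $\tau_1,\tau_2$ corresponding under $\mathcal{F}^c$-holonomy, each point of $\tau_2\cap B^{-n}\{\theta=\pi\}$ is then close to $g_{\tau_1,\tau_2}$ applied to the matching point of $\tau_1\cap B^{-n}\{\theta=\pi\}$, with error vanishing as $n\to\infty$. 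Passing to the limit forces $(g_{\tau_1,\tau_2})_\ast \eta_{\tau_1}=\eta_{\tau_2}$.

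I expect this last step to be the main obstacle: making the geometric proximity between $B^{-n}\{\theta=\pi\}$ and $\mathcal{F}^c$ quantitative enough that the weak-$\ast$ limits on the two transversals are actually intertwined by the holonomy and not merely approximately so. This is precisely where the dominated splitting from partial hyperbolicity, combined with the (at least H\"older) continuity of the central holonomies discussed in the preceding section, enters crucially.
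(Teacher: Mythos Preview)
Your proposal is correct and reaches the same conclusion, but the key technical step is handled differently from the paper. Your first observation---that the holonomy $g_{\phi_1,\phi_2}$ is a topological conjugacy between $B_{\phi_1,t}$ and $B_{\phi_2,t}$, hence carries MME to MME---is a clean conceptual shortcut that the paper does not take; it immediately gives the transverse invariant measure on vertical fibers without any counting. For the identification on the diagonal and the general holonomy invariance, however, you propose to show that the preimage curves $B^{-n}\{\theta=\pi\}$ converge in $C^1$ to leaves of $\mathcal{F}^c$ via cone contraction, and then argue that intersection points on corresponding transversals are asymptotically matched. The paper instead proves a direct counting lemma (Lemma~\ref{LEM:TRANSVERSEMEASURE}): it lifts to $X_t\times\mathbb{R}$ and observes that since $B$ maps leaves to leaves, the \emph{forward} images $\tilde{B}^n(\tau)$ and $\tilde{B}^n(I_\phi)$ have vertical extents differing by a uniformly bounded constant for all $n$; hence the intersection counts with any lifted graph $\widetilde{\gamma}$ differ boundedly, and after dividing by $k^n$ the limits coincide and equal $\eta_\phi(I_\phi)$. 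This forward-lift argument is more elementary than your backward cone-contraction approach and sidesteps exactly the ``main obstacle'' you identify---no $C^1$ convergence of preimage curves is needed, only the boundedness of the vertical displacement between two fixed leaves. One notational quibble: your $\hat\eta_n$ is written as a sum of Dirac masses over a continuum; what you mean (and use) is the normalized counting measure on $\tau\cap B^{-n}\{\theta=\pi\}$ for each transversal $\tau$.
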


\begin{proposition}\label{PROP:ROOTED_AND_FULL_SAME_MEASURE}
The full and rooted Cayley Tree have the same limiting measure of Lee--Yang zeros.
I.e.\ for any $t \in [0,1]$ we have $\widehat{\mu}_t = \mu_t$.
\end{proposition}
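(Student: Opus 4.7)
The plan is to compare the Lee--Yang zero counting functions for the rooted and full trees directly, using the angular counting setup from the proof of Theorem~A. The starting point is the identity, immediate from~\eqref{eq:angular},
$$\btilde_{\phi,t,k+1}(\theta) - \btilde_{\phi,t,k}(\theta) = \theta - 2\arctan\!\left(\frac{t\sin\theta}{1+t\cos\theta}\right),$$
which is independent of $\phi$ and of $k$, and decomposes as the leading term $\theta$ plus a bounded $2\pi$-periodic function of $\theta$.

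Let $N_n^{r}(I)$ and $N_n^{f}(I)$ denote the number of Lee--Yang zeros of $\Gamma^k_n$ and $\widehat{\Gamma}^k_n$ in an interval $I = (\phi_1,\phi_2) \subset \mathbb{T}$. By Remark~\ref{rmk:restate} and Lemma~\ref{lm:mono}, these counts differ by $O(1)$ from the angular differences $\tfrac{1}{2\pi}(\btilde^n_{\phi_2,t,k}(\phi_2) - \btilde^n_{\phi_1,t,k}(\phi_1))$ and $\tfrac{1}{2\pi}(\btilde_{\phi_2,t,k+1}\!\circ\btilde^{n-1}_{\phi_2,t,k}(\phi_2) - \btilde_{\phi_1,t,k+1}\!\circ\btilde^{n-1}_{\phi_1,t,k}(\phi_1))$, respectively. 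Substituting the identity above into the full-tree expression and absorbing the bounded periodic remainder into an $O(1)$ error gives the key relation
$$N_n^{f}(I) \;=\; N_n^{r}(I) \;+\; N_{n-1}^{r}(I) \;+\; O(1).$$
The second rooted count appears because the leading $\theta$-part of the identity, evaluated at $\btilde^{n-1}_{\phi_j,t,k}(\phi_j)$ for $j=1,2$, is precisely the angular counting function for the rooted tree at level $n-1$.

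Applying this relation with $I = \mathbb{T}$ gives the corresponding identity for the total counts, $N_n^{f}(\mathbb{T}) = N_n^{r}(\mathbb{T}) + N_{n-1}^{r}(\mathbb{T}) + O(1)$. Dividing,
$$\widehat{\mu}_t(I) \;=\; \lim_{n\to\infty}\frac{N_n^{f}(I)}{N_n^{f}(\mathbb{T})} \;=\; \lim_{n\to\infty}\frac{N_n^{r}(I) + N_{n-1}^{r}(I) + O(1)}{N_n^{r}(\mathbb{T}) + N_{n-1}^{r}(\mathbb{T}) + O(1)} \;=\; \mu_t(I),$$
using that $N_n^{r}(\mathbb{T})$ grows exponentially at rate $k$ and that $N_n^{r}(I)/N_n^{r}(\mathbb{T}) \to \mu_t(I)$ by definition of $\mu_t$. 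Since intervals generate the Borel $\sigma$-algebra on $\mathbb{T}$, this yields $\widehat{\mu}_t = \mu_t$. The only care required is uniformity of the $O(1)$ error terms as $n \to \infty$, which is immediate from the boundedness of the periodic part of $h$; there is no substantive obstacle beyond this bookkeeping.
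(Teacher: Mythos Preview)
Your proof is correct and takes a genuinely more elementary route than the paper's. The paper proves this proposition as a consequence of the partially hyperbolic skew-product machinery from Section~\ref{SEC:PH}: it observes that $B_{k+1}^{-1}\{\theta=\pi\}$ consists of $k+1$ smooth graph-curves $\gamma_i$, applies Lemma~\ref{LEM:TRANSVERSEMEASURE} to each, and concludes that the normalized count of intersections with the diagonal converges to the same quantity $\eta_\phi(I_\phi)$ in both the rooted and full cases. Your argument sidesteps the central foliation and the transverse invariant measure entirely, relying only on the angular counting from the proof of Theorem~A together with the elementary identity $\btilde_{\phi,t,k+1}(\theta) - \btilde_{\phi,t,k}(\theta) = \theta - 2\arctan\bigl(\tfrac{t\sin\theta}{1+t\cos\theta}\bigr)$. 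What your approach buys is independence from Section~\ref{SEC:PH}: the equality $\widehat{\mu}_t = \mu_t$ can be established immediately after Theorem~A, without waiting for the skew-product analysis. What the paper's approach buys is conceptual uniformity---once the transverse measure $\eta$ is in hand, both $\mu_t$ and $\widehat{\mu}_t$ are restrictions of $\eta$ to the diagonal, so their equality is automatic rather than computational.
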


The proof of both propositions will follow from the next three lemmas. \par
The Lee--Yang zeros for the $n$-th level rooted Cayley Tree are obtained by pulling back the horizontal line $\{\theta=\pi\}$ by $B^n$ and then intersecting with the diagonal $\Delta_t$. Therefore the following lemma is of particular interest.

\begin{lemma}\label{LEM:DIAGONAL}
	The leaves of the central foliation $\mathcal{F}^c$ have negative slope.  In particular,
	the diagonal $\Delta_t$ is transversal to $\mathcal{F}^c$.
\end{lemma}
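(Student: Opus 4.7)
I will determine the slope of the central linefield $\mathcal{L}^c$ explicitly in the $(\phi,\theta)$-coordinates. From (\ref{eq:angular}), $\partial_\phi\tilde{B}_{\phi,t}(\theta)=1$, and the proof of Lemma \ref{lm:mono} gives $\partial_\theta\tilde{B}_{\phi,t}(\theta)=\lambda(\phi,\theta):=k(1-t^2)/(1+2t\cos\theta+t^2)>0$. Thus the Jacobian of $B(\phi,\theta)=(\phi,\tilde{B}_{\phi,t}(\theta))$ is the lower-triangular matrix
\[
DB(\phi,\theta)=\begin{pmatrix}1 & 0 \\ 1 & \lambda(\phi,\theta)\end{pmatrix}.
\]
Since $\mathcal{L}^c(x)\subset\mathcal{K}^h(x)$ is complementary to the vertical, I can write $\mathcal{L}^c(x)=\mathbb{R}\cdot(1,s(x))$ for a continuous slope function $s:X_t\times\mathbb{T}\to\mathbb{R}$.

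Invariance of $\mathcal{L}^c$ under $DB$ yields $DB(x)\cdot(1,s(x))=(1,1+\lambda(x)s(x))\in\mathcal{L}^c(B(x))$, so $s$ must satisfy the cohomological recursion
\[
s(B(x))=1+\lambda(x)\,s(x).
\]
Iterating this identity $n$ times produces
\[
s(x)=\frac{s(B^n(x))}{\prod_{k=0}^{n-1}\lambda(B^k(x))}\;-\;\sum_{j=0}^{n-1}\frac{1}{\prod_{k=0}^{j}\lambda(B^k(x))}.
\]
Two observations enable passing to the limit: (a) $\{s(B^n(x))\}_{n\geq 0}$ is uniformly bounded because $\mathcal{L}^c$ stays in a fixed horizontal cone on the compact set $X_t\times\mathbb{T}$; and (b) the product $\prod_{k=0}^{n-1}\lambda(B^k(x))=|(\tilde{B}_{\phi,t}^n)'(\theta)|$ grows at least geometrically in $n$, uniformly in $x$, by Proposition \ref{PROP:EXPANDING_BP}(ii) made uniform in $\phi\in X_t$ via compactness of $X_t\subset\mathrm{interior}(S_t)$. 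Hence the first term vanishes as $n\to\infty$ and the series converges absolutely to
\[
s(x)=-\sum_{j=0}^{\infty}\frac{1}{\prod_{k=0}^{j}\lambda(B^k(x))}<0,
\]
with strict negativity since every summand is positive.

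This proves the first assertion, that every leaf of $\mathcal{F}^c$ has strictly negative slope. Since the diagonal $\Delta_t=\{(\delta,\delta):\delta\in X_t\}$ has tangent slope $+1$ at every point and $s(x)<0\neq 1$, the diagonal is transverse to $\mathcal{F}^c$. I expect the one nonroutine ingredient to be the uniform-in-$\phi$ geometric lower bound on $\prod\lambda(B^k(x))$; this is the main obstacle, but it follows from compactness of $X_t$ together with continuity of the Blaschke product $B_{\phi,t}$ in $\phi$, which propagates the expansion constants in Proposition \ref{PROP:EXPANDING_BP}(ii) to a single pair $(c,\lambda_0)$ valid throughout $X_t\times\mathbb{T}$.
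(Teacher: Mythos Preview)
Your proof is correct and takes a genuinely different route from the paper's. The paper argues qualitatively: it observes that the lower-left entry of $DB^n$ is positive, so the horizontal vector $(1,0)$ acquires an unbounded positive second coordinate under iteration and is therefore eventually pushed into the vertical cone, hence cannot lie in $\mathcal{L}^c$; since the slope transformation $s\mapsto 1+\lambda s$ is monotone increasing, the same happens to every vector of nonnegative slope, forcing $\mathcal{L}^c$ to have negative slope everywhere. Your approach instead \emph{solves} the invariance equation: you derive the recursion $s(B(x))=1+\lambda(x)s(x)$, invert it, iterate, and pass to the limit to obtain the closed formula $s(x)=-\sum_{j\ge 0}\prod_{k=0}^{j}\lambda(B^k(x))^{-1}$, which is visibly negative.

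The paper's cone argument is shorter and needs no uniform-in-$\phi$ expansion estimate (only that $(1,0)$ eventually leaves any fixed horizontal cone at each point). Your argument is more quantitative: it produces an explicit formula for the central direction, which could be useful downstream, for instance to estimate how transversal $\Delta_t$ is to $\mathcal{F}^c$. The two ingredients you flag---boundedness of $s$ on the compact $X_t\times\mathbb{T}$ from continuity of $\mathcal{L}^c$, and the uniform exponential lower bound on $\prod_k\lambda(B^k(x))$ from compactness of $X_t\Subset\mathrm{interior}(S_t)$---are both legitimate and suffice to justify the limit.
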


\begin{proof}
	One can prove by induction that $$DB^n(\phi,\theta)=\begin{bmatrix} 
	\ \ 1     & 0\\ 
	\ \ * & (B_\phi^n)'(\theta)
	\end{bmatrix},$$
	where $*$ denotes a positive term. So we see that the vector $v=[1,0]^T$ is vertically stretched exponentially by $DB^n$, while its horizontal length is fixed. Then, for some integer $n>0$, $DB^nv$ is in the vertical cone. This implies $v \notin \mathcal{L}^c$. So the leaves of $\mathcal{F}^c$ must have negative slopes. Hence $\Delta_t$ is transversal to $\mathcal{F}^c$.
\end{proof}

Denote the cardinality of a set $S$ by $\#S$.

\begin{lemma}\label{LEM:TRANSVERSEMEASURE}
	Let $I_\phi\subset \mathbb{T}_\phi$ be an arbitrary interval and let $\tau$ be a local transversal to $\mathcal{F}^c$
so that $I_\phi$ and $\tau$ correspond under $\mathcal{F}^c$-holonomy.  For any smooth curve $\gamma$ that can be represented as a graph of a smooth function $\gamma(\phi):X_t\rightarrow \mathbb{T}$, we have
	$$\lim_{n\rightarrow \infty} \frac{1}{k^n} \#\left\{B^{-n}(\gamma)\cap \tau\right\}=\eta_\phi (I_\phi), $$
	where $\eta_\phi$ is the MME for $B_{\phi,t}$.
\end{lemma}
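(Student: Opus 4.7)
The plan is to first establish the result when $\tau = I_\phi$ (a vertical fiber), where it reduces to the classical equidistribution of preimages for expanding circle maps, and then to transfer to a general transversal $\tau$ using $\mathcal{F}^c$-holonomy together with a sandwich argument. The key geometric input is that, because $B_{\phi,t}$ is uniformly expanding on $X_t$ by Proposition~\ref{PROP:EXPANDING_BP}, the connected components of $B^{-n}(\gamma)$ become $C^1$-close to leaves of $\mathcal{F}^c$ as $n\to\infty$.

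First I would verify the geometric structure of $B^{-n}(\gamma)$. Since $B_{\phi,t}^n$ is a $k^n$-to-one expanding covering of $\mathbb{T}$ for each $\phi\in X_t$, the implicit function theorem applied to $B_{\phi,t}^n(\theta)=\gamma(\phi)$ shows that $B^{-n}(\gamma)$ is a disjoint union of exactly $k^n$ smooth curves $C_n^{(j)}$, each a graph $\theta = \theta_n^{(j)}(\phi)$ over $X_t$. Writing $m(\phi,\theta)$ for the slope of $\mathcal{L}^c$ at $(\phi,\theta)$, implicit differentiation of $B_{\phi,t}^n(\theta_n^{(j)}(\phi)) = \gamma(\phi)$ gives
\begin{equation*}
s_n^{(j)}(\phi) \;=\; \frac{\gamma'(\phi) - \partial_\phi B_{\phi,t}^n(\theta_n^{(j)}(\phi))}{(B_{\phi,t}^n)'(\theta_n^{(j)}(\phi))},
\end{equation*}
while the $DB$-invariance of $\mathcal{L}^c$ yields an identical formula for $m(\phi,\theta_n^{(j)}(\phi))$ with $\gamma'(\phi)$ replaced by $m$ evaluated at $B^n(\phi,\theta_n^{(j)}(\phi))$. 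Subtracting, the numerators stay bounded while $(B_{\phi,t}^n)'\geq c\lambda^n$ by Proposition~\ref{PROP:EXPANDING_BP}(ii), so $|s_n^{(j)} - m|=O(\lambda^{-n})$ uniformly. Integrating, each $C_n^{(j)}$ is $C^1$-close to a leaf of $\mathcal{F}^c$ for $n$ large.

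For $\tau = I_\phi$, the intersection $B^{-n}(\gamma)\cap I_\phi$ is exactly the set of preimages of $y := \gamma(\phi)$ in $I_\phi$ under $B_{\phi,t}^n$; the classical equidistribution of preimages for $C^2$ expanding circle maps gives $\tfrac{1}{k^n}\sum_{B_{\phi,t}^n(\theta)=y}\delta_\theta \to \eta_{\phi,t}$ weakly, and non-atomicity of $\eta_{\phi,t}$ then yields $\tfrac{1}{k^n}\#(B^{-n}(\gamma)\cap I_\phi)\to\eta_{\phi,t}(I_\phi)$. For a general $\tau$ corresponding to $I_\phi$ under $\mathcal{F}^c$-holonomy, fix $\epsilon>0$ and let $I_\phi^{\pm\epsilon}$ denote $I_\phi$ enlarged (resp.~shrunk) by $\epsilon$ in the vertical direction. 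Since each $C_n^{(j)}$ is $C^1$-close to a central leaf, any component meeting $\tau$ also meets $I_\phi^{+\epsilon}$, and any component meeting $I_\phi^{-\epsilon}$ also meets $\tau$, so
\begin{equation*}
\#(B^{-n}(\gamma)\cap I_\phi^{-\epsilon}) \;\leq\; \#(B^{-n}(\gamma)\cap \tau) \;\leq\; \#(B^{-n}(\gamma)\cap I_\phi^{+\epsilon})
\end{equation*}
for all $n$ sufficiently large. Dividing by $k^n$, applying the vertical-fiber case to $I_\phi^{\pm\epsilon}$, and letting $\epsilon\to 0$ by non-atomicity of $\eta_{\phi,t}$ completes the proof. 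The main obstacle is this last sandwich step: one must verify uniformly in $j$ that any component $C_n^{(j)}$ contributing to the difference between these counts has its associated central leaf passing within $\epsilon$ of an endpoint of $I_\phi$, which requires the uniform $O(\lambda^{-n})$ control of $|s_n^{(j)}-m|$ together with the continuity of $\mathcal{F}^c$-holonomy.
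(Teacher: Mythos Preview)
Your argument is correct, but the paper proceeds quite differently, and the comparison is worth noting. Rather than pulling $\gamma$ back and showing that the components of $B^{-n}(\gamma)$ align with $\mathcal{F}^c$, the paper pushes the transversals forward: it lifts to $X_t\times\mathbb{R}$, defines $\ell(\xi)=|\tilde\pi_2(\xi)|$, and observes that $\ell(\tilde B^n(\tau))/2\pi$ agrees with $\#\{B^{-n}(\gamma)\cap\tau\}$ up to a bounded error. The crucial point is then that $|\ell(\tilde B^n(I_\phi))-\ell(\tilde B^n(\tau))|$ stays bounded for all $n$, simply because the endpoints of $I_\phi$ and $\tau$ lie on common leaves of $\mathcal{F}^c$ and $B$ carries leaves to leaves. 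Dividing by $k^n$ kills the bounded discrepancy, and the vertical case gives the limit $\eta_\phi(I_\phi)$.

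The paper's route is shorter and needs only the invariance of $\mathcal{F}^c$ and crude counting via lifts; it never computes slopes or invokes the exponential contraction in the central direction. Your route, by contrast, makes explicit the partially hyperbolic mechanism---that preimage curves are exponentially attracted to the central foliation---and this viewpoint generalizes readily to other skew products over expanding bases. The price is the sandwich step you flagged: one must check that a component meeting $I_\phi^{-\epsilon}$ genuinely crosses $\tau$, which uses that the holonomy image of an interior point of $I_\phi$ is an interior point of $\tau$, together with transversality of near-central curves to~$\tau$. This is fine but is exactly the kind of bookkeeping the paper's forward-iteration argument sidesteps.
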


\begin{proof}
Let $\tilde{B}:X_t\times \mathbb{R} \rightarrow X_t \times \mathbb{R}$
be a lift of $B$ and let $\widetilde{\gamma}_0$ be a lift of
$\gamma$ to $X_t \times \mathbb{R}$.  Meanwhile, let $\widetilde{\gamma}$ denote the union of all lifts of $\gamma$,
i.e.\ $\widetilde{\gamma}$ is the union of vertical translations of $\widetilde{\gamma}_0$ by every integer multiple of $2\pi$.

Let $\pi_2: X_t \times \mathbb{T} \rightarrow \mathbb{T}$ and $\tilde{\pi}_2: X_t \times \mathbb{R} \rightarrow \mathbb{R}$
denote the projections onto the second coordinate.

Suppose $\xi \subset X_t \times \mathbb{R}$ is any smooth curve.  Let 
$\ell(\xi):=|\tilde{\pi}_2 (\xi)|$ denote the Euclidean
length of its projection to the second coordinate $\mathbb{R}$.
For any integer $n>0$, the quantity $\ell(\tilde{B}^n(\tau))/2\pi$
is the number of times the curve $\pi_2(B^n(\tau))$ wraps around the circle $\mathbb{T}$, which
differs from the number of times $\tilde{B}^n(\tau)$ intersects
$\widetilde{\gamma}$ by at most $1+\frac{M}{2\pi}$, where $M=\ell(\widetilde{\gamma}_0)$, i.e.
	$$\left|\frac{\ell(\tilde{B}^n(\tau))}{2\pi}-\#\left\{\tilde{B}^n(\tau)\cap \widetilde{\gamma}\right\} \right|\leq 1+\frac{M}{2\pi}.$$
	On the other hand, since $B$ maps each leaf of $\mathcal{F}^c$ to another leaf of $\mathcal{F}^c$ , there exists a constant $A>0$ such that for all $n>0$,
	$$\left|\ell(\tilde{B}^n(I_\phi))-\ell(\tilde{B}^n(\tau))\right|<A.$$ 
	Thus, 
	$k^{-n} \left|\ell(\tilde{B}^n(I_\phi))-\ell(\tilde{B}^n(\tau))\right|\rightarrow 0$ as $n\rightarrow \infty$.  Therefore, 
we have
	$$\lim_{n\rightarrow \infty} \frac{1}{k^n} \left|\#\left\{ B^{-n}(\gamma)\cap I_\phi\right\}-\#\left\{ B^{-n}(\gamma)\cap \tau\right\}\right|=0.$$
	Finally, notice that $\lim_{n \rightarrow \infty}k^{-n} \#\{B^{-n}(\gamma)\cap I_\phi\}$ converges
to the measure of $I_\phi$ under the measure of maximal entropy $\eta_\phi$ for $B_{\phi,t}$.
In particular, 
	$$\lim_{n\rightarrow \infty}\frac{1}{k^n}\#\left\{ B^{-n}(\gamma)\cap I_\phi\right\}=\lim_{n\rightarrow \infty} \frac{1}{k^n} \#\left\{B^{-n}(\gamma)\cap \tau\right\}=\eta_\phi(I_\phi).$$
\end{proof}

Suppose $\tau_1$ and $\tau_2$ correspond under $\mathcal{F}^c$-holonomy.  Then, for any $\phi \in X_t$ 
we can find $I_\phi\subset \mathbb{T}_\phi$ so that $I_\phi$ corresponds to both $\tau_1$ and $\tau_2$ under 
$\mathcal{F}^c$-holonomy.  Therefore, the following is a direct consequence of Lemma \ref{LEM:TRANSVERSEMEASURE}.
\begin{lemma}\label{PROP:TRANSVERSEMEASURE}
	For any two transversals $\tau_1$ and $\tau_2$ that correspond under $\mathcal{F}^c$-holonomy
we have 
	$$\lim_{n\rightarrow \infty}\frac{1}{k^n}\#\left\{ B^{-n}(\gamma)\cap \tau_1\right\}=\lim_{n\rightarrow \infty} \frac{1}{k^n} \#\left\{B^{-n}(\gamma)\cap \tau_2\right\}.$$
\end{lemma}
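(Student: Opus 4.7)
The plan is to reduce the statement to a single application of Lemma \ref{LEM:TRANSVERSEMEASURE} by exhibiting a common vertical transversal that corresponds to both $\tau_1$ and $\tau_2$ under $\mathcal{F}^c$-holonomy. The saturation of $\tau_1$ by leaves of $\mathcal{F}^c$ agrees with the saturation of $\tau_2$, by the very definition of $\tau_1$ and $\tau_2$ corresponding under $\mathcal{F}^c$-holonomy. So the strategy is to slice this common saturated set by a vertical fiber $\mathbb{T}_{\phi_0}$.

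First, I would choose $\phi_0\in X_t$ lying in the $\phi$-projection of both transversals; such a $\phi_0$ exists because the leaves of $\mathcal{F}^c$ are graphs over $X_t$ (they have negative slope by Lemma \ref{LEM:DIAGONAL}, and they are transported by holonomy to every vertical fiber in $X_t$). Next, define
\begin{equation*}
I_{\phi_0} := \left( \bigcup_{x\in \tau_1} \mathcal{F}^c(x) \right) \cap \mathbb{T}_{\phi_0},
\end{equation*}
where $\mathcal{F}^c(x)$ denotes the leaf through $x$. Because leaves have negative slope and depend continuously on the base point, $I_{\phi_0}$ is a (vertical) interval inside $\mathbb{T}_{\phi_0}$. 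By construction, $I_{\phi_0}$ corresponds to $\tau_1$ under $\mathcal{F}^c$-holonomy, and since the saturations of $\tau_1$ and $\tau_2$ coincide, $I_{\phi_0}$ also corresponds to $\tau_2$ under $\mathcal{F}^c$-holonomy.

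Finally, I would apply Lemma \ref{LEM:TRANSVERSEMEASURE} twice: once with the pair $(I_{\phi_0},\tau_1)$ and once with $(I_{\phi_0},\tau_2)$. Each yields
\begin{equation*}
\lim_{n\rightarrow \infty}\frac{1}{k^n}\#\left\{ B^{-n}(\gamma)\cap \tau_i\right\}=\eta_{\phi_0}(I_{\phi_0})\qquad (i=1,2),
\end{equation*}
and equating the two right-hand sides gives the claim. No step here is a real obstacle, since the hard analytic input (the convergence to $\eta_{\phi_0}(I_{\phi_0})$ via comparing arclengths of $\widetilde{B}^n$-images and using the bounded distortion between holonomy-related transversals) is already packaged inside Lemma \ref{LEM:TRANSVERSEMEASURE}. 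The only point that deserves care is the verification that $I_{\phi_0}$ is a genuine interval, which relies on the negative-slope property of $\mathcal{F}^c$ established in Lemma \ref{LEM:DIAGONAL}.
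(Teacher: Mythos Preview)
Your proposal is correct and follows essentially the same approach as the paper: find a vertical interval $I_\phi \subset \mathbb{T}_\phi$ that corresponds under $\mathcal{F}^c$-holonomy to both $\tau_1$ and $\tau_2$, then apply Lemma~\ref{LEM:TRANSVERSEMEASURE} twice. The paper's version is terser (it simply notes that such an $I_\phi$ exists for any $\phi \in X_t$ since leaves are graphs over all of $X_t$), whereas you add the explicit construction via saturation and the justification that $I_{\phi_0}$ is an interval, but the argument is the same.
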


\vspace{0.1in}

\begin{proof}[Proof of Proposition \ref{TRANSINVARMAINPROP}]
For each $\tau$ transversal to $\mathcal{F}^c$, Lemma
\ref{LEM:TRANSVERSEMEASURE} defines a premeasure on the algebra generated by
intervals on $\tau$. By the Carath\'eodory Extension Theorem, this premeasure
can be extended to a Borel measure. Lemma \ref{PROP:TRANSVERSEMEASURE} gives
that the measure is invariant under the holonomy transformation. It follows
from the construction above that $\eta$ is the MME $\eta_{\phi,t}$ on each
vertical fiber $\mathbb{T}_\phi$. By taking $\gamma(\phi)\equiv\pi$ and
$\tau=\Delta_t$, it follows that on the diagonal $\Delta_t$, $\eta$ is the
limiting measure of the Lee--Yang zeros for the rooted Cayley Tree. \par
\end{proof}

\begin{proof}[Proof of Proposition \ref{PROP:ROOTED_AND_FULL_SAME_MEASURE}:]
Fix $0 \leq t < 1$ in order to show that $\widehat{\mu_t} = \mu_t$, where
$\widehat{\mu}_t$ denotes the limiting measure of Lee--Yang zeros for the
full Cayley Tree and temperature $t$.  Let us now emphasize the branching
number in the notation for the skew product by writing $B_{k}(\phi,\theta) =
(\phi, B_{\phi,t,k}(\theta))$.  Then $B_{k+1}^{-1}(\theta=\pi)$ is a disjoint
union $\bigcup_{i=1}^{k+1} \gamma_i$ of $k+1$ smooth curves, each can be
represented as a graph $\gamma_i:X_t \rightarrow \mathbb{T}$. For any $\phi_1 < \phi_2$ let
$\Delta_{\phi_1,\phi_2}$ be the part of the diagonal $\Delta$ whose first
coordinate satisfies $\phi_1 \leq \phi \leq \phi_2$.  Then, Lemma
\ref{LEM:TRANSVERSEMEASURE} gives that
	$$\lim_{n\rightarrow \infty} \frac{1}{k^n(k+1)} \#\left\{B_k^{-n}\circ B_{k+1}^{-1}(\theta=\pi)\cap \Delta_{\phi_1,\phi_2}\right\}=\eta_\phi (I_\phi) = \lim_{n\rightarrow \infty} \frac{1}{k^n} \#\left\{B_k^{-n}(\theta=\pi)\cap \Delta_{\phi_1,\phi_2}\right\},$$
where $I_\phi$ is the holonomy image of $\Delta_\delta$ along $\mathcal{F}^c$.
Meanwhile, it follows from Lemma \ref{PROP:RENORM} that the left hand side of the equation is 
$\widehat{\mu_t}([\phi_1,\phi_2])$ and the right hand side is $\mu_t([\phi_1,\phi_2])$.
\end{proof}

As we mentioned in the introduction, from now on we will ignore the distinction between the full Cayley Tree and the rooted Cayley Tree, so the term ``Cayley Tree'' will refer to either of them.\par

\subsection{Regularity of the holonomy along $\mathcal{F}^c$.}
We finish this section with the next two results which will be used in the proof of Theorems B and C.


\begin{proposition}\label{LEM:HOLHOLDER}
Let $\phi_1 \in X_t$. Then for any $\epsilon > 0$, there exists $\delta > 0$ so that if
$|\phi_1 - \phi_2|~\leq~\delta$,  then the holonomy transformation $g_{\phi_1,\phi_2}$ is H\"{o}lder continuous with exponent $1-\epsilon$ and multiplicative constant independent of $\phi_2$.
\end{proposition}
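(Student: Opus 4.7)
The plan is to identify the holonomy $g_{\phi_1,\phi_2}$ with the conjugacy produced by Shub's Theorem, whereupon Proposition~\ref{PROP:HOLDERCONJ} directly supplies the desired Hölder estimate.

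\emph{Step 1 (Intertwining relation).} Since $\mathcal{F}^c$ is $B^{-1}$-invariant, equivalently $B$ maps central leaves to central leaves, if $(\phi_1,\theta)$ and $(\phi_2,g_{\phi_1,\phi_2}(\theta))$ lie on a common leaf, so do their images $(\phi_1,B_{\phi_1,t}(\theta))$ and $(\phi_2,B_{\phi_2,t}(g_{\phi_1,\phi_2}(\theta)))$. Hence
\begin{equation*}
g_{\phi_1,\phi_2}\circ B_{\phi_1,t}=B_{\phi_2,t}\circ g_{\phi_1,\phi_2},
\end{equation*}
so $g_{\phi_1,\phi_2}$ is a topological conjugacy between the expanding circle maps $B_{\phi_1,t}$ and $B_{\phi_2,t}$.

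\emph{Step 2 ($C^0$-continuity of the holonomy in $\phi_2$).} I need that $g_{\phi_1,\phi_2}\to\mathrm{id}$ in $C^0$ as $\phi_2\to\phi_1$. Because the central line field $\mathcal{L}^c$ depends continuously on the base point, integrating it on the compact cylinder $X_t\times\mathbb{T}$ produces leaves varying continuously with initial conditions; combined with the trivial identity $g_{\phi_1,\phi_1}=\mathrm{id}$ this yields uniform $C^0$-continuity. Simultaneously, smooth dependence of $B_{\phi,t}$ on $\phi$ gives $\|B_{\phi_1,t}-B_{\phi_2,t}\|_{C^2}\to 0$.

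\emph{Step 3 (Identification and Hölder estimate).} Fix $\epsilon'>0$. By Shub's Theorem, for $|\phi_1-\phi_2|$ sufficiently small there exists a $C^0$ conjugacy $h$ between $B_{\phi_1,t}$ and $B_{\phi_2,t}$ with $\|h-\mathrm{id}\|_{C^0}<\epsilon'$. The composition $g_{\phi_1,\phi_2}\circ h^{-1}$ commutes with $B_{\phi_1,t}$, hence lies in the topological centralizer of $B_{\phi_1,t}$ in $\mathrm{Homeo}^+(\mathbb{T})$. Because $B_{\phi_1,t}$ is topologically conjugate to $z\mapsto z^k$, whose centralizer in $\mathrm{Homeo}^+(\mathbb{T})$ is the finite cyclic group of rotations by multiples of $2\pi/(k-1)$, this centralizer is discrete. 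Shrinking $\epsilon'$ (and, via Step~2, $|\phi_1-\phi_2|$) so that only the identity element lies within $C^0$-distance $2\epsilon'$ of $\mathrm{id}$ forces $g_{\phi_1,\phi_2}=h$. Proposition~\ref{PROP:HOLDERCONJ} then furnishes the Hölder exponent $1-\epsilon$ with multiplicative constant independent of $\phi_2$.

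The main obstacle I anticipate is Step~2, since translating continuity of $\mathcal{L}^c$ into genuinely uniform $C^0$-control on the holonomy requires care; once this is established, Shub's Theorem and Proposition~\ref{PROP:HOLDERCONJ}, already assembled in Section~\ref{SEC:BASIC_REAL}, do the rest of the work.
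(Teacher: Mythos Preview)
Your proof is correct and follows the same approach as the paper: identify the holonomy as a conjugacy between $B_{\phi_1,t}$ and $B_{\phi_2,t}$, then invoke Proposition~\ref{PROP:HOLDERCONJ}. Your Steps~2--3 make explicit what the paper's two-sentence proof leaves implicit (namely that the holonomy is the close-to-identity conjugacy to which the proof of Proposition~\ref{PROP:HOLDERCONJ} actually applies); the centralizer argument in Step~3 is valid, though you could shortcut it by observing that the proof of Proposition~\ref{PROP:HOLDERCONJ} works verbatim for \emph{any} conjugacy $C^0$-close to the identity, which the holonomy is by your Step~2.
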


\begin{proof}
Since $B$ maps leaves of $\mathcal{F}^c$ to leaves of $\mathcal{F}^c$, the holonomy transformation $g_{\phi_1,\phi_2}$
provides a conjugacy between $B_{\phi_1}$ and $B_{\phi_2}$.  The result then follows from Proposition \ref{PROP:HOLDERCONJ}.
\end{proof}

Parameterize the diagonal $\Delta_t$ by the horizontal coordinate $\phi \mapsto
(\phi,\phi)$.   Let $\phi_1 \in X_t$ and $\delta > 0$ be chosen so that the
sub-interval $\Delta_\delta \equiv \Delta_{\phi_1,\delta} \subset \Delta_t$
consisting of those points whose first coordinate lies in
$[\phi_1,\phi_1+\delta]$ satisfies $\Delta_{\phi_1,\delta} \subset X_t \times
\mathbb{T}$.  Let $I_\delta \equiv I_{\phi_1,\delta} \subset \mathbb{T}_{\phi_1}$
be the vertical interval that corresponds to $\Delta_{\phi_1,\delta}$ under
$\mathcal{F}^c$-holonomy.  See Figure \ref{FIG_DIAGONAL_HOL} for an
illustration.

\begin{figure}
\begin{picture}(0,0)%
\includegraphics{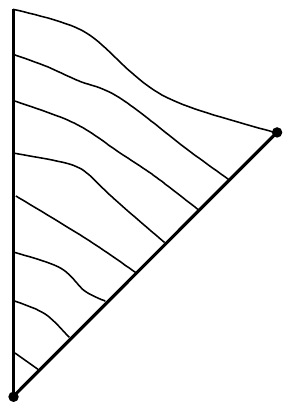}%
\end{picture}%
\setlength{\unitlength}{3947sp}%
\begingroup\makeatletter\ifx\SetFigFont\undefined%
\gdef\SetFigFont#1#2#3#4#5{%
  \reset@font\fontsize{#1}{#2pt}%
  \fontfamily{#3}\fontseries{#4}\fontshape{#5}%
  \selectfont}%
\fi\endgroup%
\begin{picture}(1734,2027)(2979,-4841)
\put(2994,-3822){\makebox(0,0)[lb]{\smash{{\SetFigFont{10}{12.0}{\familydefault}{\mddefault}{\updefault}{\color[rgb]{0,0,0}$I_{\phi_1,\delta}$}%
}}}}
\put(4077,-4125){\makebox(0,0)[lb]{\smash{{\SetFigFont{10}{12.0}{\familydefault}{\mddefault}{\updefault}{\color[rgb]{0,0,0}$\Delta_{\phi_1,\delta}$}%
}}}}
\put(4014,-3151){\makebox(0,0)[lb]{\smash{{\SetFigFont{10}{12.0}{\familydefault}{\mddefault}{\updefault}{\color[rgb]{0,0,0}$\mathcal{F}^c$}%
}}}}
\put(4698,-3493){\makebox(0,0)[lb]{\smash{{\SetFigFont{10}{12.0}{\familydefault}{\mddefault}{\updefault}{\color[rgb]{0,0,0}$(\phi_1+\delta,\phi_1+\delta)$}%
}}}}
\put(3458,-4777){\makebox(0,0)[lb]{\smash{{\SetFigFont{10}{12.0}{\familydefault}{\mddefault}{\updefault}{\color[rgb]{0,0,0}$(\phi_1,\phi_1)$}%
}}}}
\end{picture}%
\caption{Configuration of transversals $I_\delta$ and $\Delta_\delta$ from Lemma \ref{LEM:DIAGONALHOLDER}.\label{FIG_DIAGONAL_HOL}}
\end{figure}

\begin{lemma}\label{LEM:DIAGONALHOLDER}
For any $\phi_1 \in X_t$ and any $0<\gamma<1$, there exists $\delta >0$ such that \\ $g_{I_{\phi_1,\delta}, \Delta_{\phi_1,\delta}}:I_{\phi_1,\delta} \rightarrow \Delta_{\phi_1,\delta}$ is H\"older continuous with exponent $\gamma$.
\end{lemma}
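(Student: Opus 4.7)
The plan is to reduce the diagonal holonomy to the vertical-fiber holonomy of Proposition \ref{LEM:HOLHOLDER}, exploiting the negative-slope property of the leaves (Lemma \ref{LEM:DIAGONAL}) to absorb a potentially obstructive self-referential term. Since the central linefield $\mathcal{L}^c$ lies in the horizontal cone, the leaves of $\mathcal{F}^c$ may be written locally as $C^1$ graphs $\theta = f_\theta(\phi)$, where $f_\theta$ denotes the leaf through $(\phi_1,\theta)$, and Lemma \ref{LEM:DIAGONAL} gives $f_\theta'(\phi) < 0$ throughout. For $\theta \in I_{\phi_1,\delta}$ let $\phi^*(\theta)$ denote the unique value with $f_\theta(\phi^*(\theta)) = \phi^*(\theta)$; parameterizing $\Delta_{\phi_1,\delta}$ by its first coordinate, the holonomy $g_{I_{\phi_1,\delta},\Delta_{\phi_1,\delta}}$ is exactly $\theta \mapsto \phi^*(\theta)$.

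The central computation takes $\theta_1,\theta_2 \in I_{\phi_1,\delta}$, writes $f_i := f_{\theta_i}$ and $\phi^*_i := \phi^*(\theta_i)$, and uses $f_i(\phi^*_i) = \phi^*_i$ to obtain
\begin{equation*}
\phi^*_2 - \phi^*_1 \;=\; f_2(\phi^*_2) - f_1(\phi^*_1) \;=\; \bigl[f_2(\phi^*_2) - f_1(\phi^*_2)\bigr] + \int_{\phi^*_1}^{\phi^*_2} f_1'(\phi)\, d\phi.
\end{equation*}
Rearranging,
\begin{equation*}
\int_{\phi^*_1}^{\phi^*_2} \bigl(1 - f_1'(\phi)\bigr)\, d\phi \;=\; f_2(\phi^*_2) - f_1(\phi^*_2),
\end{equation*}
and because $f_1' < 0$ the integrand satisfies $1 - f_1' \geq 1$. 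Taking absolute values yields
\begin{equation*}
|\phi^*_2 - \phi^*_1| \;\leq\; |f_2(\phi^*_2) - f_1(\phi^*_2)| \;=\; \bigl|g_{\phi_1,\phi^*_2}(\theta_2) - g_{\phi_1,\phi^*_2}(\theta_1)\bigr|.
\end{equation*}

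Given $\gamma \in (0,1)$, set $\epsilon := 1 - \gamma$ and let $\delta' > 0$ be the constant from Proposition \ref{LEM:HOLHOLDER}, so that whenever $|\phi^*_2 - \phi_1| \leq \delta'$ the vertical holonomy $g_{\phi_1,\phi^*_2}$ is H\"older with exponent $\gamma$ and multiplicative constant $C$ independent of $\phi^*_2$. Since $\phi^*$ is continuous at $\phi_1$ with $\phi^*(\phi_1) = \phi_1$, we may shrink $\delta$ further so that $\phi^*(I_{\phi_1,\delta}) \subset [\phi_1 - \delta', \phi_1 + \delta']$. The bound above then gives $|\phi^*_2 - \phi^*_1| \leq C|\theta_2 - \theta_1|^\gamma$ uniformly in $\theta_1,\theta_2 \in I_{\phi_1,\delta}$, which is the desired H\"older regularity.

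The main obstacle is an apparent circularity: the naive splitting gives $|\phi^*_2 - \phi^*_1| \leq (\text{H\"older term}) + L\,|\phi^*_2 - \phi^*_1|$ with $L = \sup|f_1'|$, and there is no a priori reason for $L$ to be less than one (the central slope can be arbitrarily large in magnitude depending on the expansion constant of $B_{\phi_1,t}$). Lemma \ref{LEM:DIAGONAL} dissolves this issue cleanly, because the \emph{sign} of the slope, rather than its size, is what matters: negativity of $f_1'$ forces $1 - f_1' \geq 1$, so the feedback coefficient is comfortably bounded below and the inequality closes without any control on $\sup|f_1'|$.
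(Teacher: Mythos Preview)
Your proof is correct and follows essentially the same approach as the paper's: both reduce the diagonal holonomy to the vertical-fiber holonomy of Proposition~\ref{LEM:HOLHOLDER} and invoke the negative slope of the central leaves (Lemma~\ref{LEM:DIAGONAL}) to establish the key inequality $|\phi^*_2-\phi^*_1|\le |g_{\phi_1,\phi^*_2}(\theta_2)-g_{\phi_1,\phi^*_2}(\theta_1)|$. The paper asserts this inequality in one line from the geometry, whereas you give an explicit calculus derivation via $\int(1-f_1')\,d\phi\ge|\phi^*_2-\phi^*_1|$; your expository paragraph explaining why the \emph{sign} of the slope (rather than its magnitude) is what closes the estimate is a nice addition, but the underlying argument is the same. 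One cosmetic point: the ``shrink $\delta$'' step is simpler than you make it---by definition $\phi^*(I_{\phi_1,\delta})=[\phi_1,\phi_1+\delta]$, so taking $\delta\le\delta'$ already suffices.
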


\begin{proof}
Given $\phi_1 \in X_t$ and $0 < \gamma < 1$, let $\delta > 0$ be as in Proposition
\ref{LEM:HOLHOLDER}. 
Let $x, y \in I_{\phi_1,\delta}$ be arbitrary
(assuming $x$ is above $y$ on $I_{\phi_1,\delta}$). Furthermore, let
$(\phi_2,\phi_2):=g_{I_{\phi_1,\delta},\Delta_{\phi_1,\delta}}(x)$.  Since $|\phi_1-\phi_2| \leq
\delta$, Proposition \ref{LEM:HOLHOLDER} gives that for all $x_1, x_2 \in
\mathbb{T}_{\phi_1}$, 
	
	\begin{equation*}
		|g_{\phi_1,\phi_2}(x_1)-g_{\phi_1,\phi_2}(x_2)| < C |x_1-x_2|^{\gamma},
	\end{equation*}
	where $C$ is independent of $\phi_2$. Then, since the leaves of $\mathcal{F}^c$ have negative slopes (see Lemma \ref{LEM:DIAGONAL}), we have
	\begin{equation*}
		|g_{I_\delta,\Delta_\delta}(x)-g_{I_\delta,\Delta_\delta}(y)|< |g_{\phi_1,\phi_2}(x)-g_{\phi_1,\phi_2}(y)|<C |x-y|^{\gamma}.
	\end{equation*}

\end{proof}


\section{Proof of Theorem B}
\label{SEC_PF_THMB}

The following lemma is well-known and its proof is a direct application of the definitions:

\begin{lemma}\label{LEM:HOLDERHAUSDORFF}
Let $f:X\rightarrow Y$ be an $\alpha$-H\"{o}lder continuous function between
metric spaces. Then $\textrm{HD}(f(A))\leq
\frac{1}{\alpha}\textrm{HD}(A)$ for any $A \subset X$.
\end{lemma}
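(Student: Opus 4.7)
The plan is to reduce this to the definitional behavior of Hausdorff measures under Hölder maps, which gives the bound on Hausdorff dimension by passing to the infimum. The main observation is that $\alpha$-Hölder continuity converts a cover of $A$ by small sets into a cover of $f(A)$ by sets whose diameters are controlled by an $\alpha$-power of the original diameters, and this changes the relevant exponent by a factor of $1/\alpha$.

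First I would fix the $\alpha$-Hölder constant $C > 0$ so that $d_Y(f(x),f(y)) \leq C \, d_X(x,y)^\alpha$ for all $x,y \in X$. Then for any set $U \subset X$ we have $\mathrm{diam}(f(U)) \leq C \, \mathrm{diam}(U)^\alpha$. Consequently, for any $\delta > 0$ and any countable cover $\{U_i\}$ of $A$ with $\mathrm{diam}(U_i) \leq \delta$, the family $\{f(U_i)\}$ covers $f(A)$ with $\mathrm{diam}(f(U_i)) \leq C \delta^\alpha$. Given $s > 0$, this yields
\begin{equation*}
\sum_i \mathrm{diam}(f(U_i))^{s/\alpha} \leq C^{s/\alpha} \sum_i \mathrm{diam}(U_i)^{s},
\end{equation*}
and taking infima over all such covers produces the Hausdorff (pre)measure inequality
\begin{equation*}
\mathcal{H}^{s/\alpha}_{C\delta^\alpha}(f(A)) \leq C^{s/\alpha}\, \mathcal{H}^{s}_{\delta}(A).
\end{equation*}
Letting $\delta \to 0$ gives $\mathcal{H}^{s/\alpha}(f(A)) \leq C^{s/\alpha}\, \mathcal{H}^{s}(A)$.

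To conclude, I would pick any $s > \mathrm{HD}(A)$, so that $\mathcal{H}^{s}(A) = 0$, which forces $\mathcal{H}^{s/\alpha}(f(A)) = 0$ and hence $\mathrm{HD}(f(A)) \leq s/\alpha$. Letting $s \searrow \mathrm{HD}(A)$ yields the desired bound $\mathrm{HD}(f(A)) \leq \tfrac{1}{\alpha}\,\mathrm{HD}(A)$. There is no real obstacle here: the entire argument is definitional once one has the diameter bound $\mathrm{diam}(f(U)) \leq C\,\mathrm{diam}(U)^\alpha$, which is the single place where Hölder continuity is invoked.
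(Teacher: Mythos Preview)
Your proof is correct and is exactly the standard ``direct application of the definitions'' the paper alludes to; the paper does not spell out any argument beyond that remark, so there is nothing further to compare.
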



\begin{proof}[Proof of Theorem B]

We must show that the restriction of $\mu_t$ to any compact 
interval $X_t \subset {\rm interior} (S_t)$ has Hausdorff dimension less than one.

Let $Y_t$ be some compact interval such that $X_t \subset {\rm interior} (Y_t)$ and $Y_t \subset {\rm interior} (S_t)$.
We will show that for any $\phi \in Y_t$ there exists $\delta > 0$ so that the restriction of $\mu_t$ to $(\phi,\phi+\delta)$ has
Hausdorff dimension less than one.  Such intervals form an open cover of $X_t$, so the desired result will follow by compactness.
	
Let $\phi \in Y_t$. By Lemma \ref{MMELESSTHAN1}, the MME $\eta_\phi$ of $B_{\phi,t}$ satisfies
$\textrm{HD}(\eta_\phi)=H_\phi <1$, so we can choose $0< \gamma <1$ with
$H_\phi/\gamma <1$. Then, by Lemma \ref{LEM:DIAGONALHOLDER}, there exists
$\delta > 0$ such that $g_{I_{\phi,\delta}, \Delta_{\phi,\delta}}$ is H\"older with exponent
$\gamma$, where $I_{\phi,\delta} \subset \mathbb{T}_\phi$.  (See the paragraph before the statement of Lemma \ref{LEM:DIAGONALHOLDER} for the definitions of $\Delta_{\phi,\delta}$ and $I_{\phi,\delta}$.)
    
Denote $\eta_{I_{\phi,\delta}}$ the restriction of $\eta_\phi$ onto
$I_{\phi,\delta}$. Then $\textrm{HD}(\eta_{I_{\phi,\delta}})\leq H_\phi$, so there is
a subset $E_\delta \subset I_{\phi,\delta}$ with full measure in $I_{\phi,\delta}$ (w.r.t.
$\eta_{I_{\phi,\delta}}$) satisfying $\textrm{HD}(E_\delta)\leq H_\phi$. 
	 
	 Under holonomy the set $g_{I_{\phi,\delta},\Delta_{\phi,\delta}}(E_\delta)$ has 
	 \begin{enumerate}
\item Full measure in $\Delta_{\phi,\delta}$ with respect to the Lee--Yang measure $\eta|_{\Delta_{\phi,\delta}} = \mu_t|_{\Delta_{\phi,\delta}}$,
\item $\textrm{HD}\left(g_{I_{\phi,\delta},\Delta_{\phi,\delta}}(E_\delta)\right) < 1$ (Lemma \ref{LEM:HOLDERHAUSDORFF}).
	 \end{enumerate}

\end{proof}


\section{Proof of Theorem C}
\label{SEC:PF_THMC}
Recall that for each $\phi \in {\rm interior}(S_t)$, $\nu_{\phi,t}$ denotes the
ACIM for the expanding circle map $B_{\phi,t}$. Let
$\chi_{\phi,t}\equiv\chi_{\nu_{\phi,t}}(B_{\phi,t})$ denote the Lyapunov exponent of $\nu_{\phi,t}$.  The following
proposition is a direct application of the Special Ergodic Theorem for the
observable $\log |B_{\phi,t}'|$.

\begin{proposition}\label{PROPNINE}
	For all $\epsilon>0$, $\phi \in {\rm interior}(S_t)$, there exists $G_{\phi}(\epsilon) \subset \mathbb{T}_\phi$ such that
	\begin{enumerate}
		\item $\rm{HD}(\mathbb{T}_\phi \setminus G_{\phi}(\epsilon))<1$.
		\item Each $\theta \in G_{\phi}(\epsilon)$ satisfies $\limsup_{n\rightarrow \infty} \left|\frac{1}{n} \log |(B^n_{\phi,t})'(\theta)| -\chi_{\phi,t}\right| <\epsilon.$
	\end{enumerate}
\end{proposition}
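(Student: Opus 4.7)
The plan is to realize the proposition as a direct application of the Special Ergodic Theorem of Kleptsyn--Ryzhov--Minkov to a carefully chosen observable on the circle map $B_{\phi,t}\colon \mathbb{T}_\phi\to \mathbb{T}_\phi$. Throughout, $\phi\in\mathrm{interior}(S_t)$ is fixed, so $B_{\phi,t}$ is a $C^2$ expanding circle map (Proposition \ref{PROP:EXPANDING_BP} and Remark \ref{rmk:expansionforA}) and thus has the unique ACIM $\nu_{\phi,t}$ discussed in Section \ref{SEC:BASIC_REAL}.

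First, I would choose the observable $\psi\colon \mathbb{T}_\phi\to\mathbb{R}$ defined by $\psi(\theta) := \log |B_{\phi,t}'(\theta)|$. Since $B_{\phi,t}$ is smooth and expanding on $\mathbb{T}_\phi$, $|B_{\phi,t}'|$ is bounded away from zero and infinity, so $\psi$ is continuous (indeed $C^1$). By the chain rule, the $n$th Birkhoff average of $\psi$ under $B_{\phi,t}$ is exactly
\begin{equation*}
\psi_n(\theta) = \frac{1}{n}\sum_{k=0}^{n-1}\log\bigl|B_{\phi,t}'(B_{\phi,t}^k(\theta))\bigr| = \frac{1}{n}\log\bigl|(B_{\phi,t}^n)'(\theta)\bigr|,
\end{equation*}
while the spatial average with respect to the ACIM is
\begin{equation*}
\overline{\psi} = \int_{\mathbb{T}_\phi} \log |B_{\phi,t}'|\,d\nu_{\phi,t} = \chi_{\nu_{\phi,t}}(B_{\phi,t}) = \chi_{\phi,t}
\end{equation*}
by the very definition of the Lyapunov exponent.

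Now I would apply the Special Ergodic Theorem to $f=B_{\phi,t}$ and observable $\psi$ with threshold $\epsilon/2$. This yields that the exceptional set
\begin{equation*}
K := \Bigl\{\theta\in \mathbb{T}_\phi \,:\, \limsup_{n\to\infty}\bigl|\psi_n(\theta)-\chi_{\phi,t}\bigr| > \epsilon/2\Bigr\}
\end{equation*}
has Hausdorff dimension strictly less than one. Setting $G_\phi(\epsilon) := \mathbb{T}_\phi\setminus K$, condition (1) is immediate from the Special Ergodic Theorem, and every $\theta\in G_\phi(\epsilon)$ satisfies
\begin{equation*}
\limsup_{n\to\infty}\Bigl|\tfrac{1}{n}\log\bigl|(B_{\phi,t}^n)'(\theta)\bigr|-\chi_{\phi,t}\Bigr| \leq \epsilon/2 < \epsilon,
\end{equation*}
which is condition (2). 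No genuine obstacle is anticipated here: the hypotheses of the Special Ergodic Theorem (expanding $C^2$ circle map, continuous observable) have already been verified in Section \ref{SEC:BASIC_REAL}, and the remaining work is only the identification of $\psi_n$ with $\tfrac{1}{n}\log|(B_{\phi,t}^n)'|$ via the chain rule and of $\overline{\psi}$ with $\chi_{\phi,t}$ via the definition of Lyapunov exponent.
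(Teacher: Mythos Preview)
Your proposal is correct and follows exactly the approach the paper takes: the paper simply states that Proposition \ref{PROPNINE} is ``a direct application of the Special Ergodic Theorem for the observable $\log |B_{\phi,t}'|$,'' and you have spelled out precisely that application. The only cosmetic difference is your use of threshold $\epsilon/2$ to secure the strict inequality in condition~(2), which is a harmless and sensible refinement.
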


\noindent
The proof of Theorem C will take place in two steps:

\begin{itemize}

\item[Step 1:] Using Proposition \ref{PROPNINE} and properties of the $\mathcal{F}^c$-holonomy to find Lebesgue full-measure subsets
of the diagonal on which we have good control of the Lyapunov exponents:

\begin{proposition}\label{SET}
         For any $\epsilon>0$, there is a Lebesgue full measure set $S^{+}_t(\epsilon)\subset S_t$, such that if $\phi \in S^{+}_t(\epsilon)$, then $\phi \in G_{\phi}(\epsilon)$.
\end{proposition}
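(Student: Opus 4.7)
The plan is to fix a base fiber $\mathbb{T}_{\phi_0}$, apply Proposition \ref{PROPNINE} there, and then transport the resulting ``good'' set $G_{\phi_0}(\epsilon')$ from $\mathbb{T}_{\phi_0}$ onto the diagonal via $\mathcal{F}^c$-holonomy. Following the strategy already used in the proof of Theorem~B, the complement of the image will be shown to have Hausdorff dimension less than $1$, hence Lebesgue measure zero. Covering $\operatorname{interior}(S_t)$ by countably many such neighborhoods and unioning the resulting full-measure subsets will produce $S_t^+(\epsilon)$ (the boundary $\{\pm \phi_e(t)\}$, if non-empty, is a Lebesgue-null finite set and can be discarded).

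The main technical step is a \emph{holonomy transfer} lemma: for any $\epsilon>0$ there exist $\epsilon' \in (0,\epsilon)$ and a neighborhood $U$ of $\phi_0$ in $S_t$ such that for every $\phi \in U$, if $\theta_0 \in G_{\phi_0}(\epsilon')$ then $g_{\phi_0,\phi}(\theta_0) \in G_\phi(\epsilon)$. Although $g_{\phi_0,\phi}$ is only H\"older, Shub's Theorem guarantees that $\|g_{\phi_0,\phi} - \operatorname{id}\|_{C^0} \to 0$ as $\phi \to \phi_0$, and $B'_{\phi,t}$ depends $C^0$-continuously on $(\phi,\theta)$, so $\bigl|\log|B'_{\phi,t}(g_{\phi_0,\phi}(\theta'))| - \log|B'_{\phi_0,t}(\theta')|\bigr|$ is uniformly small in $\theta'$. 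Combining this per-term estimate with the conjugacy identity $B^n_{\phi,t} \circ g_{\phi_0,\phi} = g_{\phi_0,\phi} \circ B^n_{\phi_0,t}$ and with continuity of $\chi_{\phi,t}$ in $\phi$ (evident from Proposition \ref{EXPLICITLYAP}, since $w_\mathbb{D}$ varies continuously in $\phi$) yields an $n$-uniform bound $\bigl|\tfrac{1}{n}\log|(B^n_{\phi,t})'(g_{\phi_0,\phi}(\theta_0))| - \chi_{\phi,t}\bigr| < \epsilon$ whenever $\bigl|\tfrac{1}{n}\log|(B^n_{\phi_0,t})'(\theta_0)| - \chi_{\phi_0,t}\bigr| < \epsilon'$. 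This is the main obstacle, and the resolution is that we never differentiate the H\"older holonomy: we only need its uniform $C^0$-closeness to the identity together with smoothness of $B_{\phi,t}$ in $\phi$.

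To finish, I take the contrapositive: if $\phi \in U$ satisfies $\phi \notin G_\phi(\epsilon)$ (reading $\phi$ as the point $\theta=\phi$ on $\mathbb{T}_\phi$), then $g_{\phi,\phi_0}(\phi) \in \mathbb{T}_{\phi_0}\setminus G_{\phi_0}(\epsilon')$. Parameterizing $\Delta_{\phi_0,\delta}$ by its first coordinate, the map $\phi \mapsto g_{\phi,\phi_0}(\phi)$ is exactly the inverse of the holonomy $g_{I_{\phi_0,\delta},\Delta_{\phi_0,\delta}}$ from Lemma \ref{LEM:DIAGONALHOLDER}, which can be arranged to be H\"older with exponent $\gamma$ arbitrarily close to $1$ by shrinking $\delta$. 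Since $\mathbb{T}_{\phi_0}\setminus G_{\phi_0}(\epsilon')$ has Hausdorff dimension $H < 1$ by Proposition \ref{PROPNINE}, choosing $\gamma > H$ and applying Lemma \ref{LEM:HOLDERHAUSDORFF} to the fiber-to-diagonal H\"older map shows that its image in the diagonal — which contains $\{\phi \in U : \phi \notin G_\phi(\epsilon)\}$ — has Hausdorff dimension less than $1$, hence Lebesgue measure zero in the diagonal coordinate. Covering $\operatorname{interior}(S_t)$ by countably many such neighborhoods $U$ produces the desired full-measure set $S_t^+(\epsilon)$.
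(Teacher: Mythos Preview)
Your proof is correct and follows essentially the same approach as the paper: a holonomy-transfer lemma (the paper's Lemma~\ref{THREEEPSILON}) showing $g_{\phi_0,\phi}(G_{\phi_0}(\epsilon')) \subset G_\phi(\epsilon)$ for $\phi$ near $\phi_0$, combined with Proposition~\ref{PROPNINE} on the base fiber, H\"older control of the fiber-to-diagonal holonomy (Lemma~\ref{LEM:DIAGONALHOLDER}), and Lemma~\ref{LEM:HOLDERHAUSDORFF}, followed by a countable cover. The paper additionally isolates a uniform Hausdorff-dimension bound (Lemma~\ref{HAUSDORFF2}) on $E_\phi(\epsilon)$ over a neighborhood, but your organization---choosing $\gamma$ depending on $\phi_0$ and covering countably---achieves the same end without that intermediate step.
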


\item[Step 2:] Use of Proposition \ref{SET} to prove Theorem C using a similar technique to the proof of the Ledrappier-Young formula
and then taking a countable intersection of Lebesgue full-measure sets.

\end{itemize}

\begin{proof}[Step 1: Proof of Proposition \ref{SET}:]

\begin{lemma}\label{THREEEPSILON}
	There exists $\delta>0$ such that if $|\phi_1-\phi_0|<\delta$ and $\theta_0 \in G_{\phi_0}(\epsilon/2)$, then $\theta_1:=g_{\phi_0,\phi_1}(\theta_0) \in G_{\phi_1} (\epsilon)$, where $g_{\phi_0,\phi_1}:\mathbb{T}_{\phi_0}\rightarrow \mathbb{T}_{\phi_1}$ is the holonomy transformation.
\end{lemma}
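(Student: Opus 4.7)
The plan is to combine three ingredients: continuity in $\phi$ of the Lyapunov exponent $\chi_{\phi,t}$, the fact that the conjugacy $g_{\phi_0,\phi_1}$ is $C^0$-close to the identity when $|\phi_0-\phi_1|$ is small (Shub's Theorem), and smooth dependence of $\log|B'_{\phi,t}|$ on both $\phi$ and $\theta$. The point is that even though $g_{\phi_0,\phi_1}$ is only Hölder (and hence not useful for transferring derivatives via a chain rule), the Lyapunov exponent is a Birkhoff-type sum whose summands depend only continuously on the evaluation point and the parameter, and that is enough.

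First, from the explicit formula in Proposition~\ref{EXPLICITLYAP}, $\chi_{\phi,t}$ depends continuously on $\phi$ (the fixed point $w_\mathbb{D}$ varies continuously), so I would choose $\delta_1 > 0$ so that $|\phi_1 - \phi_0| < \delta_1$ forces $|\chi_{\phi_1,t} - \chi_{\phi_0,t}| < \epsilon/4$. Next, since $\phi \mapsto B_{\phi,t}$ is continuous in $C^2$, Shub's Theorem gives $\delta_2 > 0$ so that the holonomy conjugacy $g := g_{\phi_0,\phi_1}$ satisfies $\|g - \mathrm{id}\|_{C^0} < \epsilon/(4C)$, where $C$ is a uniform Lipschitz bound for the family $\theta \mapsto \log|B'_{\phi,t}(\theta)|$ on $X_t \times \mathbb{T}$. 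Finally, by uniform continuity of $(\phi,\theta) \mapsto \log|B'_{\phi,t}(\theta)|$ on the compact set $X_t \times \mathbb{T}$, I would choose $\delta_3$ so that $|\phi_1 - \phi_0| < \delta_3$ forces $|\log|B'_{\phi_1,t}(\theta)| - \log|B'_{\phi_0,t}(\theta)|| < \epsilon/8$ for all $\theta$. Take $\delta := \min(\delta_1,\delta_2,\delta_3)$.

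The key computation is then a Birkhoff-sum comparison. Writing $\theta^{(k)}_0 := B^k_{\phi_0,t}(\theta_0)$ and using that $g$ conjugates $B_{\phi_0,t}$ to $B_{\phi_1,t}$, so that $B^k_{\phi_1,t}(\theta_1) = g(\theta^{(k)}_0)$, I would split each summand as
\begin{align*}
\log|B'_{\phi_1,t}(g(\theta^{(k)}_0))| - \log|B'_{\phi_0,t}(\theta^{(k)}_0)|
&= \bigl[\log|B'_{\phi_1,t}(g(\theta^{(k)}_0))| - \log|B'_{\phi_1,t}(\theta^{(k)}_0)|\bigr] \\
&\quad + \bigl[\log|B'_{\phi_1,t}(\theta^{(k)}_0)| - \log|B'_{\phi_0,t}(\theta^{(k)}_0)|\bigr].
\end{align*}
The first bracket is bounded by $C\,\|g-\mathrm{id}\|_{C^0} < \epsilon/4$ by the Lipschitz estimate, and the second bracket is bounded by $\epsilon/8$ by uniform continuity. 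Hence, uniformly in $n$,
\[
\left|\frac{1}{n}\log|(B^n_{\phi_1,t})'(\theta_1)| - \frac{1}{n}\log|(B^n_{\phi_0,t})'(\theta_0)|\right| \;<\; \tfrac{\epsilon}{4} + \tfrac{\epsilon}{8}.
\]

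Putting it together with the triangle inequality
\begin{align*}
\Bigl|\tfrac{1}{n}\log|(B^n_{\phi_1,t})'(\theta_1)| - \chi_{\phi_1,t}\Bigr|
&\leq \Bigl|\tfrac{1}{n}\log|(B^n_{\phi_1,t})'(\theta_1)| - \tfrac{1}{n}\log|(B^n_{\phi_0,t})'(\theta_0)|\Bigr| \\
&\quad + \Bigl|\tfrac{1}{n}\log|(B^n_{\phi_0,t})'(\theta_0)| - \chi_{\phi_0,t}\Bigr| + |\chi_{\phi_0,t} - \chi_{\phi_1,t}|,
\end{align*}
and passing to $\limsup$, the three terms are bounded by $3\epsilon/8$, $\epsilon/2$, and $\epsilon/4$ respectively, which gives a total strictly less than $\epsilon$ after a harmless tightening of the constants in the choice of $\delta$. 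I expect the only real subtlety to be book-keeping of the constants so that the three contributions fit under $\epsilon$; the actual dynamical content is just Shub's Theorem together with the smooth dependence of $B_{\phi,t}$ on $\phi$, which circumvents the failure of absolute continuity of the holonomy.
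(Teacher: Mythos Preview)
Your proposal is correct and follows essentially the same three-term triangle inequality as the paper's proof: control of $|\chi_{\phi_1,t}-\chi_{\phi_0,t}|$ via Proposition~\ref{EXPLICITLYAP}, the hypothesis $\theta_0\in G_{\phi_0}(\epsilon/2)$ for the middle term, and a bound on $\bigl|\tfrac{1}{n}\log(B_{\phi_1,t}^n)'(\theta_1)-\tfrac{1}{n}\log(B_{\phi_0,t}^n)'(\theta_0)\bigr|$ for the first. The only difference is in how that first term is handled: the paper simply points to ``as in the proof of Proposition~\ref{PROP:HOLDERCONJ}'' (the multiplier-comparison step there), whereas you spell out the argument directly via Shub's Theorem ($\|g-\mathrm{id}\|_{C^0}$ small) together with the Lipschitz/uniform continuity of $(\phi,\theta)\mapsto\log|B'_{\phi,t}(\theta)|$, splitting each Birkhoff summand into a $\theta$-variation piece and a $\phi$-variation piece. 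Your version is in fact the explicit content behind the paper's cross-reference, so this is an expository refinement rather than a different route; the constant bookkeeping you flag ($3\epsilon/8+\epsilon/2+\epsilon/4>\epsilon$) is indeed harmless and is fixed by shrinking the $\epsilon/4$ and $\epsilon/8$ targets.
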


\begin{proof}
	We compute
	\begin{align*}
	\left|\frac{1}{n}\log (B_{\phi_1,t}^n)'(\theta_1)-\chi_{\phi_1,t}\right|
	\leq \left|\log \left(\frac{(B_{\phi_1,t}^n)'(\theta_1)}{ (B_{\phi_0,t}^n)'(\theta_0)}\right)^\frac{1}{n}\right|+\left|\frac{1}{n}\log (B_{\phi_0,t}^n)'(\theta_0)-\chi_{\phi_0,t}\right|+\left|\chi_{\phi_1,t}-\chi_{\phi_0,t}\right|.
	\end{align*}
	By Proposition \ref{EXPLICITLYAP}, $\chi_{\phi,t}$ is continuous in $\phi$. Meanwhile, since $(\phi_0,\theta_0)$ and $(\phi_1,\theta_1)$ are in the same leaf of $\mathcal{F}^c$, as in the proof of Proposition \ref{PROP:HOLDERCONJ}, the first term above can be made less than $\epsilon/4$ whenever $|\phi_0-\phi_1|$ is sufficiently small. Therefore, there exists $\delta >0$ so that 
	$$\left|\frac{1}{n}\log (B_{\phi_1,t}^n)'(\theta_1)-\chi_{\phi_1,t}\right| \leq 
	\left|\frac{1}{n}\log (B_{\phi_0,t}^n)'(\theta_0)-\chi_{\phi_0,t}\right| +\frac{\epsilon}{2}.$$
	Take the limit supremum on both sides, we get
	$$\limsup_{n\rightarrow \infty} \left|\frac{1}{n}\log (B_{\phi_1,t}^n)'(\theta_1)-\chi_{\phi_1,t}\right|<\epsilon.$$
\end{proof}

\begin{lemma}\label{HAUSDORFF2}
	Let $\phi_0 \in {\rm interior}(S_t)$. Then for all $\epsilon>0$, there exists $0<H_1<1$, $\delta>0$ such that if $|\phi-\phi_0|<\delta$, then $\rm{HD}(\mathbb{T}_\phi\setminus G_{\phi}(\epsilon))<H_1<1$.
\end{lemma}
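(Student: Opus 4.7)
The plan is to pull the result back to a single base point $\phi_0$ using the central holonomy, and combine three earlier ingredients: (a) Proposition \ref{PROPNINE} applied at $\phi_0$ with tolerance $\epsilon/2$, which gives a distinguished set $G_{\phi_0}(\epsilon/2)$ whose complement in $\mathbb{T}_{\phi_0}$ has Hausdorff dimension $H_0<1$; (b) Lemma \ref{THREEEPSILON}, which says that for $|\phi-\phi_0|<\delta_1$ the holonomy $g_{\phi_0,\phi}$ carries $G_{\phi_0}(\epsilon/2)$ into $G_\phi(\epsilon)$; and (c) Proposition \ref{LEM:HOLHOLDER}, which lets me prescribe the H\"older exponent of $g_{\phi_0,\phi}$ to be as close to $1$ as I like by shrinking $|\phi-\phi_0|$, with a H\"older constant that is uniform in $\phi$.

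First I would fix the target $H_1 \in (H_0,1)$, for instance $H_1 := (1+H_0)/2$, and then choose $\epsilon''>0$ so small that $H_0/(1-\epsilon'') \le H_1$. Proposition \ref{LEM:HOLHOLDER} then yields $\delta_2>0$ such that, whenever $|\phi-\phi_0|<\delta_2$, the holonomy $g_{\phi_0,\phi}:\mathbb{T}_{\phi_0}\to\mathbb{T}_\phi$ is $(1-\epsilon'')$-H\"older with a constant independent of $\phi$. Shrinking if necessary, I may also assume $\delta_2\le \delta_1$, where $\delta_1$ is the threshold from Lemma \ref{THREEEPSILON}, and I set $\delta:=\delta_2$.

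Now suppose $|\phi-\phi_0|<\delta$. Since $g_{\phi_0,\phi}$ is a bijection and, by Lemma \ref{THREEEPSILON}, satisfies $g_{\phi_0,\phi}(G_{\phi_0}(\epsilon/2))\subset G_{\phi}(\epsilon)$, I obtain the inclusion
\begin{equation*}
\mathbb{T}_\phi\setminus G_\phi(\epsilon)\;\subset\;g_{\phi_0,\phi}\bigl(\mathbb{T}_{\phi_0}\setminus G_{\phi_0}(\epsilon/2)\bigr).
\end{equation*}
Applying Lemma \ref{LEM:HOLDERHAUSDORFF} to the $(1-\epsilon'')$-H\"older map $g_{\phi_0,\phi}$ gives
\begin{equation*}
\mathrm{HD}\bigl(\mathbb{T}_\phi\setminus G_\phi(\epsilon)\bigr)\;\le\;\frac{1}{1-\epsilon''}\,\mathrm{HD}\bigl(\mathbb{T}_{\phi_0}\setminus G_{\phi_0}(\epsilon/2)\bigr)\;\le\;\frac{H_0}{1-\epsilon''}\;\le\;H_1\;<\;1,
\end{equation*}
which is exactly the claimed bound, uniform in $\phi$.

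The only delicate point is the calibration in the first paragraph: I need the H\"older exponent $1-\epsilon''$ of the holonomy to be close enough to $1$ that multiplying the base-point dimension $H_0$ by $1/(1-\epsilon'')$ still keeps it strictly below $1$. This is possible precisely because $H_0<1$ is strict (Proposition \ref{PROPNINE}) and the H\"older exponent in Proposition \ref{LEM:HOLHOLDER} can be pushed arbitrarily close to $1$ at the cost of shrinking $\delta$; no quantitative relation between $\epsilon$ and $\delta$ is needed beyond what Lemma \ref{THREEEPSILON} and Proposition \ref{LEM:HOLHOLDER} already provide.
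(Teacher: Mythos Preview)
Your proof is correct and follows essentially the same route as the paper: apply Proposition \ref{PROPNINE} at $\phi_0$ with tolerance $\epsilon/2$ to get $H_0<1$, use Lemma \ref{THREEEPSILON} to obtain the inclusion $\mathbb{T}_\phi\setminus G_\phi(\epsilon)\subset g_{\phi_0,\phi}(\mathbb{T}_{\phi_0}\setminus G_{\phi_0}(\epsilon/2))$, and then use the H\"older regularity of the holonomy (Proposition \ref{LEM:HOLHOLDER}) together with Lemma \ref{LEM:HOLDERHAUSDORFF} to bound the dimension. The only cosmetic difference is that the paper sets the H\"older exponent to $\gamma=H_0/H_1$ whereas you write it as $1-\epsilon''$; you may also want to choose $\epsilon''$ with strict inequality $H_0/(1-\epsilon'')<H_1$ so that the final bound is strictly $<H_1$ as in the statement.
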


\begin{proof}
	Denote $E_{\phi}(\epsilon):=\mathbb{T}_\phi \setminus G_\phi (\epsilon)$. By Proposition \ref{PROPNINE}, there exists $0<H_0<1$ satisfying \begin{equation*}\textrm{HD}(E_{\phi_0}(\epsilon/2))<H_0<1,\end{equation*}
	so that we can choose $H_0<H_1<1$ with $\gamma:=H_0/H_1<1$. By Proposition \ref{PROP:HOLDERCONJ}, there exists $\delta >0$ such that $|\phi-\phi_0|<\delta$ implies $g_{\phi_0,\phi}$ is H\"{o}lder with exponent $\gamma$, then by Lemma \ref{LEM:HOLDERHAUSDORFF}, we obtain an upper bound on the Hausdorff dimension of the set $g_{\phi_0,\phi} (E_{\phi_0}(\epsilon/2))$:
	
	\begin{equation*}\textrm{HD}(g_{\phi_0,\phi} (E_{\phi_0}(\epsilon/2)))<H_0/\gamma =H_1<1.\end{equation*}
	Meanwhile, by reducing $\delta$ if necessary, the previous lemma gives that
	$$g_{\phi_0,\phi}(G_{\phi_0}(\epsilon/2))\subset G_\phi (\epsilon),$$ hence $E_\phi (\epsilon) \subseteq g_{\phi_0,\phi} (E_{\phi_0}(\epsilon/2))$.
\end{proof}

We now use Lemmas \ref{THREEEPSILON} and \ref{HAUSDORFF2} to prove Proposition \ref{SET}.

	Let $X_t \Subset {\rm interior}(S_t)$ be a compact interval. By Lemma \ref{HAUSDORFF2}, there is a constant $0< H_2 < 1$ such that for all $\phi \in X_t$, we have $\textrm{HD}(E_\phi (\epsilon/2))<H_2$. Choose $0<  \gamma < 1$ so that $H_2/\gamma <1$, then by Lemma \ref{LEM:DIAGONALHOLDER}, there exists $\delta > 0$ (which implicitly depends on $\phi$) such that 
	$g_{I_{\phi, \delta},\Delta_{\phi, \delta}}$ is H\"{o}lder continuous with exponent $\gamma$, where $I_{\phi, \delta},\Delta_{\phi, \delta}$ are defined in Lemma \ref{LEM:DIAGONALHOLDER}.
	
	Next let us fix $\phi_0 \in X_t$. For any $\theta \in I_{\phi_0,\delta} \cap G_{\phi_0}(\epsilon/2)$, Lemma \ref{THREEEPSILON} immediately gives that  
	$g_{I_{\phi_0, \delta},\Delta_{\phi_0, \delta}}(\theta) \subset G_{\phi_1} (\epsilon),$ where $\phi_1 := g_{I_{\phi_0, \delta}, \Delta_{\phi_0, \delta}} (\theta)$. 
	
	Meanwhile, since $\textrm{HD}(E_{\phi_0}(\epsilon/2)) <1$ (Proposition \ref{PROPNINE}), by the discussion in the first part of the proof and Lemma \ref{LEM:HOLDERHAUSDORFF}, the set $g_{I_{\phi_0, \delta},\Delta_{\phi_0, \delta}}(I_{\phi_0, \delta} \cap E_{\phi_0}(\epsilon/2))$ also has Hausdorff dimension less than one, hence it has Lebesgue measure $0$. Therefore, its complement, $g_{I_{\phi_0, \delta},\Delta_{\phi_0, \delta}}(I_{\phi_0, \delta} \cap G_{\phi_0}(\epsilon/2))$, must have full Lebesgue measure in $\Delta_{\phi_0, \delta}$. Since $\phi_0$ and  the compact interval $X_t$ are arbitrary, we are done.

\end{proof}

\begin{proof}[Step 2: Proof of Theorem C:]
	Let $\phi \in S^{+}_t (\epsilon_0)$, where $S^{+}_t (\epsilon_0)$ are in Proposition \ref{SET}. Consider $\widehat{\Delta}_\delta:=(\phi-\delta,\phi+\delta) \subset \Delta_t$ and denote $\widehat{I}_\delta:=g_{\Delta,\phi}(\widehat{\Delta}_\delta)$ the holonomy image of $\widehat{\Delta}_\delta$ in $\mathbb{T}_\phi$.\par
	
	Let $B_{\phi,t}^n$ be the first iterate so that $2\pi \leq |\widetilde{B}_{\phi,t}^n(\widetilde{I}_\delta)|= 2\pi k$, where $\widetilde{B}_{\phi,t}:\mathbb{R}\rightarrow \mathbb{R}$ is a lift of $B_{\phi,t}$ and $\widetilde{I}_\delta$ is the lift of $\widehat{I}_\delta$. By the Intermediate Value Theorem, there exists $\xi_\delta \in \widehat{I}_\delta$ with 
	
	\begin{equation}\label{EQ2}
	2\pi\leq |(B^n_{\phi,t})'(\xi_\delta)|\cdot|\widehat{I}_\delta|\leq 2\pi k.\end{equation}
	
	On the other hand, by Distortion Control (Lemma \ref{DENJOY}), there is a constant $M$ so that for all $I_\delta$, 
	
	$$\frac{1}{M}\cdot |(B_{\phi,t}^n)'(\phi)|\leq |(B_{\phi,t}^n)'(\xi_\delta)|\leq M\cdot |(B_{\phi,t}^n)'(\phi)|.$$

	Taking logarithm in the above inequalities and combine with (\ref{EQ2}):
	\begin{equation}\label{EQ3}
	\log 2\pi - \log M - \log|(B_{\phi,t}^n)'(\phi)|\leq \log |\widehat{I}_\delta| 
	\leq \log 2\pi k +\log M - \log|(B_{\phi,t}^n)'(\phi)|.\end{equation}
	
	Next, since the MME satisfies $(B_{\phi,t})^* \eta_{\phi,t} = k\cdot \eta_{\phi,t},$ for all $\widehat{I}_\delta$, we have $\log \eta_{\phi,t}(\widehat{I}_\delta)=\log C_\delta -n\log k,$ where $C_\delta= |\widetilde{B}_{\phi,t}^n (\widetilde{I}_\delta)|$. Under the transverse invariant measure $\eta$ (as in Proposition \ref{TRANSINVARMAINPROP}), we have 
	$$\log \mu_t (\widehat{\Delta}_\delta)=\log \eta_{\phi,t}(\widehat{I}_\delta)=\log C_\delta -n\log k.$$
	
	Hence, combine with (\ref{EQ3}):
	
	\begin{equation}\label{EQUATION5}
	\frac{\log C_\delta-n\log k}{\log 2\pi k+\log M-\log|(B_{\phi,t}^n)'(\phi)|}\leq\frac{\log \mu_t (\widehat{\Delta}_\delta)}{\log |\widehat{I}_\delta|}\leq\frac{\log C_\delta-n\log k}{\log 2\pi-\log M -\log|(B_{\phi,t}^n)'(\phi)|}.\end{equation}
	
	Divide both numerator and denominator by $-n$, and notice that $n \rightarrow \infty$ as $\delta \rightarrow 0$, so by taking limit infimum and limit supremum respectively in the first and second inequalities above, we obtain
	
	\begin{equation}\label{EQ4}\frac{\log k}{\chi_{\nu_{\phi,t}}+\epsilon_0}
	\leq \liminf_{\delta \rightarrow 0} \frac{\log \mu_t (\widehat{\Delta}_\delta)}{\log |\widehat{I}_\delta|}
	\leq \limsup_{\delta \rightarrow 0} \frac{\log \mu_t (\widehat{\Delta}_\delta)}{\log |\widehat{I}_\delta|}\leq \frac{\log k}{\chi_{\nu_{\phi,t}}-\epsilon_0}.\end{equation}
	
	Since the H\"{o}lder exponent of $g_{\widehat{I}_\delta,\widehat{\Delta}_\delta}$ converges to $1$ when $\delta \rightarrow 0$, we have $\lim_{\delta\rightarrow 0} \frac{\log |\widehat{I}_\delta|}{\log | \widehat{\Delta}_\delta|}=1$, so in (\ref{EQ4}) we can replace $\log |\widehat{I}_\delta|$ by $\log |\widehat{\Delta}_\delta|$.
	Meanwhile, by choosing $\epsilon_0$ sufficiently small, we have
	
	\begin{equation}\label{EQUATION6}\frac{\log k}{\chi_{\nu_{\phi,t}}}-\epsilon
\leq \liminf_{\delta \rightarrow 0} \frac{\log \mu_t (\widehat{\Delta}_\delta)}{\log |\widehat{I}_\delta|}
\leq \limsup_{\delta \rightarrow 0} \frac{\log \mu_t (\widehat{\Delta}_\delta)}{\log |\widehat{I}_\delta|}\leq \frac{\log k}{\chi_{\nu_{\phi,t}}}+\epsilon.\end{equation}
Let
$$S^{+}_t:=\bigcap_{n=1}^\infty S^{+}_t(1/n).$$
The set $S^{+}_t$ satisfies the following:

\begin{enumerate}
    \item[{\rm (i)}] It has full Lebesgue measure, since each $S^{+}_t(1/n)$ has full Lebesgue measure (Proposition \ref{SET});
    \item[{\rm (ii)}] By (\ref{EQUATION6}), for each $\phi \in S^{+}_t$ the pointwise dimension of the Lee-Yang measure $\mu_t$ satisfies $d_{\mu_t}(\phi)=\log k/\chi_{\nu_{\phi,t}}$.
    \item[{\rm (iii)}] By the Ledrappier--Young Formula, $\chi_{\phi,t}=h_{\nu_{\phi,t}}(B_{\phi,t})<\log k$, so for $\phi \in S^{+}_t$ we have 
$d_{\mu_t}(\phi)>1$.
\end{enumerate}
\par

\vspace{0.1in}
Now, the construction of the dense set $S^{-}_t \subset S_t$ such that $\kappa_t(\phi) < 1$ for any $\phi \in S^{-}_t$ will be an easy adaptation of the previous discussion.
Consider any  $\phi \in S_t$. For $\eta_{\phi,t}$-almost every  $\theta \in \mathbb{T}_\phi$ (recall $\eta_{\phi,t}$ is the MME for the map $B_{\phi,t}:\mathbb{T}_\phi\rightarrow\mathbb{T}_\phi$), its pointwise Lyapunov exponent satisfies
\begin{equation}\label{SUPPORTMU}
\lim_{n\rightarrow \infty} \frac{1}{n} \log |(B_{\phi,t}^n)'(\theta)| = \chi_{\eta_{\phi,t}}(B_{\phi,t}) > \log k,\end{equation}
by Proposition \ref{PROP:HD} and the Ledrappier--Young Formula. Since $\eta_{\phi,t}$ is supported on the whole circle, the set of points $\theta \in \mathbb{T}_\phi$ satisfying Inequality (\ref{SUPPORTMU}) is dense. Hence, using that the Lyapunov exponent (with respect to the MME), $\chi_{\eta_{\phi,t}}(B_{\phi,t})$, is continuous in $\phi$ \cite{M}, and that the holonomy transformations are homeomorphisms onto their images, we conclude that there is a dense subset $S^-_t$ of $S_t$ such that each $\phi \in S^-_t$ satisfies
\begin{equation*}
\lim_{n\rightarrow \infty} \frac{1}{n} \log |(B_{\phi,t}^n)'(\phi)| > \log k.\end{equation*}
The result then follows using Inequalities (\ref{EQUATION5}) and taking $\delta \rightarrow 0$.
\end{proof}


\section{Proof of Theorem D}
\label{SEC:THMD}

This section is an adaptation of the clever techniques in
M\"uller-Hartmann's paper \cite{MULLERHARTMAN} to prove Theorem D.  Because of
the ``electrostatic representation'' (\ref{electrostat rep}) for the free energy,
we will present the result as a basic fact about the logarithmic potential of a
measure.  While the Lee--Yang measure $\mu_t$ is supported on the unit circle,
we find it convenient to perform a suitable M\"obius transformation to move the
support of the measure from $\mathbb{T}$ to $\mathbb{R}$, studying
\begin{align*}
f_\mu(z) := \int_{\mathbb{R}} \log|\zeta - z| d\mu(\zeta),
\end{align*}
where $\mu$ is a probability measure on $\mathbb{R}$.

\begin{proposition}[{\bf M\"uller-Hartmann \cite{MULLERHARTMAN}}]\label{DIFFERENTIABILITY}
Let $\mu$ be a probability measure on $\mathbb{R}$ which satisfies
$$\kappa \equiv d_\mu(0) = \lim_{\delta\rightarrow 0} \frac{\log \mu ([-\delta,+\delta])}{\log 2\delta} > 0.$$ 
Then, there exists a real-analytic function $f_{{\rm reg}}(y)$ such that 
\begin{align}\label{EQN:CRITICAL_EXPONENT_LIMIT}
\lim_{y \rightarrow 0} \frac{\log \left|f_\mu(iy) - f_{{\rm reg}}(y)\right|}{\log |y|} = \kappa.
\end{align}
In other words, $f_\mu$ has critical exponent $\kappa$ when crossing $\mathbb{R}$ vertically.
\end{proposition}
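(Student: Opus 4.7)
The plan is to follow M\"uller-Hartmann's strategy from \cite{MULLERHARTMAN}: decompose $f_\mu(iy)$ as $f_{\rm reg}(y) + R(y)$, where $f_{\rm reg}$ is real-analytic and the remainder $R$ satisfies $\log|R(y)|/\log|y| \to \kappa$. First I will fix some small $\delta_0 > 0$ and split $\mu = \mu_1 + \mu_2$ with $\mu_1 := \mu|_{[-\delta_0,\delta_0]}$. Since ${\rm Supp}(\mu_2)$ is bounded away from $0$, the contribution $f_{\mu_2}(iy) = \frac{1}{2}\int \log(y^2+\zeta^2)\,d\mu_2(\zeta)$ is real-analytic in $y$ on $|y| < \delta_0/2$ and can be absorbed into $f_{\rm reg}$. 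Next, I will subtract a Taylor-type polynomial from $f_{\mu_1}(iy)$. Let $N$ be the largest non-negative integer with $2N < \kappa$. The pointwise dimension assumption, expressed through $F(s) := \mu([-s,s])$, gives $s^{\kappa+\epsilon} \leq F(s) \leq s^{\kappa-\epsilon}$ for all sufficiently small $s$ and every $\epsilon > 0$. Integration by parts then shows that the ``moments'' $m_n := \int_{|\zeta|\leq \delta_0} \zeta^{-2n}\,d\mu(\zeta)$ are finite for each $n \leq N$, so the polynomial
\[
P_N(y) := \int \log|\zeta|\,d\mu_1(\zeta) + \sum_{n=1}^N \frac{(-1)^{n-1}}{2n}\, m_n\, y^{2n}
\]
is well-defined, and $f_{\rm reg}(y) := f_{\mu_2}(iy) + P_N(y)$ is real-analytic near $0$.

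The heart of the argument is the two-sided estimate of the remainder
\[
R(y) = \frac{1}{2}\int_{|\zeta|\leq \delta_0} r_N(y,\zeta)\,d\mu(\zeta), \qquad r_N(y,\zeta) := (-1)^N \int_0^{y^2/\zeta^2} \frac{t^N}{1+t}\,dt,
\]
obtained by expanding $\tfrac{1}{2}\log(y^2+\zeta^2) = \log|\zeta| + \sum_{n\geq 1}\frac{(-1)^{n-1}}{2n}\,y^{2n}/\zeta^{2n}$ and subtracting $P_N$. The key observation is that $r_N$ has constant sign $(-1)^N$, which prevents cancellation. For the upper bound, I will split the region of integration at $|\zeta| = |y|$: the Taylor remainder gives $|r_N(y,\zeta)| \leq C (y/\zeta)^{2(N+1)}$ on $\{|\zeta|>|y|\}$, while a direct estimate yields $|r_N(y,\zeta)| \leq C (y/\zeta)^{2N}$ on $\{|\zeta|\leq|y|\}$. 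Integration by parts against $F$ combined with the bound $F(s) \leq s^{\kappa-\epsilon}$ converts each piece into a contribution of order $|y|^{\kappa-\epsilon}$, producing $|R(y)| \leq C_\epsilon |y|^{\kappa-\epsilon}$. For the matching lower bound, the sign-definiteness of $r_N$ means $|R(y)|$ is bounded below by the absolute value of any sub-contribution; restricting to $\{|\zeta| \leq |y|\}$, where $y^2/\zeta^2 \geq 1$ forces $|r_N(y,\zeta)| \geq \int_0^1 t^N/(1+t)\,dt =: c_0 > 0$, gives $|R(y)| \geq \tfrac{c_0}{2} F(|y|) \geq c_\epsilon |y|^{\kappa+\epsilon}$.

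Sandwiching yields $|y|^{\kappa+\epsilon} \lesssim |R(y)| \lesssim |y|^{\kappa-\epsilon}$ for every $\epsilon > 0$; taking logarithms, dividing by $\log|y|$, and letting $\epsilon \to 0$ produces the limit $\kappa$. The main technical obstacle will be ensuring genuinely two-sided control on $|R(y)|$: the sign-definiteness of $r_N$ together with the specific choice of $N$ as the largest integer with $2N < \kappa$ is what rules out delicate cancellations and makes the lower bound succeed. A secondary subtlety is verifying that the moments $m_n$ are finite for $n \leq N$ and that all boundary contributions in the successive integrations by parts vanish; both will follow from the two-sided bound $s^{\kappa+\epsilon} \leq F(s) \leq s^{\kappa-\epsilon}$ together with $\kappa > 2N$.
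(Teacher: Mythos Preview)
Your proposal is correct and follows essentially the same M\"uller--Hartmann strategy as the paper: split off the far part of $\mu$ as analytic, subtract a polynomial of degree $2N$ with $2N<\kappa\le 2N+2$, and use the sign-definiteness of the remainder together with the two-sided bound $s^{\kappa+\epsilon}\lesssim F(s)\lesssim s^{\kappa-\epsilon}$ to obtain matching upper and lower estimates $|y|^{\kappa+\epsilon}\lesssim |R(y)|\lesssim |y|^{\kappa-\epsilon}$. The only tactical difference is ordering: the paper first integrates by parts once to replace the $\mu$-integral by a Lebesgue integral against $\Phi(\zeta)=\mu([-\zeta,\zeta])$ and then decomposes $\tfrac{\zeta}{\zeta^2+y^2}$ via a finite geometric sum (making positivity of the singular integrand immediate and allowing a single clean substitution $\zeta=|y|e^\tau$ for both bounds), whereas you Taylor-expand $\tfrac{1}{2}\log(1+y^2/\zeta^2)$ against $\mu$ first and defer the integration by parts to the estimation stage.
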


\noindent
We will need the following integration by parts formula and include a proof
here, for lack of a convenient reference.  

\begin{lemma}[{\bf Integration by Parts}]\label{byparts}
	Let $\Phi_\mu(t)=\mu ([a,t])$ and $f:[a,b]\rightarrow \mathbb{R}$ be a differentiable function so that $f'(t) \in L^1({\rm Leb}([a,b]))$. Then we have:
	$$\int_a^b f(t)d\mu(t)=f(b)\Phi_\mu(b)-\int_a^b f'(t) \Phi_\mu (t)dt.$$
	
\end{lemma}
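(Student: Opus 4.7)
The plan is to reduce the claim to Fubini's theorem applied to a standard double integral. First I would write $f(t) = f(a) + \int_a^t f'(s)\, ds$ via the Fundamental Theorem of Calculus, which is applicable under the hypotheses that $f$ is differentiable with $f' \in L^1(\mathrm{Leb}([a,b]))$. Substituting into the left-hand side gives
\begin{equation*}
\int_a^b f(t)\, d\mu(t) = f(a)\, \Phi_\mu(b) + \int_a^b \!\!\int_a^t f'(s)\, ds\, d\mu(t).
\end{equation*}

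Next I would swap the order of integration in the double integral. The region of integration is the triangle $\{(s,t) : a \le s \le t \le b\}$, so after swapping, $s$ ranges over $[a,b]$ and for fixed $s$ the variable $t$ ranges over $[s,b]$. Since $\mu$ is finite and $f' \in L^1$, the integrand is integrable on the product space and Fubini's theorem applies, giving
\begin{equation*}
\int_a^b \!\!\int_a^t f'(s)\, ds\, d\mu(t) = \int_a^b f'(s)\, \mu([s,b])\, ds.
\end{equation*}
I would then observe that $\mu([s,b]) = \Phi_\mu(b) - \Phi_\mu(s)$ for Lebesgue-a.e.\ $s$: the discrepancy $\mu(\{s\})$ is nonzero only on the at-most-countable set of atoms of $\mu$, which has Lebesgue measure zero and hence does not affect the $ds$-integral.

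Putting the pieces together, the right-hand side becomes
\begin{equation*}
f(a)\, \Phi_\mu(b) + \Phi_\mu(b)\!\int_a^b f'(s)\, ds - \int_a^b f'(s)\, \Phi_\mu(s)\, ds = f(b)\, \Phi_\mu(b) - \int_a^b f'(s)\, \Phi_\mu(s)\, ds,
\end{equation*}
where in the last equality I used $\int_a^b f'(s)\, ds = f(b) - f(a)$ from the FTC again. This matches the desired formula. The only mild subtlety is ensuring that atoms of $\mu$ cause no trouble when passing from $\mu([s,b])$ to $\Phi_\mu(b) - \Phi_\mu(s)$, but this is handled by the a.e.\ identification noted above; everything else is a routine application of Fubini's theorem, so I do not anticipate any serious obstacle.
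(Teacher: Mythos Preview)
Your proof is correct and follows essentially the same approach as the paper: both reduce the identity to Fubini's theorem on the triangle $\{(s,t): a \le s \le t \le b\}$ combined with the Fundamental Theorem of Calculus. The only cosmetic difference is that the paper starts from $\int_a^b f'(t)\Phi_\mu(t)\,dt$ and expands $\Phi_\mu$ as an integral before swapping, whereas you start from $\int_a^b f(t)\,d\mu(t)$ and expand $f$ via FTC first; this forces you to handle the atom discrepancy $\mu([s,b])$ versus $\Phi_\mu(b)-\Phi_\mu(s)$, which the paper's ordering happens to sidestep, but your treatment of that point is correct.
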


\begin{proof}
	We have
	$$\Phi_\mu (t)=\int_a^t d\mu = \int_a^b 1_{[a,t]}(s)d\mu(s),$$
	Hence,
	
	\begin{align*} \int_a^b f'(t)\Phi_\mu(t) dt &=\int_a^b \int_a^b f'(t)1_{[a,t]}(s)d\mu(s) dt 
		=\int_a^b \int_a^b f'(t)1_{[a,t]}(s)dt d\mu(s) \\
		&=\int_a^b \int_a^b f'(t)1_{[s,b]}(t)dt d\mu(s) = \int_a^b\int_s^b f'(t) dt d\mu(s)\\
&= \int_a^b f(b)-f(s) d\mu(s) = f(b)\Phi_\mu (b)-\int_a^b f(s) d\mu(s).
	\end{align*}

\noindent
The second equality uses Fubini's Theorem, which is allowed since since $f' \in L^1({\rm Leb}([a,b]))$ implies
that $f'(t)1_{[a,t]}(s) \in L^1({\rm Leb}([a,b]) \times \mu)$, and
the third equality uses that $1_{[a,t]}(s)=1$ iff $a\leq s \leq t \leq b$
iff $1_{[s,b]}(t)=1$. 

\end{proof}

\begin{proof}[Proof of Proposition \ref{DIFFERENTIABILITY}:]
It suffices to show
that there exists real-analytic $f_{\rm reg}(y)$ such that for any $\epsilon >
0$ there are constants $C_1, C_2 >0$ so that for all $y \neq 0$ sufficiently close to
$0$ the logarithmic potential $f_\mu$ satisfies
\begin{align*}
C_1 |y|^{\kappa+\epsilon} \leq |f_\mu(iy)-f_{{\rm reg}}(y)| \leq C_2 |y|^{\kappa-\epsilon}.
\end{align*}
By assumption on $\mu$, given $\epsilon >0$, there exists $\delta_0, C>0$ such that whenever $0<\delta<\delta_0$, 
\begin{equation}\label{smallmeasure} C\delta^{\epsilon+\kappa}<\mu([-\delta,+\delta])<C\delta^{-\epsilon+\kappa}.\end{equation}
Since $$\int_{\mathbb{R} \setminus [-\delta_0,\delta_0]} \log|\zeta - iy| d\mu(\zeta)$$ is real analytic in $y$,
it is enough to consider $f_\mu$ as an integral on $[-\delta_0,\delta_0]$. 

Separate $f_\mu$ as a sum of two integrals:
\begin{align*}
f_\mu(iy)&=\int_0^{\delta_0} \log |\zeta-iy|d\mu(\zeta)+\int_{-\delta_0}^0 \log|\zeta-iy|d\mu(\zeta)\\
&=\frac{1}{2} \left(\int_0^{\delta_0} \log (\zeta^2+y^2)d\mu(\zeta)+\int_{-\delta_0}^0 \log (\zeta^2+y^2)d\mu(\zeta)\right).
\end{align*}
Applying Integration by Parts (Lemma \ref{byparts}) to the first integral gives
\begin{align*}
\int_0^{\delta_0}\log(\zeta^2+y^2)d\mu(\zeta)=\log(y^2+\delta_0^2)\Phi_\mu(\delta_0)-2\int_0^{\delta_0}\frac{\zeta \Phi_\mu(\zeta)}{\zeta^2+y^2}  d\zeta,
\end{align*}
where $\Phi_\mu(\zeta) := \mu([0,\zeta])$.
By a change of variable $h(\psi)=-\psi\equiv \zeta$, the second integral becomes
$$\int_{-\delta_0}^0 \log (\zeta^2+y^2)d\mu(\zeta)=\int_{-\delta_0}^0 \log ((-\psi)^2+y^2) d\mu(h(\psi))=\int_0^{\delta_0} \log (\psi^2+y^2)d(h_*\mu)(\psi).$$
Again, Integration by Parts gives
\begin{align*}
\int_{0}^{\delta_0} \log (\psi^2+y^2)d\mu(\psi)=\log(y^2+\delta_0^2)\Phi_\nu(\delta_0)-2\int_0^{\delta_0} \frac{\psi \Phi_\nu(\psi)}{\psi^2+y^2}d\psi,
\end{align*}
where $\Phi_\nu(\zeta) = \nu([0,\zeta]) = \mu([-\zeta,0])$.
Therefore, we find
\begin{align*}
f_\mu(iy) = \int_{-\delta_0}^{\delta_0} \log|\zeta -iy| d\mu(\zeta) = \frac{1}{2} \int_{-\delta_0}^{\delta_0} \log(\zeta^2+y^2) d\mu(\zeta) = \log(y^2+\delta_0^2)\Phi(\delta_0) - \int_{0}^{\delta_0} \frac{\zeta \Phi(\zeta)}{\zeta^2+y^2}d\zeta,
\end{align*}
where $\Phi(\zeta) := \Phi_\mu(\zeta) + \Phi_{\nu}(\zeta) = \mu([-\zeta,\zeta])$.

By assumption $\delta_0 \neq 0$, so the term $\log(y^2+\delta_0^2)\Phi(\delta_0)$ is analytic for all $y \in \mathbb{R}$, therefore it remains to consider the term
\begin{equation*}h(y):=\int_{0}^{\delta_0} \frac{\zeta \Phi(\zeta)}{\zeta^2+y^2}d\zeta.\end{equation*}
There is a non-negative integer $m$ satisfying $2m < \kappa \leq 2m+2$. We claim that the singular part $h_{{\rm sing}}:=h-h_{{\rm reg}}$ can be expressed as
$$h_{{\rm sing}}(y)=\int_0^{\delta_0} \frac{\zeta \Phi(\zeta)}{\zeta^2+y^2}\left(\frac{iy}{\zeta}\right)^{2m+2} d\zeta.$$
To see this, consider the difference:
$$h_{{\rm reg}}(y):=\int_0^{\delta_0}\frac{\zeta \Phi(\zeta)}{\zeta^2+y^2}-\frac{\zeta \Phi(\zeta)}{\zeta^2+y^2}\left(\frac{iy}{\zeta}\right)^{2m+2}d\zeta= \sum_{j=0}^m -y^{2j}\int_0^{\delta_0}\frac{\Phi(\zeta)}{\zeta^{2j+1}}d\zeta. $$
By the choice of $m$, we can choose $\epsilon > 0$ sufficiently small that $2m+1-\kappa+\epsilon < 1$.  Therefore, each integral in the sum satisfies
$$\int_0^{\delta_0}\frac{\Phi(\zeta)}{\zeta^{2j+1}}d\zeta<\int_0^{\delta_0}\frac{C\zeta^{-\epsilon+\kappa}}{\zeta^{2j+1}}d\zeta=\int_0^{\delta_0}\frac{C}{\zeta^{2j+1-\kappa+\epsilon}}d\zeta<\infty,$$
where the first inequality uses (\ref{smallmeasure}). We conclude that $h_{{\rm reg}}$ is a polynomial in $y$. Hence the proof reduces to studying $h_{{\rm sing}}$. The following estimate is due to Inequalities (\ref{smallmeasure}).
\begin{equation}\label{eqa1}C \int_0^{\delta_0} \frac{\zeta^{\epsilon+\kappa+1}}{\zeta^2+y^2}\left(\frac{y}{\zeta}\right)^{2m+2}d\zeta<|h_{{\rm sing}}(y)|
<C\int_0^{\delta_0} \frac{\zeta^{\kappa-\epsilon+1}}{\zeta^2+y^2}\left(\frac{y}{\zeta}\right)^{2m+2} d\zeta.\end{equation}
By the change of variable $\zeta=|y|\exp(\tau)$, Inequalities (\ref{eqa1}) becomes
\begin{equation}\label{impliesdiff}
	C_1\cdot |y|^{\kappa+\epsilon}<|h_{{\rm sing}}(y)|<C_2 \cdot |y|^{\kappa-\epsilon},\end{equation}
where $$C_1:=C\int_{-\infty}^{0} \frac{\exp(\tau(\kappa+\epsilon-2m))}{\exp(2\tau)+1}d\tau \ \ \ \ \mbox{and} \ \ \ \  C_2:=C\int_{-\infty}^{\infty} \frac{\exp(\tau(\kappa-\epsilon-2m))}{\exp(2\tau)+1}d\tau$$ are finite constants independent of all $y$ with $|y|<\delta_0$. This proves the assertion.

\end{proof}


\appendix

\section{Renormalization for Lee--Yang zeros on the Cayley Tree}
\label{SEC:APPENDIX_DERIVATION}

Even though Proposition \ref{PROP:RENORM} is proved in the papers by
M\"uller-Hartmann \cite{MULLERHARTMAN} and Barata--Marchetti
\cite{BARATAMARCHETTI}, we include a proof here for completeness.  We focus on
branching number $k=2$, leaving the straightforward generalization for
arbitrary $k$ to the reader.

Let us start with the rooted tree $\Gamma_n$.
For each $n$, let $r$ denote the root vertex of $\Gamma_n$ and consider the conditional partition functions
\begin{align}\label{EQN:CONDITIONAL_PF}
Z_{n}^+ \equiv Z_{n}^+(z,t) := \sum_{\substack{\sigma \, \, \mbox{\small such that} \\ \sigma(r) = +1}} W_n(\sigma) \qquad \mbox{and} \qquad  Z_{n}^- \equiv Z_{n}^-(z,t) := \sum_{\substack{\sigma \, \, \mbox{\small such that}  \\ \sigma(r) = -1}} W_n(\sigma),
\end{align}
where $W_n(\sigma) = e^{-\frac{H_n(\sigma)}{T}}$ and $H_n(\sigma)$ is the Hamiltonian given in Equation (\ref{EQN:HAMILTONIAN}).
By definition, the full partition function is $Z_n = Z_{n}^+ + Z_{n}^-$. 

We will first produce a recursion on $Z_{n}^+$ and $Z_{n}^+$, from which the statement of Proposition \ref{PROP:RENORM} will quickly follow.
As initial condition of the recursion, note that $\Gamma_0$ consists of just the root vertex~$r$, so that $Z_0^+ = z^{-\frac{1}{2}}$ and $Z_0^- = z^{\frac{1}{2}}$.

Now consider arbitrary $n \geq 0$.   The tree $\Gamma_{n+1}$  is formed by
taking two copies of $\Gamma_n$  and attaching each of their root vertices by an
edge to the root vertex of $\Gamma_{n+1}$.  Let us call the two sub-trees the
``left tree'' $\Gamma_n^L$ and the ``right tree'' $\Gamma_n^R$, and let us
denote their root vertices by $r^L$ and $r^R$, respectively.

To express $Z_{n+1}^+$ in terms of $Z_{n}^+$ and $Z_{n}^+$, we consider all spin configurations $\sigma: V_{n+1}\rightarrow \{\pm 1\}$ with $\sigma(v) = +1$.  
Given such a spin configuration, let us denote the restriction of $\sigma$ to the vertex set of $\Gamma_n^L$ by $\sigma^L$ and the restriction
of $\sigma$ to the vertex set of $\Gamma_n^R$ by $\sigma^R$.   We then have
\begin{align*}
H_{n+1}(\sigma) = -h -J(\sigma(r^L) + \sigma(r^R)) + H_n(\sigma^L) + H_n(\sigma^R),
\end{align*}
whose Gibbs Weight is
\begin{align*}
W_{n+1}(\sigma) = z^{-\frac{1}{2}} t^{-\frac{1}{2}(\sigma(r^L) + \sigma(r^R))} \, W_n(\sigma^L) \,  W_n(\sigma^R).
\end{align*}
Summing over all configurations $\sigma: V_{n+1} \rightarrow \{\pm 1\}$ with $\sigma(r) = +1$ we find
\begin{align*}
Z_{n+1}^+ = z^{-\frac{1}{2}} \Big(t^{-1} (Z_{n}^+)^2 + 2 \, Z_{n}^+ Z_{n}^- + t \, (Z_{n}^-)^2\Big) = z^{-\frac{1}{2}}\Big(t^{-\frac{1}{2}} Z_{n}^+ + t^{\frac{1}{2}} Z_{n}^-\Big)^2,
\end{align*}
where the first summand (of the second expression) corresponds to those $\sigma$ with $\sigma(r^L) = \sigma(r^R) = +1$, the second to those with $\sigma(r^L) = -\sigma(r^R)$, and
the third to those with $\sigma(r^L) = \sigma(r^R) = -1$.  Similarly,
\begin{align*}
Z_{n+1}^- = z^{\frac{1}{2}}\Big(t^{\frac{1}{2}} Z_{n}^+ + t^{-\frac{1}{2}} Z_{n}^-\Big)^2.
\end{align*}
If we let $w_n = Z_{n}^- / Z_{n}^+$ then we have the recursion
\begin{align*}
w_{n+1} = \frac{z^{\frac{1}{2}}\Big(t^{\frac{1}{2}} Z_{n}^+ + t^{-\frac{1}{2}} Z_{n}^-\Big)^2}{z^{-\frac{1}{2}}\Big(t^{-\frac{1}{2}} Z_{n}^+ + t^{\frac{1}{2}} Z_{n}^-\Big)^2} = z \left(\frac{t+w_n}{1+tw_n}\right)^2 = B_{z,t}(w_n).
\end{align*}
Using that $Z_0^+ = z^{-\frac{1}{2}}$ and $Z_0^- = z^{\frac{1}{2}}$, we find $w_0 = z$.  Meanwhile, $Z_n(z,t) = Z_{n}^+ + Z_{n}^- = 0$ if and only if $w_n = -1$, so the Lee--Yang zeros
for the rooted tree $\Gamma_n$ are solutions to $w_n = B_{z,t}^n(z) = -1.$

\vspace{0.1in}

We now discuss how to adapt the formula for the full (unrooted) Cayley Tree
$\widehat{\Gamma}_n$.  Let $c$ denote the ``central vertex'' of
$\widehat{\Gamma_n}$, i.e.\ the vertex at distance $n$ from the leaves of
$\widehat{\Gamma}_n$ and let $\widehat Z_{n}^+$ and $\widehat Z_{n}^-$ denote
the conditional partition functions, conditioned on $\sigma(c) = +1$ and
$\sigma(c) = -1$, respectively (analogous to Equation
(\ref{EQN:CONDITIONAL_PF})).  Since we consider branching number $k=2$,
$\widehat{\Gamma}_n$ is obtained by taking three copies of the rooted tree
$\Gamma_{n-1}$ and attaching each of their root vertices by an edge to the
central vertex $c$.  In essentially the same way as for the rooted tree above,
one finds that
\begin{align*}
\widehat Z_{n}^+ = z^{-\frac{1}{2}}\Big(t^{-\frac{1}{2}} Z_{n-1}^+ + t^{\frac{1}{2}} Z_{n-1}^-\Big)^3 \qquad \mbox{and} \qquad \widehat Z_{n}^- = z^{\frac{1}{2}}\Big(t^{\frac{1}{2}} Z_{n-1}^+ + t^{-\frac{1}{2}} Z_{n-1}^-\Big)^3.
\end{align*}
Therefore, $\widehat Z_n = \widehat Z_{n}^+ + \widehat Z_{n}^- = 0$ if and only if
\begin{align*}
\frac{\widehat Z_{n}^+}{\widehat Z_{n}^-} = B_{z,t,3}(w_{n-1}) = B_{z,t,3} \circ B_{z,t,2}^{n-1}(z) = -1.
\end{align*}

\qed (Proposition \ref{PROP:RENORM}.)


\vspace{0.2in}

\end{document}